\DeclarePairedDelimiter\ceil{\lceil}{\rceil}
\DeclarePairedDelimiter\floor{\lfloor}{\rfloor}
\newcommand{\field}[1]{\mathbb{#1}}
\newcommand{\R}{\field{R}}
\newcommand{\E}{\field{E}}
\newcommand{\PP}{\mathcal{P}}
\def\EE{\mathbb E}
\theoremstyle{Conjecture} \theoremstyle{example}
\theoremstyle{remark} \theoremstyle{lemma}
\theoremstyle{definition} \theoremstyle{corol}
\theoremstyle{proposition} \theoremstyle{condition}
\newtheorem{theorem}{Theorem}
\newtheorem{remark}{Remark}
\newtheorem{proposition}{Proposition}
\newtheorem{asu}{Assumption}
\newcounter{subassumption}[asu]
\theoremstyle{definition}
\renewcommand{\p@subassumption}{\theasu}%
\def\var{\mathrm{Var}}
\newtheoremstyle{break}
  {\topsep}{\topsep}%
  {\itshape}{}%
  {\bfseries}{}%
  {\newline}{}%
\theoremstyle{break}
\def\EE{\mathbb E}
\def\PP{\mathbb P}
\def\log{\mathrm {log}}
\definecolor{twzcolor}{RGB}{0,0,0}%{120, 94, 240} %
\definecolor{Blue}{rgb}{0,0,0}%{0, 0.0, 1}
\definecolor{Red}{rgb}{0,0,0}%{1, 0.0, 0}
\definecolor{Green}{rgb}{0,0,0}%{0, 0.5, 0}
\newcommand{\tightbf}[1]{\textbf{\smash{\scalebox{0.96}[1]{#1}}}}
\title{Debiased Kernel Estimation of Spot Volatility in the Presence of Infinite Variation Jumps\footnote{All authors contributed equally to this work.}}
\author{B. Cooper Boniece\thanks{{Department of Mathematics, Drexel University, PA 19104, USA ({\tt cooper.boniece@drexel.edu})  Research supported in part by the NSF grant DMS-2413558.}}\and Jos\'e E. Figueroa-L\'opez\thanks{{Department of Statistics and Data Science, Washington University in St. Louis, St. Louis, MO 63130, USA ({\tt figueroa-lopez@wustl.edu}). Research supported in part by the NSF grant: DMS-2413557.}} \and Tianwei Zhou\thanks{{Department of Statistics and Data Science, Washington University in St. Louis, St. Louis, MO 63130, USA ({\tt tianweizhou@wustl.edu}).}}}
\date{May., 2026}
\begin{document}
\maketitle
\begin{abstract}
Volatility estimation is a central problem in financial econometrics, but becomes particularly challenging when jump activity is high, a phenomenon observed empirically in highly traded financial securities. In this paper, we revisit the problem of spot volatility estimation for an It\^o semimartingale with jumps of unbounded variation. We construct truncated kernel-based estimators and debiased variants that extend rate-optimal spot volatility estimation to a wider range of jump activity indices, from the previously available bound $Y<4/3$ to $Y<20/11$. Rate-suboptimal CLTs are also established for $Y>20/11$. Compared with earlier work, our approach achieves smaller asymptotic variances through the use of more general kernels and an optimal choice for the bandwidth convergence rate, and also has broader applicability under more flexible model assumptions. A comprehensive simulation study confirms that our procedures outperform competing methods in finite samples.
\end{abstract}
\section{Introduction}

 Spot volatility estimation is a central problem in financial econometrics, underpinning numerous critical applications ranging from short-horizon risk management to derivatives pricing and hedging. Formally, the core task is to estimate the coefficient $\sigma_\tau$ of the Brownian component of an It\^o semimartingale $X$ at a fixed point of time $\tau$ given high-frequency observations of $X$. For recent developments, we refer to \cite{LIU2018,figueroa2020optimal,Figueroa-Lopez_Wu_2024,mancino2022asymptotic}, as well as the monographs \cite{jacod2012discretization,ait2014high} for in-depth treatments.

Jumps are a salient feature of price dynamics, separate from fluctuations captured by diffusive movements in $X$, and must be carefully accounted for when estimating volatility and related functionals of $X$. However, when jump activity is high, standard estimation methods face inherent limitations as the difficulty of disentangling the continuous and jump components of the process increases. 
%estimation difficulty increases with the degree of jump activity of the underlying process. 
Three qualitatively distinct regimes can be distinguished: jumps of finite activity,  infinite activity jumps of bounded variation, and infinite activity jumps of unbounded variation, the latter posing the greatest challenge, and the setting of the present work.

A fundamental approach to handling jumps is through truncation, introduced by \cite{Mancini2004}.  Establishing jump-robustness for truncation-based methods classically amounts to two steps: suppose $\hat\theta_n(\Delta_1^nX,\dots,\Delta^n_n X)$ is an estimator of some functional $\theta$ of the continuous component $X^c$ of $X$, where as usual, $\Delta_i^nU:=U_{t_{i}}-U_{t_{i-1}}$ denotes the $i^{th}$ increment of a generic process $U$ sampled at times $0<t_0<\dots<t_n=T$ during a finite time period $[0,T]$.  In the first step, one shows that the (infeasible) estimator $\hat\theta^{\text{Cont}}_n:=\hat{\theta}_n(\Delta_1^nX^c,\dots,\Delta_n^n X^c)$ is consistent and satisfies a central limit theorem (CLT) with, say, rate  $r_n\to{}\infty$:
\begin{equation}\label{M1aTCLT}
   r_n\left(\hat{\theta}_n(\Delta_1^nX^c,\dots,\Delta_n^n X^c)-\theta\right)\stackrel{\mathcal{D}}{\longrightarrow}\mathcal{N}(0,V_\theta).
\end{equation}
Second, it is shown that, by choosing an appropriate threshold level $v_n\searrow{}0$, the truncated or thresholded version $\hat\theta^{Trunc}_n:=\hat{\theta}_n(\Delta_1^nX\mathbf{1}_{\{|\Delta_1^n X|\leq  v_n\}},\dots,\Delta_n X^n\mathbf{1}_{\{|\Delta_n^n X|\leq  v_n\}})$ is close enough to $\hat\theta^{Cont}$ so that 
\begin{equation}\label{M1bTCLT}
    { r_n}\left(\hat\theta_n^{\text{Trunc}}-\hat\theta_n^{\text{Cont}}\right)=o_P(1).
\end{equation}
An emblematic example of this approach arises in the estimation of the integrated variance or volatility $IV:=\int_0^T \sigma^2_sds$, where one passes from the quantity %
\begin{equation}
    \widehat{IV}^{\text{Cont}}:=\sum_{i=1}^n(\Delta_{i}^nX^{c})^2,
\end{equation}
to its truncated version
\begin{equation}\label{TRVEst}
 \widehat{IV}^{\text{Trunc}}:=\sum_{i=1}^n(\Delta_{i}^nX)^2{\bf 1}_{\{|\Delta_i^nX|\leq v_n\}}.
\end{equation}
By (essentially) employing the steps \eqref{M1aTCLT}-\eqref{M1bTCLT}, it has been established that in the bounded variation case  \cite{mancini2009non,cont2011nonparametric}, an efficient CLT (see, e.g., \cite{jacod:reiss:2014}) is possible:
\[
    \sqrt{n}\left(\widehat{IV}^{\text{Trunc}}-IV\right)\stackrel{\mathcal{D}}{\longrightarrow}\mathcal{N}\left(0,2\int_0^T\sigma_s^4ds\right).
\]
However, with unbounded-variation jumps, this approach alone falls short.  It was proved by \cite{cont2011nonparametric,mancini:2011} that $\sqrt{n}\big(\widehat{IV}^{\text{Trunc}}-IV\big)\stackrel{P}{\longrightarrow}\infty$ in the presence of an infinite variation symmetric stable L\'evy process, precisely because the bias of $\widehat{IV}^{\text{Trunc}}$ -- and thus the difference $(\widehat{IV}^{\text{Trunc}} - \widehat{IV}^{\text{Cont}})$ -- vanishes too slowly.

Two methodological advances have since moved the field forward. The first is due to Jacod and Todorov \cite{Jacod2014,jacod:todorov:2016}, who proposed an approach for integrated variance estimation based on the empirical characteristic function of the rescaled increments $\Delta_i^nX/\sqrt{\Delta_n}$, where $\Delta_n$ is the time span of the increment. After applying a suitable logarithmic transformation and a debiasing procedure, this approach admits a CLT with  optimal rate, even in the case of unbounded variation jumps.  However, with this method, full (rate and variance) efficiency is attainable only under specific conditions that require either the jump activity index $Y$ to satisfy $Y<3/2$, or under additional restrictive jump symmetry assumptions.  The second major advance was introduced more recently by    \cite{BONIECE2024}, who showed that a debiasing method similar to \cite{Jacod2014} applied to the realized truncated variations \eqref{TRVEst} leads to a fully efficient estimator for any $Y\leq 8/5$ without any symmetry conditions.

A natural next question is how these insights extend to the problem of spot volatility estimation. A standard and widely used approach to constructing spot volatility estimators is to apply kernel smoothing to an estimator of integrated variance, %
\[
    \hat\sigma^2_\tau=\int K_b(t-\tau)d \widehat{IV}_t,
\]
where $\widehat{IV}_t$ denotes an estimator of $IV_t=\int_0^t \sigma_s^2ds$ and $K_b(u):=K(u/b)/b$ for a chosen kernel $K$ and a bandwidth $b>0$.
For instance, using the realized quadratic variation $\widehat{IV}_t=\sum_{i=1}^{[nt]}(\Delta_i ^nX)^2$, this reduces to
\begin{equation}\label{KrnlSpotVol0}
    \hat\sigma^2_\tau=\sum_{i=1}^n K_b(t_{i-1}-\tau)(\Delta_i^n X)^2,
\end{equation}
as proposed in \cite{kristensen2010nonparametric,fan2008spot}.   

Spot volatility estimators differ markedly from their integrated volatility counterparts, with their local nature altering the efficiency picture in subtle but notable ways: on the one hand, optimal convergence rates are of order $n^{1/4}$ (compared with  $n^{1/2}$ in the integrated volatility case); on the other hand,  kernel-based estimators enjoy a degree of built-in robustness against jumps, since ``large'' jumps are relatively rare and are unlikely to fall in a narrow neighborhood of $\tau$ where $K_b$ places most of its mass.
For instance, %
\cite[Theorem 13.3.3]{jacod2012discretization} shows that in the case of uniform kernels (i.e., $K(u)={\bf 1}_{[0,1)}(u)$ or $K(u)={\bf 1}_{(-1,0]}(u)$) 
the estimator \eqref{KrnlSpotVol0} admits a rate-optimal CLT even in the case of unbounded variation jumps, provided $Y<4/3$. (Here and below, we use ``unbounded variation'' to refer to the regime $Y>1$; the borderline case $Y=1$ is excluded throughout because it gives rise to additional logarithmic terms in the relevant expansions). However, high levels of jump activity can still degrade performance, and the rate becomes suboptimal for  $Y\geq 4/3$: 
in general, for any $a\in(0,1/2]$ such that $Y<2/(1+a)$, one can find a bandwidth $ b=b_n$ such that $n^{a/2}(\hat\sigma_\tau^2-\sigma_\tau^2)\stackrel{\mathcal{D}}{\longrightarrow}\mathcal{N}(0,V_\sigma)$, which yields a rate that is slower than $n^{-\frac{2-Y}{2Y}}$. Thus, if, for instance, $Y\leq 3/2$, the rate is slower than $n^{-1/6}$. 
 Surprisingly, following the two-step approach \eqref{M1aTCLT}-\eqref{M1bTCLT}, even when thresholding is applied to \eqref{KrnlSpotVol0}, i.e.,
\begin{equation}\label{KrnlSpotVolThrholded}
    \hat\sigma^2_\tau=\sum_{i=1}^n K_b(t_{i-1}-\tau)(\Delta_i^nX)^2{\bf 1}_{\{|\Delta_i^nX|\leq v_n\}},
\end{equation}
there is no apparent asymptotic benefit: 
the optimal rate of $n^{1/4}$ remains attainable only if $Y<4/3$, and the best possible rate when $Y\geq 4/3$ remains bounded by $n^{\frac{2-Y}{2Y}}$, which is strictly slower.

Given these limitations, one may wonder if estimation performance can be improved by modifying the kernel $K$. This question was addressed in \cite{Figueroa-Lopez_Wu_2024}, which extended the approach in \cite{jacod2012discretization} to two-sided kernels of unbounded support.  Although the convergence rate remains unchanged, they established that the asymptotic variance can be substantially reduced. For example, using the kernel $K(u)=.5e^{-|u|}$ leads to an asymptotic variance that is one-quarter of that obtained by a uniform kernel in the suboptimal rate regime $(Y\geq 4/3)$. Moreover, in the optimal rate regime $(Y< 4/3)$,  the same exponential kernel was shown to be variance-optimal,  highlighting the utility of unbounded kernels in general. However, we note that the results in both \cite{jacod2012discretization} and \cite{Figueroa-Lopez_Wu_2024} are based on the two-step approach \eqref{M1aTCLT}-\eqref{M1bTCLT}, which we depart from in the present work.

 These findings motivate a fundamental question: \textit{can one develop methods that attain better rates -- possibly optimal --  beyond the boundary $Y< 4/3$}? 
A recent step in this direction was taken by \cite{LIU2018}, who developed a localized variant of the characteristic function approach of \cite{Jacod2014}.  
Under the assumption that the  infinite-variation jump component is of the form $\int_0^t\chi_{s}dL_s$, where $L$ is a strictly $Y$-stable symmetric L\'evy process, and restricting to  compactly-supported,  continuously differentiable kernels, they establish that their estimator satisfies
\begin{equation}\label{LiuCLT0}
    \sqrt{\frac{b_n}{\Delta_n}}\frac{\hat\sigma^2_\tau-\sigma^2_\tau}{\sqrt{2\sigma_\tau}}\stackrel{\cal{D}}{\longrightarrow}\mathcal{N}\left(0,\int K^2(u)du\right),
\end{equation}
for any $Y\leq{}3/2$,  but with a suboptimal convergence rate (i.e.,  $\sqrt{\frac{b_n}{\Delta_n}}\ll n^{1/4}$). By applying an additional debiasing step similar to that in \cite{Jacod2014}, they further showed that associated tuning parameters can be chosen so that \eqref{LiuCLT0} holds for any $Y<2$, however, still with a suboptimal rate and under the jump symmetry condition described above. Consequently, the boundary $Y<4/3$ has long marked the limit of efficiency in spot volatility estimation, and the regime $Y\geq 4/3$ has remained largely underexplored without  a type of jump symmetry assumption.

In this work,  we extend the range of $Y$ over which efficient estimation is available by introducing appropriately debiased versions of the estimator \eqref{KrnlSpotVolThrholded}. First, by exploiting high-order expansions of the truncated moments of $\Delta_i X$, rather than relying on the coarser two-step framework  \eqref{M1aTCLT}-\eqref{M1bTCLT}, which does not explicitly account for higher-order bias,  we show the estimator \eqref{KrnlSpotVolThrholded} actually achieves rate-optimality for any $Y\leq{}3/2$ (namely, $n^{1/4}(\hat\sigma_\tau^2-\sigma_\tau^2)\stackrel{\mathcal{D}}{\longrightarrow}\mathcal{N}(0,V_\sigma)$) under standard assumptions and without additional adjustment. Second, by introducing debiasing steps similar to those in \cite{Jacod2014,BONIECE2024}, we construct fully efficient estimators that cover the range  $0<Y<20/11$. This extension is practically relevant, as empirical estimates of $Y$ for liquid stocks often fall in or near the range where the distinction between the $Y<4/3$ and $Y<20/11$ regimes matters. For example, A\"it-Sahalia and Jacod \cite{A_t_Sahalia_2009} applied their jump activity estimator to 5-second and 15-second returns of INTC and MSFT from 2006 and reported estimates ranging from 1.46 to 1.76 for INTC and from 1.60 to 1.69 for MSFT; similar values were reported using 2005 data for the same stocks in \cite{figueroa12}. These estimates are argued to be relatively robust to microstructure noise because the underlying statistics are based on large increments, which are more likely to reflect genuine price movements than bid--ask bounce or other noise effects. More recently, \cite{BONIECE2024} applied the generalized method of moments estimator of \cite{Mies} to additional stocks, reporting estimates of 1.60 for 5-second returns of INTC, 1.37 for 1-minute returns of PFE, 1.82 for 5-second returns of SPY, 1.53 for 1-minute returns of AMGN, and 1.33 for 1-minute returns of MOT; evidence of jump asymmetry was also found for some stocks.

For $Y>20/11$, we establish CLTs with suboptimal convergence rates. In practice, the jump activity index $Y$ would first need to be estimated in order to determine the appropriate regime and tuning choices. Further debiasing steps may extend the rate-optimal range beyond $Y<20/11$, but we leave this direction for future work.

Compared with existing methods, our method  offers rate-optimal estimators over a wider range of $Y$, broader applicability through more flexible model assumptions, reduced asymptotic variances enabled by the use of unbounded kernels, and simpler estimators that are easier to implement. To validate our results, we perform a comprehensive simulation study and find that our methods significantly outperform those of \cite{LIU2018} in terms of both bias and variance over finite samples. 
Our findings further demonstrate that the advantages of kernels with unbounded support extend to the debiasing framework developed here, yielding improvements over the widely used uniform kernels that often appear in applications where spot volatility is estimated as a preliminary step, including jump testing, jump detection, and estimation of volatility functionals of the form
$\int_0^T g(\sigma^2_s)ds$ (see, e.g., \cite{jacod2012discretization}, \cite{LiXiu:2016}, \cite{Alvarez et al}, \cite{jacod2013quarticity}, \cite{PalmesWoerner}\footnote{The latter actually uses a localized Bipower estimator using uniform weights or kernels.})

The paper is organized as follows.  Section \ref{Sect2} introduces the setting, assumptions, and preliminary results. Section \ref{SectMainResults} presents the main results.  Section \ref{SectionSimul} contains simulations assessing the performance of our estimators. The proofs of the main results are deferred to Appendix \ref{MainPrfsa}. Some additional technical proofs are collected in Appendix \ref{ApTechPrfs}.

\section{Setting and background}\label{Sect2}
Consider a 1-dimensional It\^o semimartingale $X = \big(X_t\big)_{t\in \R_{+}}$ defined on a complete filtered probability space $\big(\Omega,\mathscr{F},(\mathscr{F})_{t\in \R_{+}},\PP\big)$ that can be decomposed as
\begin{align}\label{model}
    X_t =\int_0^tb_sds +  \int_0^t\sigma_s dW_s + \int_0^t\chi_{s^-}dJ_s +\int_0^t\int\delta(s,z)\mathfrak{p}(ds,dz),\;\;t\in [0,\infty),
\end{align}
where $W:=(W_t)_{t \in \R_{+}}$ is a Wiener process, $\sigma$ is an adapted c\'adl\'ag process, $J:=(J_t)_{t \in \R_{+}}$ is an independent pure-jump L\'evy process with L\'evy triplet $(b,0,v)$, and $\mathfrak{p}$ is a Poisson random measure on $\R_+\times\R$ with an intensity $\mathfrak{q}(ds,dz) = ds \otimes \lambda(dz)$, such that the $\lambda$ is a $\sigma$-finite measure on $\R$, possibly dependent with $J$. The L\'evy measure $v$ is assumed to admit a density $s:{\R\backslash\{0\}}\rightarrow \R_+$ of the form
\begin{align}
    s(x):=\frac{dv}{dx}:=\big(C_+\mathbf{1}_{(0,\infty)}(x) + C_-\mathbf{1}_{(-\infty,0)}(x)\big)q(x)|x|^{-1-Y}.
\end{align}
{We have the following standing assumptions:}
\begin{asu}\label{BscAs1}\hfill
\begin{enumerate}
    \item  $C_{\pm}>0$ and  $Y \in (0,2)\setminus\{1\}$.
    \item $q:{{\R\backslash\{0\}} \rightarrow \R_+}$ is a bounded Borel-measurable function such that:
    \begin{enumerate}
        \item $q(x) \rightarrow 1$, as $x\rightarrow 0$.
        \item There exist $a_{\pm}$ such that 
    \begin{equation}\label{CndqA}\int_0^1 \big|q(x)-1-a_+ x\big|x^{-Y-1}dx + \int_{-1}^0\big|q(x)-1-a_-x\big||x|^{-Y-1}dx<\infty.
        \end{equation}
    \end{enumerate}
    \item The process $\chi$ is given as
    \begin{equation}\label{CndChia}\chi_t = \chi_0 + \int_0^tb_x^\chi ds + \int_0^t \Sigma_s^\chi dB_s.
    \end{equation}
    \item The processes $W,B$ are standard Brownian motions with correlation $d\langle W,B\rangle_t = \rho_tdt$ for an adapted, locally bounded c\'adl\'ag process $\{\rho_t\}_{t\geq 0}$. $W,B$ are independent of $(J,\mathfrak{p})$; 
    $J$ is independent of $\sigma$.
    \item The {processes} $\Sigma^\chi$, $b$, $b^\chi$ are c\'adl\'ag adapted, and {$\delta$} is predictable. There {exist} a sequence $\{\tau_n\}_{n\geq 1}$ of stopping times increasing to infinity, nonnegative {$\lambda(dz)$}-integrable function {$H$} and a positive sequence $\{M_n\}_{n\geq 1}$ such that 
    \begin{equation*}
        t\leq \tau_n \Rightarrow \begin{cases}
            |\sigma_t| + |b_t| + |b_t^\chi| + |\Sigma_t^\chi| \leq M_n,\\
            ({|\delta(t,z)|}\wedge 1)^{{r}} \leq M_n {H(z)},
        \end{cases}
    \end{equation*}
    for some ${r}\in[0,1\wedge Y)$.
    \item The spot variance process $c_t:=\sigma_t^2$ is assumed to follow the dynamics:
\begin{align}\label{CndCa}
    c_t = c_0 + \int_0^t \tilde \mu_sds +\int_0^t\tilde \sigma_s dB_s,
\end{align}
where $B:=(B_t)_{t\geq 0}$ is as in point 3 above. Here $\{\tilde \mu\}_{t\geq 0}$ and $(\tilde\sigma_t)_{t\geq 0}$ are  adapted c\'adl\'ag locally bounded processes.
\end{enumerate}
\end{asu}

\begin{remark}\label{CommntCond}
The conditions above are essentially the same as those in \cite{BONIECE2024}. \cite{LIU2018} considered a similar but more restrictive component of unbounded variation by taking $J$ to be a symmetric strictly stable L\'evy process, while here we considered a more general type of tempered stable L\'evy process. The most technical condition is that in \eqref{CndqA}, which is used to apply a density transformation or change of probability measure under which the process $J$ becomes stable. We refer to \cite{BONIECE2024} for further details.
\end{remark}

We assume we have at our disposal $n$ evenly-spaced discrete observations $X_{t_i}$ during a fixed time interval $[0,T]$. For simplicity, we further assume $T=1$ and, thus, $t_i=i/n$. Denote $\Delta_n = 1/n$, $\mathcal{F}_{i}^n:=\mathcal{F}_{t_i}$, and $\Delta^n_iA = A_{i\Delta_n} - A_{(i-1)\Delta_n}$ for any process $A$, where we often omit the superscript $n$. Our estimation target is {$c_\tau=\sigma_\tau^2$} for a fixed time $\tau\in(0,T)$. 
We consider the spot volatility estimator
\begin{equation}\label{MEDHF}
\hat c_{\tau,n}(m_n, v_n) := \frac{\sum_{i=1}^{n}K_{m_n\Delta_n}(t_{i-1}-\tau)(\Delta_i^nX)^2\mathbf{1}_{\{|\Delta_i^nX|\leq  v_n\}}}{\Delta_n{\sum_{j=1}^nK_{m_n\Delta_n}(t_{j-1}-\tau)}},
\end{equation}
where $K$ is a kernel function, $m_n$ controls the  bandwidth {$b_n:=m_n\Delta_n$ such that} $m_n \rightarrow \infty$ and $m_n\Delta_n\rightarrow 0$, and  $K_b(x):=K(x/b)/b$. We assume the following conditions on $K$, which are standard:
\begin{asu}\label{asu_kernel}
   The kernel function $K:\R\rightarrow [0,\infty)$ is bounded and Lipschitz and piecewise $C^1$ on $(-\infty, \infty)$ such that $\int K(x)dx = 1$, $\int |K(x)x|dx<\infty$, $\int |K'(x)|dx<\infty$, and $K(y)y^2\rightarrow 0$ as $|y|\rightarrow \infty$.
\end{asu}
{The estimator \eqref{MEDHF} without the standardizing denominator, the truncation, or any debiasing was studied in \cite{fan2008spot} and  \cite{kristensen2010nonparametric} (see also additional references in the introduction). The denominator, motivated by the standard Nadarayan-Watson nonparametric regression estimator, and noted in passing in \cite{kristensen2010nonparametric}, serves a dual role: it provides edge correction in finite samples and simplifies several proofs,  notably leading to more relaxed technical assumptions in the optimal bandwidth regime (see Remark \ref{SlyRmk1} below).  In contrast, when the denominator is omitted, our results remain valid, but stricter conditions on the bandwidth $b_n = m_n \Delta_n$ and its relationship with $Y$ are required; however, in the suboptimal rate regime (see Remark \ref{SlyRmk1}, case (ii), below), the denominator can be dispensed of entirely without any loss of generality.  The truncated version in \eqref{MEDHF} was studied in \cite{figueroa2020optimal} and \cite{Figueroa-Lopez_Wu_2024}, also without debiasing. Finally, Assumption \ref{asu_kernel} permits kernels of unbounded support, which, as discussed in the introduction, enables estimators that are more variance efficient in both the optimal and suboptimal convergence regimes.}

\medskip
\noindent
\textbf{Notation:} Throughout, {$A_n\lesssim B_n$ and $a_n\ll b_n$ mean $A_n = O_P(B_n)$ and $a_n=o(b_n)$, respectively.}  For any process $V$, when unambiguous we write $\Delta_iV$ in place of $\Delta_i^nV$. Stable convergence in law is denoted as $\xrightarrow{ st}$ (see \cite{jacod2012discretization}). %We also use the shorthand notation $\EE_{i-1}(\cdot)$ for the conditional expectation $\EE(\cdot|\mathcal F_{i-1})$. 

\section{Main results}\label{SectMainResults}
The following moment expansions play key roles in our analysis. The proofs are based on arguments used for analogous results in \cite{BONIECE2024} (see {Lemmas} 2, 3, 4 and 5 therein) and are given in Appendix \ref{PrfProp1a} for completeness.
\begin{proposition}\label{prop}
Suppose that $Y \in (0,2)\setminus\{1\}$. Let $\Delta_n^{\frac{1}{2}-s}\ll  v_n \ll \Delta_n^{\frac{1}{4-Y}}$ for a fixed $s\in(0,1/2)$. Then, the following statements hold:
\begin{enumerate}
    \item For any integer $p\geq 1$, we have
    \begin{equation}\label{GNMExp}
    \begin{aligned}
        \EE\big((\Delta_iX)^{2 p}\mathbf{1}_{\{|\Delta_iX|\leq  v_n\}}|\mathcal F_{i-1}\big)&=(2{p}-1)!!\sigma_{t_{i-1}}^{2{p}}\Delta_n^{ p}+C_{{p},i}\Delta_n v_n^{2{p}-Y}\\
        &\quad+o_P(\Delta_n v_n^{2{p}-Y})+o_P(\Delta_n^{p}),
    \end{aligned}
    \end{equation}
    where $C_{p,i} := \frac{(C_++C_-)|\chi_{t_{i-1}}|^Y}{2p-Y}$.
    \item Furthermore, if $p=1$, the following high-order expansion holds: 
        \begin{equation}\label{PartExp0}
\begin{aligned}
        \EE\big((\Delta_iX)^{2{}}\mathbf{1}_{\{|\Delta_iX|\leq  v_n\}}|\mathcal F_{i-1}\big)&=\sigma_{t_{i-1}}^2\Delta_n+C_{1,i}\Delta_n v_n^{2-Y}+D_{1,i} \Delta_n^2 v_n^{-Y}\\
        &\quad+O_P(\Delta_n^3 v_n^{-2-Y}) +o_P(\Delta_n^{5/4}),
\end{aligned}
\end{equation}
where $D_{1,i}=\frac{(C_++C_-)(Y+1)(Y+2)}{2Y}\sigma^2_{t_{i-1}}|\chi_{t_{i-1}}|^Y$. Additionally, {for any $\zeta>1$,} we have
\begin{equation}
\label{PartExp0_difference}
\begin{aligned}
\EE\big((\Delta_iX)^{2{}}\mathbf{1}_{\{v_n<|\Delta_iX|\leq  \zeta v_n\}}|\mathcal F_{i-1}\big)
&=C_{1,i}\Delta_n(\zeta^{2-Y}-1) v_n^{2-Y}+D_{1,i} \Delta_n^2 (\zeta^{-Y}-1)v_n^{-Y}\\
&\quad 
+O_P(\Delta_n^3 v_n^{-2-Y})
+o_P(\Delta_n^{\frac{3}{4}}v_n^{\frac{4-Y}{2}}).
\end{aligned}
\end{equation}
    \item  For any $p>1$ (not necessarily an integer), we have
\begin{align}\label{prop1_non_integer}
        &\EE\big(|\Delta_iX|^{2p}\mathbf{1}_{\{|\Delta_iX|\leq  v_n\}}|\mathcal F_{i-1}\big) = O_P(\sigma_{t_{i-1}}^{2{p}}\Delta_n^{p})+O_P(C_{{p},t_{i-1}}\Delta_n v_n^{2{p}-Y}).
    \end{align}
\end{enumerate}
\end{proposition}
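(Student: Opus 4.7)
\medskip
\noindent\textbf{Proof proposal.} My plan is to adapt the moment-expansion technique of \cite{BONIECE2024} (Lemmas 2--5 therein) to our conditional setting. The first step is \emph{localization}: Assumption \ref{BscAs1}, part 5, permits restricting to the stopping-time event $\{t_i \leq \tau_m\}$ on which $\sigma, \chi, b, b^\chi, \Sigma^\chi$ are bounded, and the large-jump summand $\int \delta(s,z)\mathfrak{p}(ds,dz)$ is removed on an event of probability $1-O(\Delta_n)$. The second step is to \emph{freeze} the coefficients at time $t_{i-1}$: with $\bar X_i := \sigma_{t_{i-1}} \Delta_i W + \chi_{t_{i-1}} \Delta_i J$, the dynamics in \eqref{CndChia}--\eqref{CndCa} yield $\Delta_i X = \bar X_i + r_i$ with $\EE[r_i^2 \mid \FF_{i-1}] = O_P(\Delta_n^2)$, so the contribution of $r_i$ to all quantities of interest can be absorbed into the $o_P$ terms whenever $\Delta_n^{1/2-s} \ll v_n \ll \Delta_n^{1/(4-Y)}$.

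Next, I would apply the density transformation alluded to in Remark \ref{CommntCond}: condition \eqref{CndqA} guarantees an equivalent measure $\widetilde{\PP}$ under which $J$ is strictly $Y$-stable, so $\Delta_i J \deq \Delta_n^{1/Y} S$ for a $Y$-stable variable $S$, and the Radon--Nikodym derivative contributes only higher-order corrections governed by $a_\pm$. Conditional on $\FF_{i-1}$, the law of $\bar X_i$ becomes the convolution of a centered Gaussian of variance $\sigma_{t_{i-1}}^2 \Delta_n$ with the law of $\chi_{t_{i-1}} \Delta_i J$, whose effective L\'evy density near zero is $s_\chi(x)=(C_++C_-)|\chi_{t_{i-1}}|^Y|x|^{-1-Y}$. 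For Part 1, in the regime $v_n \gg \Delta_n^{1/Y}$, the leading contribution to $\EE[(\chi_{t_{i-1}}\Delta_i J)^{2p}\mathbf{1}_{\{|\chi_{t_{i-1}}\Delta_i J|\leq v_n\}}\mid\FF_{i-1}]$ equals $\Delta_n\int_{-v_n}^{v_n}x^{2p}s_\chi(x)\,dx = C_{p,i}\Delta_n v_n^{2p-Y}$; combining this with the Gaussian moment $(2p-1)!!\,\sigma_{t_{i-1}}^{2p}\Delta_n^p$ (cross and higher-order terms being smaller under the stated balance) yields \eqref{GNMExp}. Part 3 follows from the same expansion but retaining only $O_P$ bounds, since non-integer $p$ prevents identifying the leading constants cleanly.

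The delicate step is Part 2, where I must isolate the next-order correction $D_{1,i}\Delta_n^2 v_n^{-Y}$. This term arises from the Gaussian smoothing of the truncation boundary: expanding the Gaussian--stable convolution representation of the density of $\bar X_i$ to second order in $\sigma_{t_{i-1}}^2\Delta_n$, one picks up a correction involving the behavior of $y^2 s_\chi(y)$ near $y=\pm v_n$ and its derivative; a direct computation produces the constant $(C_++C_-)(Y+1)(Y+2)|\chi_{t_{i-1}}|^Y/(2Y)$, and hence $D_{1,i}\Delta_n^2 v_n^{-Y}$. The next order in the same expansion yields the $O_P(\Delta_n^3v_n^{-2-Y})$ remainder, while coefficient-unfreezing and density-change errors are absorbed into $o_P(\Delta_n^{5/4})$. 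The main obstacle is tracking all error contributions precisely enough to isolate $\Delta_n^2 v_n^{-Y}$ from the noise --- a bookkeeping exercise analogous to Lemmas 3--4 of \cite{BONIECE2024}.

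Finally, the difference identity \eqref{PartExp0_difference} needs extra care: direct subtraction of \eqref{PartExp0} at $v_n$ and $\zeta v_n$ would only yield an $o_P(\Delta_n^{5/4})$ remainder, whereas the claimed $o_P(\Delta_n^{3/4}v_n^{(4-Y)/2})$ is strictly sharper (since $v_n \ll \Delta_n^{1/(4-Y)}$ implies $\Delta_n^{3/4}v_n^{(4-Y)/2} \ll \Delta_n^{5/4}$). The sharper bound is obtained by revisiting the proof of \eqref{PartExp0} on the thin annulus $\{v_n<|\Delta_i X|\leq\zeta v_n\}$, whose conditional probability is $O_P(\Delta_n v_n^{-Y})$; several error contributions pick up a multiplicative factor proportional to this probability (via Cauchy--Schwarz), which sharpens the overall remainder to the stated order.
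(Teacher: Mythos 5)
Your high-level strategy---localize, freeze coefficients, change measure to make $J$ strictly stable, compute moments for the resulting Brownian-plus-stable model, and absorb the rest into error terms---is essentially the paper's route via \cite{BONIECE2024} and \cite{BONIECE2024supplement}. However, there are concrete gaps at the steps that are actually nontrivial.

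First, your treatment of Part 3 (non-integer $p$) is a genuine gap, not a bookkeeping matter. You write that it ``follows from the same expansion but retaining only $O_P$ bounds, since non-integer $p$ prevents identifying the leading constants cleanly.'' But the obstacle is structural: the machinery behind Parts 1--2 (and behind Lemmas 4--5 of \cite{BONIECE2024}) uses binomial expansion of $(x+v)^{2p}$ to control both the passage $\Delta_iX \leadsto \bar X_i$ and the passage from the process with tempered-stable jumps to the full $X$. That expansion is unavailable when $2p$ is not an even integer. The paper has to replace it with the elementary Taylor-type bound $\big||a|^q-|b|^q\big|\le q\big||a|-|b|\big|(|a|^{q-1}+|b|^{q-1})$, plus a fresh H\"older/Markov analysis of the resulting cross terms, and also has to prove a new analog of Lemma 5 of \cite{BONIECE2024} (Eq.~\eqref{finite_jump_difference}), explicitly noted \emph{not} to follow from the integer-$p$ version. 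None of this is ``the same expansion.''

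Second, your handling of the discontinuous residual is too coarse. You claim the large-jump summand $\int\delta(s,z)\mathfrak p(ds,dz)$ ``is removed on an event of probability $1-O(\Delta_n)$.'' This works for a compound Poisson process, but the intensity $\lambda$ is only assumed $\sigma$-finite, so $\int\delta\,\mathfrak p$ can have infinite activity (only $r<1\wedge Y$ is assumed, so jumps of $\delta$ are of bounded variation but not finite in number). The paper instead decomposes $X = X' + (X - X')$ with $X' $ carrying the Brownian and tempered-stable parts and $X-X'$ the bounded-variation jumps, and bounds the latter's contribution via moment inequalities of the form $\EE_{i-1}[|\Delta_iV/v_n\wedge 1|^l]\lesssim \Delta_n v_n^{-r_0}$ (Corollary 2.1.9 of \cite{jacod2012discretization}), together with a careful threshold-splitting argument with a free exponent $\alpha$. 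Your sketch would also need to account for the compound-Poisson-plus-$\breve J$ residual $J^0=J-J^\infty$ from the measure change, which you do not mention.

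Third, your explanation of the sharper remainder in \eqref{PartExp0_difference} does not hold up as stated. The remainder $o_P(\Delta_n^{5/4})$ in \eqref{PartExp0} is already a bound on a conditional expectation, not a random variable inside an expectation, so one cannot ``pick up a factor proportional to the annulus probability via Cauchy--Schwarz'' after the fact. Cauchy--Schwarz on $\EE[A\mathbf 1_B]$ yields a factor $\PP(B)^{1/2}$, not $\PP(B)$, and in any case the required sharpening comes from re-running the expansion with the annulus indicator \emph{inside} the conditional expectation from the start (as the paper does via the analog of Lemma~3 of \cite{BONIECE2024}, adjusting the constraints on the smoothing parameter $\delta$), not from post hoc conditioning. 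Your instinct---``revisit the proof on the thin annulus''---is right; the mechanism you give for why this works is not.

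The identification of the $D_{1,i}\Delta_n^2v_n^{-Y}$ correction as a second-order effect of the Gaussian smoothing of the truncation boundary is a reasonable heuristic and is consistent with Proposition~2 of \cite{BONIECE2024supplement}, but you would still need to verify that the coefficient-unfreezing and measure-change errors stay below $o_P(\Delta_n^{5/4})$; the paper does this by tracking the constraints \eqref{b.12}--\eqref{b.13} on the auxiliary threshold $\delta$, which you have not attempted.
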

To analyze the asymptotic {behavior} of the estimator $\hat c_n(m_n, v_n)$ {defined in Eq.~\eqref{MEDHF}},  we consider the following decomposition:
\begin{align}\label{kernel_spot_estimator}
    \hat c_n(m_n, v_n)
    &=\left(\hat c_n(m_n, v_n) - \frac{\Delta_n\sum_{i=1}^{n}K_{m_n\Delta_n}(t_{i-1}-\tau)\sigma_{t_{i-1}}^2}{\Delta_n{\sum_{j=1}^nK_{m_n\Delta_n}(t_{j-1}-\tau)}}\right) \nonumber\\
    &\quad+ \left(\frac{\Delta_n\sum_{i=1}^{n}K_{m_n\Delta_n}(t_{i-1}-\tau)\sigma_{t_{i-1}}^2}{\Delta_n{\sum_{j=1}^nK_{m_n\Delta_n}(t_{j-1}-\tau)}}\right)\nonumber
    \\ 
    &=: \hat c_{n,1}(m_n, v_n) + \hat c_{n,2}(m_n).
\end{align}
Our next result is central to our analysis and underpins our main debiasing procedure. It establishes the joint asymptotic behavior of the terms in \eqref{kernel_spot_estimator}, accounting for appropriate bias terms, and provides the framework for separating qualitatively distinct asymptotic regimes for $m_n$.  In addition, on its own, it yields feasible CLTs for $\hat c_n(m_n,v_n)$ in certain settings that sharpen existing results in the literature (see Remark \ref{SlyRmk1} below). The proof is given in Appendix \ref{MainPrfsa}.
\begin{theorem}\label{clt}
    Suppose that $\Delta_n^{\beta}\ll  v_n \ll \Delta_n^{\beta'}$ with $\frac{1}{4-Y}<\beta'<\beta<\frac{1}{2}$, $\sqrt{\log(n)} \ll m_n\ll \Delta_n^{-4}v_n^{4+2Y}$, and $m_n = O(\Delta_n^{-\frac{1}{2}})$. Let
    \begin{align}
    \tilde Z_n(m_n, v_n) &:= \sqrt{m_n}\Big(\hat c_{n,1}(m_n, v_n) -A(v_n,m_n)\Big),\nonumber\\
    \tilde Z_n'(m_n) &:= \frac{1}{\sqrt{m_n\Delta_n}}\Big(\hat c_{n,2}(m_n) -\sigma^2_\tau\Big),\nonumber
\end{align}
where, using the notation in \eqref{PartExp0},
\begin{align}\label{AnTerm}
   {A(v_n,m_n) := \frac{\sum_{i=1}^nK_{m_n\Delta_n}(t_{i-1}-\tau){C_{1,i}}\Delta_n v_n^{2-Y}}{\Delta_n\sum_{j=1}^nK_{m_n\Delta_n}(t_{j-1}-\tau)} + \frac{\sum_{i=1}^nK_{m_n\Delta_n}(t_{i-1}-\tau){D_{1,i}} \Delta_n^2 v_n^{-Y}}{\Delta_n\sum_{j=1}^nK_{m_n\Delta_n}(t_{j-1}-\tau)}.}
\end{align}
Then, under Assumptions \ref{BscAs1} and \ref{asu_kernel}, as $n\rightarrow \infty$,
\begin{align}\label{clt_spot}
    (\tilde Z_n(m_n, v_n),\tilde Z'_n(m_n))
\xrightarrow{ st}  (Z_1,Z_2),
\end{align}
where $Z_1 \sim N\big(0,2\sigma_\tau^4\int K^2(x)dx\big)$ and $Z_2 \sim N(0, {\tilde\sigma^2_{\tau}}\int L^2(t) dt)$ with $L(t) = \int_t^{\infty}K(u)du\mathbf{1}_{t\geq 0}-\int^t_{-\infty}K(u)du\mathbf{1}_{t<0}$,
{and $Z_1$ and $Z_2$ are independent}.
\end{theorem}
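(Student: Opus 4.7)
The plan is to treat each term of the decomposition \eqref{kernel_spot_estimator} as a martingale array with its own Gaussian limit, and then to obtain joint stable convergence by verifying that the predictable cross-bracket is asymptotically negligible. For $\tilde Z_n(m_n, v_n)$, I would first center each truncated squared increment by its $\mathcal F_{i-1}$-conditional mean; applying the high-order expansion \eqref{PartExp0} of Proposition \ref{prop} yields
\begin{equation*}
(\Delta_i X)^2 \mathbf{1}_{\{|\Delta_i X|\leq v_n\}} = \sigma_{t_{i-1}}^2\Delta_n + C_{1,i}\Delta_n v_n^{2-Y} + D_{1,i}\Delta_n^2 v_n^{-Y} + R_i^n + \xi_i^n,
\end{equation*}
where the three deterministic terms reproduce the numerator of $A(v_n,m_n)$ exactly, the remainder satisfies $R_i^n = O_P(\Delta_n^3 v_n^{-2-Y}) + o_P(\Delta_n^{5/4})$, and $\xi_i^n$ is an $(\mathcal F_i)$-martingale difference. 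The bandwidth constraints $m_n \ll \Delta_n^{-4}v_n^{4+2Y}$ and $m_n = O(\Delta_n^{-1/2})$ are precisely what is required so that the kernel-weighted remainder $\sqrt{m_n}\sum_i K_{m_n\Delta_n}(t_{i-1}-\tau)R_i^n/(\Delta_n\sum_j K_{m_n\Delta_n}(t_{j-1}-\tau))$ is $o_P(1)$. It then remains to apply the martingale stable CLT: using \eqref{GNMExp} with $p=2$ and subtracting the square of the $p=1$ mean, the conditional variance reduces to $\var(\xi_i^n\mid\mathcal F_{i-1}) = 2\sigma_{t_{i-1}}^4\Delta_n^2 + o_P(\Delta_n^2)$, with the truncation-induced correction $C_{2,i}\Delta_n v_n^{4-Y}$ contributing only lower-order terms under $v_n \ll \Delta_n^{1/(4-Y)}$. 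A Riemann-sum approximation centered at $\tau$ and continuity of $c=\sigma^2$ at $\tau$ then give the conditional-variance limit $2\sigma_\tau^4\int K^2(u)\,du$, while Lindeberg negligibility follows from the fourth-moment bound \eqref{prop1_non_integer} with $p=2$.

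For $\tilde Z_n'(m_n)$, I would invoke the dynamics \eqref{CndCa} to write $\sigma_{t_{i-1}}^2 - \sigma_\tau^2 = \int_\tau^{t_{i-1}}\tilde\mu_s\,ds + \int_\tau^{t_{i-1}}\tilde\sigma_s\,dB_s$. The drift contribution is $O_P(m_n\Delta_n)$ before rescaling and is therefore negligible at rate $1/\sqrt{m_n\Delta_n}$. For the martingale part, after replacing the Riemann sum by its integral version (sum-to-integral error absorbed via Assumption \ref{asu_kernel}), a stochastic Fubini exchange converts the numerator into $\int \tilde\sigma_s w_n(s)\,dB_s$, where
\begin{equation*}
w_n(s) = \int K_{m_n\Delta_n}(t-\tau)\bigl(\mathbf{1}_{\{\tau\leq s\leq t\}} - \mathbf{1}_{\{t\leq s\leq \tau\}}\bigr)\,dt = L\!\left(\frac{s-\tau}{m_n\Delta_n}\right),
\end{equation*}
the final identity holding after the change of variable $u=(t-\tau)/(m_n\Delta_n)$. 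The It\^o isometry combined with this same change of variable gives $\var\bigl(\int \tilde\sigma_s w_n(s)\,dB_s\bigr) \approx m_n\Delta_n\,\tilde\sigma_\tau^2 \int L^2(r)\,dr$, so that $\tilde Z_n'(m_n)$ converges stably to the centered Gaussian $Z_2$ via the martingale CLT.

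Finally, for joint stable convergence with independence of $(Z_1,Z_2)$, I would combine the two terms into a single bivariate $(\mathcal F_i)$-martingale array and apply the multivariate martingale stable CLT; the only new ingredient is the vanishing of the predictable cross-bracket. To leading order, $\xi_i^n \approx \sigma_{t_{i-1}}^2[(\Delta_i W)^2 - \Delta_n]$ lies in the second Wiener chaos, whereas the $\tilde Z_n'$ side is driven by first-chaos increments $\tilde\sigma_{t_{i-1}}\Delta_i B$, and since $\EE[((\Delta_i W)^2 - \Delta_n)\Delta_i B \mid \mathcal F_{i-1}]=0$ by Isserlis' formula even when $d\langle W,B\rangle = \rho\,dt$ is nonzero, the cross-bracket converges to $0$ in probability. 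The main difficulties I anticipate are (i) uniform-in-$i$ control of the remainder $R_i^n$ in \eqref{PartExp0} in order to absorb it at the sharp scale $\sqrt{m_n}$ dictated by $m_n\ll\Delta_n^{-4}v_n^{4+2Y}$, and (ii) handling unbounded kernels (Assumption \ref{asu_kernel}) throughout the Lindeberg, Riemann-sum, and Fubini steps, where contributions from indices with $|t_{i-1}-\tau|\gg m_n\Delta_n$ must be tamed via the decay $K(y)y^2\to 0$.
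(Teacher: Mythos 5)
Your overall architecture mirrors the paper's: decompose via Proposition~\ref{prop}, treat the centered truncated squared increments as a martingale array, handle the bias under the stated bandwidth constraints, and deal with $\tilde Z'_n$ by separating the drift and martingale parts of $\sigma^2_{t}-\sigma^2_\tau$ using \eqref{CndCa}. That is all correct in outline. However, there are three specific gaps that prevent this from being a complete proof.

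First, for stable (rather than mere weak) convergence, applying Theorem~2.2.15 of Jacod--Protter requires verifying the orthogonality condition $\sum_i\EE_{i-1}[Y_i\Delta_iM]\to0$ not only for $M=W$ and $M=B$, but for every bounded martingale $M$ orthogonal to $W$ and $B$ (the class $\mathcal N$). This is condition \eqref{initial_technical}, and a substantial portion of the paper's argument (Appendix~\ref{TecnPrf1}) is devoted to it, splitting $x_i^2$ into $\sigma_{t_{i-1}}^2(\Delta_iW)^2$, $(\chi_{t_{i-1}}\Delta_iJ)^2$ and the cross term, and exploiting orthogonality via the identity $(\Delta_iW)^2=2\int_{t_{i-1}}^{t_i}(W_s-W_{t_{i-1}})dW_s+\Delta_n$. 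Your proposal verifies the cross-bracket between the two components of the bivariate array but says nothing about $\mathcal N$; without it one obtains weak convergence of finite-dimensional distributions, not stable convergence, which is essential here since the limit variance $2\sigma_\tau^4\int K^2$ is random.

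Second, your Isserlis-formula argument for vanishing of the cross-bracket fails when $\rho$ is a genuine stochastic process. Conditionally on $\mathcal F_{i-1}$, $(\Delta_iW,\Delta_iB)$ is not jointly Gaussian because the instantaneous correlation $\rho_s$ evolves over $(t_{i-1},t_i]$; the paper instead bounds $\sup_i|\EE_{i-1}[(\Delta_iW)^2\Delta_iB]|\lesssim\Delta_n^{3/2}\sup_i\sqrt{\EE_{i-1}(\rho_{t_i}-\rho_{t_{i-1}})^2}=o_P(\Delta_n^{3/2})$ via It\^o's formula, Cauchy--Schwarz, and the c\`adl\`ag property of $\rho$. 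Moreover, replacing $\xi_i^n$ by $\sigma_{t_{i-1}}^2[(\Delta_iW)^2-\Delta_n]$ ``to leading order'' discards the jump terms in $(\Delta_iX)^2\mathbf{1}_{\{|\Delta_iX|\leq v_n\}}$, whose cross-covariance with $\Delta_iB$ is not automatically negligible; this is precisely what \eqref{cov_summand} establishes, through a case-by-case treatment of $\sigma_{t_{i-1}}\Delta_iW$, $\chi_{t_{i-1}}\Delta_iJ$, the cross terms, and the errors $\varepsilon_i$.

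Third, the Lyapunov/Lindeberg step cannot be closed with $p=2$. To control $\sum_i\EE(Y_i^p|\mathcal F_{i-1})$ the paper parameterizes $p=(2\ell+1)/\ell$ with $\ell$ large, i.e.\ takes $p\downarrow2$ through \emph{non-integer} values, which is exactly why \eqref{prop1_non_integer} is stated for non-integer $p>1$. With $p=2$ (or equivalently fourth moments of the summands) the resulting bound involves $\Delta_n^{-3}v_n^{8-Y}/m_n$, which does not vanish under the stated hypotheses in general; only by exploiting that $\Delta_n^{-1}v_n^{4-Y}\to0$ and raising the other factor to an arbitrarily small power $1/\ell$ does the argument close.
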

\begin{remark}\label{SlyRmk1} 
Taking $m_n\to\infty$ at different rates alters the balance of the errors appearing on the right-hand side of \eqref{clt_spot}.  Three distinct regimes can be distinguished:
(i) $m_n\sqrt{\Delta_n}\to\theta\in(0,\infty)$,
(ii) $m_n\sqrt{\Delta_n}\to 0$,
and (iii) $m_n\sqrt{\Delta_n}\to\infty$, which we remark on below.

\medskip
\noindent \textbf{Case (i): $m_n \sqrt{\Delta_n} \to \theta \in (0,\infty)$.} In this regime, the orders of $\sqrt{m_n}$ and $1/\sqrt{m_n\Delta_n}$ match, in which case Theorem \ref{clt} gives %
    \begin{equation}\label{ITODelta}
        \Delta_n^{-1/4}\left(\hat c_n(m_n, v_n)-A(v_n,m_n)-\sigma_{\tau}^2\right)\xrightarrow{ st} \theta^{-1/2}Z_1+\theta^{1/2} Z_2,
    \end{equation}
or, equivalently, 
    \begin{equation}\label{ITQRTm}
        \sqrt{m_n}\left(\hat c_n(m_n, v_n)-A(v_n,m_n)-\sigma_{\tau}^2\right)\xrightarrow{ st} Z_1+\theta Z_2.
    \end{equation}
Note that the bias term $A(v_n,m_n)$ is $O_P(v_n^{2-Y})$.  Thus, if  $\sqrt{m_n} v_n^{2-Y}\ll 1$, then \eqref{ITODelta} and \eqref{ITQRTm} remain true even with the term $A(v_n,m_n)$ omitted, in which case we obtain a valid feasible CLT for the estimation error $\hat c_n(m_n, v_n)-\sigma_{\tau}^2$ with convergence rate  $\Delta_n^{1/4}$ (which is optimal). Although condition $\sqrt{m_n} v_n^{2-Y}\ll 1$ can be met in principle by taking sufficiently small $v_n$,  Theorem \ref{clt} additionally requires $v_n\gg \Delta_n^{1/2}$, and thus implicitly $\Delta_n^{-1/4}\Delta_n^{(2-Y)/2}\ll 1$, which can happen only if $Y<3/2$. Nevertheless, this result improves significantly on \cite{Figueroa-Lopez_Wu_2024} (and also on {\cite{jacod2012discretization}}), where a rate-optimal CLT is shown to be valid only when $Y<4/3$. Theorem \ref{clt} also yields stronger results than the characteristic function approach taken in \cite{LIU2018}, in which a similar CLT is shown to be valid, under substantially more restrictive model assumptions\footnote{For instance, they assume that the process $J$ in \eqref{model} is a symmetric stable L\'evy process}, for $Y<3/2$ at a convergence rate strictly slower than $\Delta_n^{1/4}$ (due to an assumption that $m_n\sqrt{\Delta_n}\to{}0$). 

Furthermore, under case (i),  the conditions in $v_n$ take the form $\Delta_n^{\beta}\vee \Delta_n^{\frac{7}{4Y+8}}\ll v_n\ll \Delta_n^{\beta'}$ ($\frac{1}{4-Y}<\beta'<\beta<\frac{1}{2}$), which can hold only if $\frac{7}{4Y+8}>\frac{1}{4-Y}$, which is equivalent to $Y<\frac{20}{11}$. Therefore, this rate-optimal case (i) extends the previously available range $Y<4/3$ to $Y<20/11$, and ultimately reinforced our feasible efficient CLT in this range (Theorem \ref{main}).
\medskip

\noindent \textbf{Case (ii): $m_n \sqrt{\Delta_n} \to  0$.}  This regime corresponds to setting $\theta=0$ in \eqref{ITQRTm}. Indeed, \begin{equation}\label{ITQRTmC2}
        \sqrt{m_n}\left(\hat c_n(m_n, v_n)-A(v_n,m_n)-\sigma_{\tau}^2\right)
        =\tilde Z_n(m_n, v_n)+m_n\sqrt{\Delta_n}\tilde Z'_n(m_n),
    \end{equation}
    and, from \eqref{clt_spot}, the first term on the right-hand side above converges to $Z_1$, while the second is $o_P(1)$. In this case, the attained convergence rate,  $\sqrt{m_n}$, is suboptimal, i.e., slower than the  $n^{1/4}$ rate achieved when $\theta\in(0,\infty)$.  However,  this asymptotic regime has the practical advantage that any eventual estimation of the volatility-of-volatility $\tilde\sigma_\tau^2$ is not needed when constructing confidence intervals, since $Z_2$ is absent from the limit. If we also require $\sqrt{m_n} v_n^{2-Y}\ll 1$, then the bias $A(v_n,m_n)$ can be omitted in \eqref{ITQRTmC2} leading to a feasible CLT for $\hat c_n(m_n, v_n)$ centered at $\sigma_{\tau}^2$.  Under this constraint, the condition  $v_n\gg \Delta_n^{\frac{1}{2}}$, and the asymptotic condition of case (ii) together imply that the convergence rate  $m_n^{-1/2}$ must be strictly slower than $\Delta_n^{\frac14 \wedge \frac {2-Y}2}$, yielding a convergence rate that can be taken arbitrarily close to  $n^{1/4}$  when $Y<3/2$ but deteriorates as $Y$ increases to $2$.
    \medskip

\noindent \textbf{Case (iii): $m_n \sqrt{\Delta_n} \to \infty$.}
 Although technically excluded from our hypotheses, our results can also cover the regime $m_n\sqrt{\Delta_n}\to\infty$, but at the expense of an upper bound of $Y< 8/5$ in the jump activity index. In this setting, the convergence rate achieved is $\sqrt{m_n\Delta_n}$, which is not as fast as the analogous rates when $\theta\in[0,\infty)$.
\end{remark}

\begin{remark} Beyond finite-sample edge correction, the denominator in \eqref{MEDHF} (which converges to 1) relaxes the technical conditions required for our results; without it, additional restrictions on 
$Y$ and $m_n$ are necessary. 
   Specifically, define ${\tilde c_n(m_n,
   v_n)} := \sum_{i=1}^{n} K_{m_n\Delta_n}(t_{i-1}-\tau)(\Delta_iX)^2 \mathbf{1}_{\{|\Delta_iX|\leq v_n\}}$, and let $\tilde Z_n(m_n, v_n)$, $\tilde Z'_n(m_n)$, and $A(v_n,m_n)$ denote the corresponding analogs without the denominator factor. Then the limit \eqref{clt_spot} still holds under the hypotheses of Theorem~\ref{clt}, provided also that $m_n \gg \Delta_n^{-1/3}$. Under this additional condition on $m_n$, \eqref{ITQRTm} remains valid for $Y<20/11$ when $m_n \sqrt{\Delta_n} \to \theta \in (0,\infty)$. However, a much more substantive difference arises when $m_n \sqrt{\Delta_n} \to 0$: if we also impose $\sqrt{m_n} v_n^{2-Y}\ll 1$ to obtain a feasible CLT (as discussed in Remark~\ref{SlyRmk1}), the constraint becomes $\Delta_n^{-1/3} \ll m_n \ll v_n^{2(Y-2)} \ll \Delta_n^{Y-2}$, which can only be satisfied if $Y<5/3$, illustrating the benefit of including the denominator factor in \eqref{MEDHF}.
\end{remark}

The next result is the second key ingredient needed for our debiasing method.  For an arbitrary fixed $\zeta>1$, it establishes the convergence rate of the difference $\tilde Z_n(m_n, \zeta v_n)-\tilde Z_n(m_n, v_n)$, i.e. of $\hat c_{n,1}(m_n,\zeta v_n)-\hat c_{n,1}(m_n, v_n)$ subject to a  bias correction, which we exploit in our debiasing procedure. The proof is given in Appendix \ref{MainPrfsa}.
\begin{theorem}\label{thm_clt_difference}
Suppose that $1<Y<2$, $ \sqrt{\log(n)} \ll m_n$, $m_n = O(\Delta_n^{-\frac{1}{2}})$,  $\Delta_n^{\beta}\ll  v_n \ll \Delta_n^{\beta'}$ with $\frac{1}{4-Y}<\beta'<\beta<\frac{1}{2}$, and $m_n \rightarrow \infty$ such that $v_n^{\gamma}\Delta_n^{-1}\ll m_n \ll \Delta_n^{-5} v_n^{8+Y}$ with some $\gamma<Y$. 
Then, under Assumptions \ref{BscAs1} and \ref{asu_kernel}, for an arbitrary fixed $\zeta > 1$, we have
\begin{align}\label{clt_difference}
    u_n^{-1}\big(\tilde Z_n(m_n, \zeta v_n)-\tilde Z_n(m_n, v_n)\big)
\xrightarrow{ st} 
%\xrightarrow{\mathcal D} 
\mathcal N
\left(0,\frac{(C_++C_-)|\chi_{\tau}|^Y}{4-Y}(\zeta^{4-Y}-1)\int K^2(x)dx\right),
\end{align}
as $n\rightarrow \infty$, where $u_n:= \Delta_n^{-1/2} v_n^{2-Y/2}\rightarrow 0$.
\end{theorem}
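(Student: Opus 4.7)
The plan is to mirror the martingale-CLT structure from the proof of Theorem \ref{clt}, specialized to the annulus-restricted squared increments in $\{v_n<|\Delta_iX|\leq \zeta v_n\}$. Setting $Z_i := \sqrt{m_n} K_{m_n\Delta_n}(t_{i-1}-\tau)(\Delta_iX)^2 \mathbf{1}_{\{v_n<|\Delta_iX|\leq \zeta v_n\}}$, I would write
\begin{align*}
u_n^{-1}\bigl(\tilde Z_n(m_n,\zeta v_n)-\tilde Z_n(m_n,v_n)\bigr)
&= u_n^{-1}\frac{\sum_i(Z_i-\EE(Z_i|\mathcal F_{i-1}))}{\Delta_n\sum_jK_{m_n\Delta_n}(t_{j-1}-\tau)}\\
&\quad + u_n^{-1}\Bigl(\tfrac{\sum_i\EE(Z_i|\mathcal F_{i-1})}{\Delta_n\sum_jK_{m_n\Delta_n}(t_{j-1}-\tau)}-\sqrt{m_n}[A(\zeta v_n,m_n)-A(v_n,m_n)]\Bigr),
\end{align*}
decomposing into a martingale-difference sum and a bias remainder. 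The second piece is controlled via Proposition \ref{prop} (2): by construction of $A$, the $C_{1,i}(\zeta^{2-Y}-1)\Delta_n v_n^{2-Y}$ and $D_{1,i}(\zeta^{-Y}-1)\Delta_n^2 v_n^{-Y}$ contributions in \eqref{PartExp0_difference} cancel exactly against $A(\zeta v_n,m_n)-A(v_n,m_n)$, leaving only the residuals $O_P(\Delta_n^3 v_n^{-2-Y})$ and $o_P(\Delta_n^{3/4}v_n^{(4-Y)/2})$. The condition $m_n\ll \Delta_n^{-5}v_n^{8+Y}$ is precisely what kills the first after dividing by $u_n=\Delta_n^{-1/2}v_n^{(4-Y)/2}$, while $m_n=O(\Delta_n^{-1/2})$ handles the second.

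For the martingale part I would apply Theorem 2.2.15 of \cite{jacod2012discretization}. The variance computation requires a $p=2$ refinement of Proposition \ref{prop} (1), yielding
\[
\EE((\Delta_iX)^4\mathbf{1}_{\{v_n<|\Delta_iX|\leq \zeta v_n\}}|\mathcal F_{i-1}) = \tfrac{(C_++C_-)|\chi_{t_{i-1}}|^Y}{4-Y}(\zeta^{4-Y}-1)\Delta_n v_n^{4-Y} + (\text{lower-order residuals});
\]
the squared conditional means are of order $O_P(\Delta_n^2 v_n^{4-2Y})$, which is $o(\Delta_n v_n^{4-Y})$ since $v_n\gg\Delta_n^\beta$ with $\beta<1/2<1/Y$ (using $Y>1$). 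Combined with the kernel identity $\Delta_n\sum_i K^2_{m_n\Delta_n}(t_{i-1}-\tau)\to(m_n\Delta_n)^{-1}\int K^2(x)\,dx$ and a localization of $|\chi_{t_{i-1}}|^Y\to|\chi_\tau|^Y$ (exploiting the càdlàg regularity of $\chi$ together with the concentration of $K_{m_n\Delta_n}(\cdot-\tau)$ near $\tau$, in the spirit of Lemma 3.1 of \cite{FIGUEROALOPEZ20204693}), this produces the claimed asymptotic variance $\frac{(C_++C_-)|\chi_\tau|^Y}{4-Y}(\zeta^{4-Y}-1)\int K^2(x)\,dx$.

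The Lyapunov condition at some $p>2$ follows from Proposition \ref{prop} (3): a direct calculation gives $u_n^{-p}\sum_i\EE(|Z_i|^p|\mathcal F_{i-1})\lesssim \bigl(v_n^Y/(m_n\Delta_n)\bigr)^{p/2-1}\to 0$, which is ensured by the assumption $m_n\gg v_n^\gamma\Delta_n^{-1}$ for some $\gamma<Y$ (since $v_n^{Y-\gamma}\to 0$). Since the limiting variance is random (depending on $|\chi_\tau|^Y$), the convergence is naturally of mixed-normal/stable type, and I would additionally verify the orthogonality condition $\sum_i\EE(\tilde Z_i \Delta_iM|\mathcal F_{i-1})\xrightarrow{P}0$ for $M\in\{W,B\}\cup \mathcal N$, which reduces to an annulus-restricted version of the estimate \eqref{cov_summand} already established in the proof of Theorem \ref{clt}.

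The main obstacle I anticipate is obtaining a sufficiently sharp $p=2$ moment expansion so that the $3\sigma_{t_{i-1}}^4\Delta_n^2$ diffusive contributions at the two thresholds cancel cleanly in the difference without leaving a residual of order $\Delta_n^2$. In the regime $v_n\ll \Delta_n^{\beta'}$ with $\beta'>1/(4-Y)$ one has $\Delta_n^2\gg \Delta_n v_n^{4-Y}$, so the crude $o_P(\Delta_n^2)$ error from \eqref{GNMExp} would dominate the target leading order. Extracting a $p=2$ analogue of \eqref{PartExp0} with the $\Delta_n^2$ term made explicit (and hence cancelled in the difference) appears essential; this should be accessible through the same density-transformation arguments used in the analogous expansions of \cite{BONIECE2024} referenced in Proposition \ref{prop}, but will likely constitute the bulk of the technical work.
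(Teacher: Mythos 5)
Your proposal mirrors the paper's approach at every structural level: a martingale CLT via Theorem 2.2.15 of \cite{jacod2012discretization} for the annulus-restricted squared increments, bias remainder controlled through \eqref{PartExp0_difference} and the constraints on $m_n$, variance from the annulus fourth moment, Lyapunov condition from Proposition \ref{prop}(3). Two of your remarks about where the difficulty sits should be recalibrated against what the paper actually does. The concern you raise about the $p=2$ expansion is legitimate: subtracting two applications of \eqref{GNMExp} at thresholds $\zeta v_n$ and $v_n$ nominally leaves an $o_P(\Delta_n^2)$ residual, and since $v_n\ll\Delta_n^{1/(4-Y)}$ forces $\Delta_n^2\gg\Delta_n v_n^{4-Y}$, that residual would swamp the leading term $(\zeta^{4-Y}-1)C_{2,i}\Delta_n v_n^{4-Y}$. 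The paper's variance display simply omits this term; the implicit (correct but unstated) claim is that the $\Delta_n^2$-order errors at the two thresholds coincide up to an exponentially negligible annulus contribution, i.e.\ precisely a $p=2$ analogue of the sharpened annulus expansion \eqref{PartExp0_difference}. You are right that this step is abbreviated in the text.

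On the other hand, you substantially underestimate the martingale orthogonality condition. After the $u_n^{-1}$ normalization, what is required is
\begin{equation*}
\frac{m_n^{1/2}}{\Delta_n^{-1/2}v_n^{2-Y/2}}\sum_{i=1}^n K_{m_n\Delta_n}(t_{i-1}-\tau)\,\EE_{i-1}\!\left[(\Delta_iX)^2\mathbf{1}_{\{v_n<|\Delta_iX|\leq\zeta v_n\}}\Delta_iM\right]\longrightarrow 0,
\end{equation*}
and since $u_n^{-1}\to\infty$ this is far sharper than what \eqref{cov_summand} (order $o_P(\Delta_n^{3/2})$, calibrated to the normalization of Theorem \ref{clt}) delivers; it is not an ``annulus-restricted version'' of that lemma. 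The paper replaces the annulus indicator by a $C^2$ approximation $f_n$, splits the claim into \eqref{diff_fn_approximation} and \eqref{diff_fn_technical}, and establishes \eqref{diff_fn_technical} in Appendix \ref{TecnPrf2} through a lengthy It\^o-formula argument on the L\'evy jump measure, broken out by whether the driving jump sits well inside, near the boundary of, or outside the annulus. That argument, rather than the $p=2$ moment expansion, is where the bulk of the technical work of this theorem actually resides.
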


\begin{remark}\label{INTNTL}
The conditions on $m_n$ in Theorem \ref{thm_clt_difference} are stricter than those in Theorem \ref{clt} due to the condition $\Delta_n^{-1}v_n^\gamma\ll m_n\ll \Delta_n^{-5}v_n^{8+Y}$ (note $\Delta_n^{-5}v_n^{8+Y} \ll \Delta_n^{-4}v_n^{4+2Y}$, which was the  upper bound for $m_n$ in Theorem \ref{clt}).
As we shall see, the optimal rate of convergence of the estimation error of our ultimate debiased estimator can be achieved by taking $m_n=\theta\Delta_n^{-1/2}$ for any constant $\theta>0$ (cf.~Remark \ref{SlyRmk1}). In this bandwidth regime, the additional constraint  on $m_n$ in Theorem \ref{thm_clt_difference} reduces to $\Delta_n^{\frac{9}{16+2Y}}\ll  v_n \ll  \min(\Delta_n^{\frac{1}{4-Y}},\Delta_n^{\frac{1}{2\gamma}})$, whose upper bound and lower bound {can be made} compatible only if ${Y}<20/11$.
\end{remark}

Finally we are ready to formally construct our debiasing estimators.  Our debiasing steps follow the same idea as in \cite{Jacod2014}. Write $\vec \zeta_k=(\zeta_1,\dots,\zeta_k)$, where $\zeta_1,\dots,\zeta_k>1$ are fixed constants. Let
\begin{align}\label{debias_formula}
\widetilde c^{(k)}_n(m_n,v_n,\vec \zeta_k) &:= \widetilde c^{(k-1)}_n(m_n, v_n,\vec \zeta_{k-1}) \\
&\quad- \frac{\big(\widetilde c^{(k-1)}_n(m_n,\zeta_k v_n,\vec \zeta_{k-1})-\widetilde c^{(k-1)}_n(m_n, v_n,\vec \zeta_{k-1})\big)^2}{\widetilde c^{(k-1)}_n(m_n,\zeta_k^2 v_n,\vec \zeta_{k-1})-2\widetilde c^{(k-1)}_n(m_n,\zeta_k v_n,\vec \zeta_{k-1})+\widetilde c^{(k-1)}_n(m_n, v_n,\vec \zeta_{k-1})},
\nonumber
\end{align}
for each $k\geq 1$, with $\widetilde c^{(0)}_n(m_n, v_n) = \hat c_n(m_n, v_n) $. Our next theorem is the second main result of this paper; it establishes a CLT for the  feasible estimator $\widetilde c^{(2)}_n(m_n,v_n,\vec \zeta_2)$. The proof is given in Appendix \ref{MainPrfsa}.
\begin{theorem}\label{main}
     Let
    \begin{align}
    \tilde Z^{(2)}_n(m_n, v_n) &:= \sqrt{m_n}\Big(\widetilde c^{(2)}_n(m_n,v_n,\vec \zeta_2) -\hat c_{n,2}(m_n)\Big).\nonumber
\end{align}
Then, under the assumptions and notation of Theorems  \ref{clt} and \ref{thm_clt_difference}, as $n\rightarrow \infty$,
\begin{align}\label{clt_spot2}
    (\tilde Z^{(2)}_n(m_n, v_n),\tilde Z'_n(m_n))
\xrightarrow{ st}  (Z_1,{Z_2}).
\end{align}
Furthermore, 
if we set $$m_n\sqrt{\Delta_n} \rightarrow \theta, \quad \text{with} \quad\theta\in{[0,\infty)},$$
we have the following stable convergence in law, as $n\rightarrow \infty$:
\begin{equation}\label{MnDebRes2}
    \sqrt{m_n} (\tilde c^{(2)}_n(m_n, v_n,\vec \zeta_2)-\sigma_\tau^2) \xrightarrow{ st} Z_1 + \theta{Z_2}.
\end{equation}
\end{theorem}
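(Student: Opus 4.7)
The plan is to unwind the recursive definition~\eqref{debias_formula} and show that the two Aitken-type corrections in $\widetilde c^{(2)}_n(m_n,v_n,\vec\zeta_2)$ cancel precisely the two leading bias terms identified in $A(v_n,m_n)$ in Theorem~\ref{clt}. The central goal is to establish
\begin{equation*}
\widetilde c^{(2)}_n(m_n,v_n,\vec\zeta_2) = \hat c_{n,2}(m_n) + m_n^{-1/2}\tilde Z_n(m_n,v_n) + O_P\bigl(u_n/\sqrt{m_n}\bigr),
\end{equation*}
where $u_n=\Delta_n^{-1/2}v_n^{2-Y/2}=o(1)$. Multiplying by $\sqrt{m_n}$ and invoking the joint stable convergence of $(\tilde Z_n(m_n,v_n),\tilde Z'_n(m_n))$ from Theorem~\ref{clt} immediately yields~\eqref{clt_spot2}, while the stated CLT~\eqref{MnDebRes2} follows from the algebraic identity $\sqrt{m_n}\bigl(\widetilde c^{(2)}_n-\sigma_\tau^2\bigr) = \tilde Z^{(2)}_n(m_n,v_n) + m_n\sqrt{\Delta_n}\,\tilde Z'_n(m_n)$ together with $m_n\sqrt{\Delta_n}\to\theta$.

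The starting point is the decomposition
\begin{equation*}
\hat c_n(m_n,\zeta v_n) = \hat c_{n,2}(m_n) + U_n \zeta^{2-Y} + V_n \zeta^{-Y} + m_n^{-1/2}\tilde Z_n(m_n,\zeta v_n),
\end{equation*}
valid for any $\zeta>0$, where $U_n$ and $V_n$ are the kernel-weighted averages of $C_{1,i}$ and $\Delta_n D_{1,i}$ extracted from~\eqref{AnTerm}, so $U_n=O_P(v_n^{2-Y})$ and $V_n=O_P(\Delta_n v_n^{-Y})$ with $V_n/U_n=O_P(\Delta_n v_n^{-2})=o_P(1)$. Crucially, Theorem~\ref{thm_clt_difference} implies $\tilde Z_n(m_n,\zeta v_n)-\tilde Z_n(m_n,v_n) = O_P(u_n)$, so differences of the noise at distinct thresholds are an order smaller than $\tilde Z_n$ itself. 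I would first analyze the inner debiasing: for any base threshold $\bar v\in\{v_n,\zeta_2 v_n,\zeta_2^2 v_n\}$, writing $\bar U := U_n(\bar v/v_n)^{2-Y}$ and $\bar V:=V_n(\bar v/v_n)^{-Y}$, substituting the above at $\bar v,\zeta_1\bar v,\zeta_1^2\bar v$ into~\eqref{debias_formula} and Taylor-expanding the Aitken ratio in $\bar V/\bar U$ yields
\begin{equation*}
\widetilde c^{(1)}_n(m_n,\bar v,\vec\zeta_1) = \hat c_{n,2}(m_n) + K_1'\,\bar V + m_n^{-1/2}\tilde Z_n(m_n,\bar v) + O_P\bigl(u_n/\sqrt{m_n}\bigr),
\end{equation*}
where the simplification $(\rho-1)^2-(\gamma-1)(2\rho-\gamma-1)=(\rho-\gamma)^2$ with $\rho=\zeta_1^{2-Y}$, $\gamma=\zeta_1^{-Y}$ gives $K_1'=\bigl((\zeta_1^{2-Y}-\zeta_1^{-Y})/(\zeta_1^{2-Y}-1)\bigr)^2$. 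The leading $U$-bias is thereby eliminated entirely.

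The outer debiasing is then much cleaner because the residual bias $K_1'\bar V=K_1' V_n(\bar v/v_n)^{-Y}$ is purely geometric in $\bar v$ with common ratio $\zeta_2^{-Y}$ once evaluated at $\bar v\in\{v_n,\zeta_2 v_n,\zeta_2^2 v_n\}$. Since the Aitken $\Delta^2$ formula annihilates any purely geometric sequence, substituting the three values of $\widetilde c^{(1)}_n$ into~\eqref{debias_formula} with $k=2$ and tracking the noise differences via Theorem~\ref{thm_clt_difference} gives the target identity $\widetilde c^{(2)}_n(m_n,v_n,\vec\zeta_2)-\hat c_{n,2}(m_n) = m_n^{-1/2}\tilde Z_n(m_n,v_n) + O_P(u_n/\sqrt{m_n})$, which upon multiplication by $\sqrt{m_n}$ reduces the limit to that already supplied by Theorem~\ref{clt}.

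The main obstacle will be bookkeeping in the first debiasing: every noise contribution propagating through the numerator and denominator of the Aitken ratio must remain $o_P(m_n^{-1/2})$, which requires combining the $O_P(m_n^{-1/2})$ order of $\tilde Z_n$ itself with the sharper $O_P(u_n\, m_n^{-1/2})$ order for its differences, and carefully controlling the higher-order remainders $O_P(\bar V^2/\bar U)=O_P(\Delta_n^2 v_n^{-2-Y})$ from the Taylor expansion of the Aitken ratio. A subsidiary point is verifying that both Aitken denominators are bounded away from zero in probability: at the first stage this follows from $\bar U(\zeta_1^{2-Y}-1)^2\gtrsim v_n^{2-Y}$ dominating $\bar V$, while at the second stage from the strict positivity of $K_1' V_n(\zeta_2^{-Y}-1)^2$ at order $\Delta_n v_n^{-Y}$, both of which dominate the $O_P(u_n/\sqrt{m_n})$ noise under the hypotheses of Theorems~\ref{clt} and~\ref{thm_clt_difference}.
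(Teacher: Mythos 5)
Your proposal follows essentially the same route as the paper's proof: unwind the Aitken recursion in \eqref{debias_formula}, use Theorem~\ref{thm_clt_difference} to bound the differences $\tilde Z_n(m_n,\zeta\bar v)-\tilde Z_n(m_n,\bar v)=O_P(u_n)$, expand the first-stage Aitken ratio to identify the surviving bias as a constant multiple of the second-order term $a_{2,1}$, and then show the second stage removes that term while contributing only $O_P(u_n/\sqrt{m_n})$ error, after which Theorem~\ref{clt} supplies the limit. Your explicit closed form $K_1'=\bigl((\zeta_1^{2-Y}-\zeta_1^{-Y})/(\zeta_1^{2-Y}-1)\bigr)^2$ via the identity $(\rho-1)^2-(\gamma-1)(2\rho-\gamma-1)=(\rho-\gamma)^2$ and the observation that the residual bias is asymptotically geometric in $\bar v$ so the second Aitken step annihilates it exactly are a tidier packaging of what the paper does with the $\tilde\eta_{2,1}$ and $\tilde a_{2,1}$ bookkeeping, but the substance of the argument is identical.
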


\begin{remark}\label{twostep_remark}
Recall that Remark \ref{SlyRmk1} identifies two bandwidth regimes of primary importance: (i) $0<\theta<\infty$, and (ii) $\theta=0$, both with a {convergence} rate dictated by  $m_n^{1/2}$. In case (i),  a rate-optimal (feasible) CLT for $\tilde c_n^{(2)}(m_n, v_n,\vec \zeta_2)$ is attained in \eqref{MnDebRes2} since $m_n^{-1/2}\asymp \Delta_n^{1/4}$, provided $1\leq Y<20/11$, but no CLT is available in case (i) when $Y\geq 20/11$. In case (ii), the rate is strictly slower than  $n^{1/4}$ (see Remark \ref{SlyRmk1} for details), but a CLT still holds for $\tilde c^{(2)}_n(m_n, v_n,\vec \zeta_2)$ even when $20/11\leq Y<2$ at the expense of a deteriorating rate.  Indeed, since $m_n\ll \Delta_n^{-5}v_n^{8+Y}$ and $v_n\ll\Delta_n^{\frac{1}{4-Y}}$, we have $m_n\ll \Delta_n^{(6Y-12)/(4-Y)}$ and hence $m_n^{-1/2}\gg \Delta_n^{(6-3Y)/(4-Y)}$, which tends to zero more slowly as $Y$ gets closer to $2$.
\end{remark}

\begin{remark}
In Theorem \ref{main}, we proceeded with a two-step debiasing procedure to iteratively remove bias terms in $A(v_n,m_n)$ one at a time. However, if the second-order term in \eqref{AnTerm} is negligible when multiplied by $u_n^{-1}\sqrt{m_n}$ (and, thus, \eqref{clt_difference} is valid replacing $A(v_n,m_n)$ with only the first-order term $\sum_{i=1}^nK_{m_n\Delta_n}(t_{i-1}-\tau)C_{1,i}\Delta_n v_n^{2-Y}$), then only one debiasing step is needed. Specifically, for one-step debiasing it suffices that
\[
\sqrt{m_n}\sum_{i=1}^n K_{m_n\Delta_n}(t_{i-1}-\tau)C_{1,i}\Delta_n^2 v_n^{-Y}\ll 1.
\]
Therefore, we only require that $u_n^{-1}\sqrt{m_n}\Delta_nv_n^{-Y}\ll 1$ or, equivalently, $m_n\ll \Delta_n^{-3} v_n^{Y+4}$. In the optimal-rate bandwidth regime $m_n\asymp \Delta_n^{-1/2}$, this imposes the relation $\Delta_n^{\frac{5}{2(Y+4)}}\ll v_n$, but since $v_n\ll \Delta_n^{\frac{1}{4-Y}}$, this yields the restriction $Y<12/7$. As in Remark \ref{twostep_remark}, if $Y\geq 12/7$ and $\theta=0$, the one-step debiased estimator $\tilde c^{(1)}_n(m_n, v_n,\zeta_1)$ will still enjoy a CLT centered at $\sigma_\tau^2$, but with a rate strictly slower than $\Delta_n^{\frac{4-2Y}{4-Y}}$, which tends to zero more slowly as $Y$ gets closer to $2$.
\end{remark}

\begin{remark}
Theorem \ref{clt} establishes the joint convergence for the terms in \eqref{kernel_spot_estimator} for the case $Y>1$. However, the case $Y<1$ is also implicitly covered. Under Assumption \ref{BscAs1}, the original jump process $\int_0^t\chi_{s^-}dJ_s$ can be subsumed into $\int_0^t\int\delta(s,z)\mathfrak{p}(ds,dz)$ via the introduction of a fictitious jump process $\int_0^t\chi'_{s^-}dJ'_s$, such that $\chi' \equiv 0$ and $J'$ is a strictly stable L\'evy process independent of $J$ and $\int_0^t\int\delta(s,z)\mathfrak{p}(ds,dz)$ with jump intensity index $Y'>1$. Consequently, Proposition \ref{prop} holds with
    $$\EE\big((\Delta_iX)^{2}\mathbf{1}_{\{|\Delta_iX|\leq v_n\}} \mid \mathcal F_{i-1}\big)=\sigma_{t_{i-1}}^2\Delta_n+O_P(\Delta_n^3 v_n^{-2-Y'}) +o_P(\Delta_n^{5/4}),$$
    and
    $$\EE\big((\Delta_iX)^{2 p}\mathbf{1}_{\{|\Delta_iX|\leq v_n\}} \mid \mathcal F_{i-1}\big)=(2{p}-1)!!\sigma_{t_{i-1}}^{2{p}}\Delta_n^{ p}+o_P(\Delta_n v_n^{2{p}-Y'})+o_P(\Delta_n^{p}),$$
    since $C_{p,i} = D_{1,i} = 0$ for $p\geq 1$ given $\chi' \equiv 0$. As a result, \eqref{clt_spot} remains valid with $A(v_n,m_n)=0$, subject to the updated constraints $\beta' > \frac{1}{4-Y'}$ and $m_n\ll \Delta_n^{-4}v_n^{4+2Y'}$. By choosing a suitable auxiliary index $Y'$ sufficiently close to $1$, one can ensure the existence of valid sequences $v_n$ and $m_n$ (as discussed in Remark \ref{SlyRmk1}), thereby yielding the CLT in \eqref{ITQRTm} for any $Y<1$ without bias.
\end{remark}

\section{Simulation study} \label{SectionSimul}

In the numerical experiments below, we evaluate the performance of our debiased estimators under a Heston-type
volatility model in two different settings. First, for comparison, we replicate the parameter setting and sampling scheme in \cite{LIU2018}, which  uses a symmetric stable jump component. However, we note that the overall annualized volatility therein is set to $1$, which is relatively high for many financial applications.  In the second  setting, we borrow the Heston parameter values from \cite{ZhangMk}.
\subsection{Simulation with a stable jump component}\label{SubSectSimulation}
Following \cite{LIU2018}, we consider the following model:
\begin{equation}\label{HestonModGen}
\begin{aligned}
X_t &= X_0+\int_0^t b_tdt+ \int_0^t\sqrt{V_s}\,dW_s + J_t,\\
V_t &=  V_0 + \int_0^t\kappa (\theta - V_s)ds + \xi\int_0^t \sqrt{V_s}\,dB_s,
\end{aligned}
\end{equation}
where $W$ and $B$ are correlated standard Brownian motions with correlation $\rho$, $J$ is a strictly symmetric stable L\'evy process \footnote{The \texttt{rstable} package in \texttt{R} was used to simulate the $Y$-stable portion of the jump component in all cases.} with Blumenthal-Getoor index $Y$ and scale parameter $1$ (note \cite{LIU2018} includes an additional finite-activity component which we omit for simplicity  in our simulations). In this section, we take the same parameter values as in \cite{LIU2018}:
\[
    X_0=1,\quad V_0=0,\quad \rho=0,\quad b_t\equiv 1, \quad\kappa=0.03,\quad\theta=1,\quad \xi= .15.
\]

In this simulation, we compare the performance of our estimator $\hat c(m_n,v_n)$ defined in \eqref{MEDHF}, $\tilde c^{(1)}(m_n,v_n,\vec\zeta_1)$ and $\tilde c^{(2)}_{n,\tau}(m_n,v_n,\vec\zeta_2)$ defined in \eqref{debias_formula} to the estimators $\hat \sigma^2_{\tau,n}(u_n,h)$ and $\tilde \sigma^2_{\tau,n}( \zeta,u_n,h)$ introduced in \cite{LIU2018}, which are based on the empirical characteristic  function approach of \cite{Jacod2014}. Specifically, $\hat \sigma^2_{\tau,n}(u_n,h)$\footnote{\cite{LIU2018} defines $\hat \sigma^2_{\tau,n}(u_n,h) = \bar \sigma^2_{\tau,n}(u_n,h)- \frac{2h}{u_n^2\Delta_n}(\sinh(\bar \sigma^2_{\tau,n}(u_n,h)))^2$ for simulation. The difference lies in the reciprocal placement of $h$ and $\Delta_n$ in the bias correction term. We interpret the version in \cite{LIU2018} as a likely typo and instead follow the formulation consistent with \cite{Jacod2014},  which had better performance in our experiments.} and $\tilde \sigma^2_{\tau,n}( \zeta,u_n,h)$ are defined as
\begin{align}\label{LiuEstDfn0}
    \hat \sigma^2_{\tau,n}(u_n,h) &=\bar \sigma^2_{\tau,n}(u_n,h)- \frac{2\Delta_n}{u_n^2h}\left(\sinh\left( \frac{u_n^2}{2}\bar \sigma^2_{\tau,n}(u_n,h)\right)\right)^2,\\
\label{LiuEstDfn1}    
    \tilde \sigma^2_{\tau,n}( \zeta,u_n,h) &= \hat \sigma^2_{\tau,n}(u_n,h) -  \widetilde A_{n}( \zeta,u_n,h){\left[(\hat \sigma^2_{\tau,n}( \zeta u_n,h) -  \hat \sigma^2_{\tau,n}(u_n,h))\wedge 0\right]},
\end{align}
where 
\begin{align*}
\bar \sigma^2_{\tau,n}(u_n,h) &= \frac{-2}{u_n^2}\log\left(S_{\tau,n}(u_n,h) \vee\sqrt{\frac{\Delta_n}{h}}\right),\\
S_{\tau,n}(u_n,h) &= \Delta_n\sum_{i=1}^nK_h(i\Delta_n-\tau)\cos\left(\frac{u_n\Delta_iX}{\sqrt{\Delta_n}}\right),
\end{align*}
and $ \widetilde A_{n}(\zeta,u_n,h)$ is a bias-correction term\footnote{The formula {of $\tilde B_{\tau,n}(\zeta,u_n,h)$}  given in \cite{LIU2018} replaces each term in the denominator with maximum between $0$ and the term therein. However, we found that in our experiments, omitting this sign control led to more stable aggregate performance for these estimators.}%
\begin{align*}
     \widetilde A_{n}(\zeta,u_n,h) & = \frac{\sum_{i=1}^{\lfloor\frac{1}{m_n\Delta_n}\rfloor}\left[{\hat\sigma}^2_{(i-1)m_n\Delta_n,n}(\zeta pu_n,h) - {\hat\sigma}^2_{(i-1)m_n\Delta_n,n}(pu_n,h)\right]}{\sum_{i=1}^{\lfloor\frac{1}{m_n\Delta_n}\rfloor}\big({\hat\sigma}^2_{(i-1)m_n\Delta_n,n}(\zeta^2 pu_n,h) - 2{\hat\sigma}^2_{(i-1)m_n\Delta_n,n}(\zeta pu_n,h) + {\hat\sigma}^2_{(i-1)m_n\Delta_n,n}(pu_n,h)\big)},
\end{align*}
which aggregates spot estimators across $t$ for improved finite-sample performance. This approach was  originally proposed by \cite{Jacod2014}.

For comparison, for our one- and two-step debiased estimators $\tilde c^{(1)}_{m,\tau}(m_n,v_n,\vec\zeta_1)$ and $\tilde c^{(2)}_{n,\tau}(m_n,v_n,\vec\zeta_2)$, we also adopt the debiasing adjustments of \cite{LIU2018} and \cite{Jacod2014} described above. Specifically, writing $\tilde{c}^{(k)}_{n,\tau}(m_n,v_n,\vec \zeta_k)$ for the estimator $\tilde{c}^{(k)}_{n}(m_n,v_n,\vec \zeta_k)$ at time $\tau$, we consider
\begin{equation}\label{ItrDebProc}
\begin{aligned}
     \widetilde c^{(k)}_{\tau,n}(m_n,v_n,\vec \zeta_k) &:= \widetilde c^{(k-1)}_{n,\tau}(m_n, v_n,\vec \zeta_{k-1})\\
&\quad- \big( \widehat{A}^{(k)}_{n}(m_n,v_n,\vec \zeta_k)\vee0\big) [\big(\widetilde c^{(k-1)}_{n,\tau}(m_n,\zeta_k v_n,\vec \zeta_{k-1})-\widetilde c^{(k-1)}_{n,\tau}(m_n, v_n,\vec \zeta_{k-1})\big)\vee 0],
\end{aligned}
\end{equation}
where $ \widehat{A}^{(k)}_{{n}}:= \widehat{A}^{(k)}_{n}(m_n,v_n,\vec \zeta_k)$ is given by
$$ \widehat{A}^{(k)}_n
=\frac{\pm\sum_{i=1}^{\lfloor\frac{1}{m_n\Delta_n}\rfloor}\big(\widetilde c^{(k-1)}_{n,im_n\Delta_n}(m_n,\zeta_kp_k v_n,\vec \zeta_{k-1})-\widetilde c^{(k-1)}_{n,im_n\Delta_n}(m_n, p_kv_n,\vec \zeta_{k-1})\big)}{\sum_{i=1}^{\lfloor\frac{1}{m_n\Delta_n}\rfloor}\big(\widetilde c^{(k-1)}_{n,im_n\Delta_n}(m_n,\zeta_k^2 p_kv_n,\vec \zeta_{k-1})-2\widetilde c^{(k-1)}_{n,im_n\Delta_n}(m_n,\zeta_k p_kv_n,\vec \zeta_{k-1})+\widetilde c^{(k-1)}_{n,im_n\Delta_n}(m_n,p_k v_n,\vec \zeta_{k-1})\big)},$$
where the signs of the bias terms are chosen as in \cite{BONIECE2024} and $c^{(0)}_{n,\tau}(m_n,v_n,\vec \zeta_0) = \hat c(m_n,v_n)$ at $\tau$, i.e., $ \widehat{A}_n^{(1)}$ is positive and $ \widehat{A}_n^{(2)}$ is negative, so as to ensure compatibility with the theoretical sign of  the limiting terms as $n\to\infty$.
%
%$$A_n^{(k)}\approx \frac{1}{\zeta^{2-Y}-1}
%$$

To evaluate estimation performance, we simulate $M$ independent paths and aggregate various pathwise error measures computed along each simulated path for each of the estimators considered above.  Denoting the true value of the spot variance at $t_i$ in {the} $j$-th path as $\sigma^2_{t_i,j}:=V_{t_i,j}$ and a given  estimator as $\hat \sigma^2_{t_i,j}$, we consider an estimate of  the pathwise root mean squared error (RMSE) on the time grid $\{l_i = i\floor {n/100}\}_{i=10,\dots,90}$ as follows:
$$\widehat{RMSE} := \sqrt{\frac{1}{M}\sum_{{j=1}}^M MSE_j}\,,$$
where $MSE_j := \frac{1}{81} \sum_{i=10}^{90}(\hat \sigma^2_{l_i,j}-\sigma^2_{l_i,j})^2$. Similarly, we also consider the mean absolute relative error $\widehat{ARE}$ and the mean relative error $\widehat{RE}$:
$$\widehat{ARE} := \frac{1}{M}\sum_{j=1}^M AE_j \;\text{ and }\;\widehat{RE} := \frac{1}{M}\sum_{j=1}^M E_j,$$
where $AE_j := \frac{1}{81} \sum_{i=10}^{90}|\hat \sigma^2_{l_i,j}-\sigma^2_{l_i,j}|/\sigma^2_{l_i,j}$ and $E_j := \frac{1}{81} \sum_{i=10}^{90}(\hat\sigma^2_{l_i,j}-\sigma^2_{l_i,j})/\sigma^2_{l_i,j}$.

For our estimators $\hat c_n(m_n,v_n)$, $\tilde c^{(1)}_{m,\tau}(m_n,v_n,\vec\zeta_1)$ and $\tilde c^{(2)}_{n,\tau}(m_n,v_n,\vec\zeta_2)$, we use the exponential kernel $K(x)=\exp(-|x|)/2$ as well as the two-sided uniform kernel $G(x)=\mathbf{1}_{\{|x|\leq 1\}}/2$, and set $m_n=\Delta_n^{-1/2}$, $v_n=\sqrt{BV}\Delta_n^{5/12}$, where $BV = \frac{\pi}{2}\sum_{i=2}^n|\Delta_iX||\Delta_{i-1}X|/T$ is the bipower estimator for the integrated volatility. 
For the estimators $\hat \sigma^2_{\tau,n}(u_n,h)$ and $\tilde \sigma^2_{\tau,n}(\zeta,u_n,h)$ of \cite{LIU2018}, we take the kernel $K_3(x) = \frac{15}{16}(1-x^2)^2\mathbf{1}_{|x|\leq 1}$ as specified on \cite[p.~1964]{LIU2018}\footnote{$K_3$ is the best-performing kernel among those considered in \cite{LIU2018} that satisfies the assumptions of their results (in particular, that is continuously differentiable as assumed by their main theorem).}  We also use the best-performing bandwidth as reported in their simulations, $h=\Delta_n^{0.51}$ and $u_n = \frac{\Delta_n^{0.0025}}{\sqrt {BV}}$\footnote{Instead of using bipower variation for \( u_n \), the authors of \cite{LIU2018} employed a scaled local bipower variation defined as
$BV_\tau = \frac{\pi}{2(i_2 - i_1)\Delta_n} \sum_{i = i_1}^{i_2} |\Delta_i X||\Delta_{i+1} X|,$ where $ i_1 = \left\lfloor\frac{\tau - h}{\Delta_n} + 1\right\rfloor \vee 1 $ and $ i_2 = \left\lfloor\frac{\tau + h}{\Delta_n} + 1\right\rfloor \wedge n $. However, in our experiments, this choice led to large RMSE when aggregating errors across various \( \tau \), particularly for \( Y = 1.2 \). In contrast, the estimators demonstrated greater stability with our proposed choice of \( u_n \).}. The remaining tuning parameters $\vec\zeta_1$, $\vec\zeta_2$, $\zeta$, $p$, $p_1$, $(p_1,p_2)$ are selected via a grid search that minimizes $\widehat {RMSE}$\footnote{Specifically, to each point in the grid, we simulated $M=100$ independent paths to estimate the $\widehat{RMSE}$ for the associated parameter settings.}. The grid for $\zeta$ ranged from 1.1 to 1.9 with increments of 0.05, while the grid for $p$ spanned from 0.1 to 0.9 with increments of 0.05. For reproducibility, where applicable, we also report the auxiliary tuning parameter values found in our grid search for each considered estimator. This grid search is used to compare the finite-sample performance of the competing estimator families under favorable tuning choices, rather than as a fully data-driven tuning rule.

As in \cite{LIU2018}, we take $T=1$ month and $n = 8580$, roughly corresponding to a 5-minute sampling frequency (assuming 24 hours of  trading). For additional comparison, we also include the results for $n = 42900$,  corresponding to roughly 1-minute frequency data in this setup. Using $M = 1000$ iterations, we report the results in Tables \ref{Table1} and \ref{Table2}. %
We consider the values $Y\in\{0.8,1.2,1.6,1.75\}$, which correspond to the cases where, in principle, no debiasing is needed ($Y\in\{0.8,1.2\}$), one step of debiasing is needed ($Y=1.6$), and two steps are needed $(Y=1.75)$ to attain asymptotic efficiency {according to our  theory, though we may observe very different finite sample behavior}. 

As shown in Tables \ref{Table1} and \ref{Table2}, the two-step debiased estimator $\tilde c_n^{(2)}$, especially with the exponential kernel, delivers the best or essentially tied-best RMSE and ARE across the considered settings. The benefits of debiasing are most pronounced as $Y$ increases: while the characteristic-function estimators $\hat\sigma^2_{\tau,n}$ and $\tilde\sigma^2_{\tau,n}$ deteriorate substantially for larger $Y$, the one- and two-step debiased truncated-kernel estimators remain considerably more stable. For $Y=1.6$, one debiasing step is already sufficient in these experiments, with the two-step estimator yielding essentially identical performance; for $Y=1.75$, the two-step estimator gives the strongest performance. Interestingly, the two-step estimator also improves performance in the lower-activity cases $Y=0.8$ and $Y=1.2$, where debiasing is not required for asymptotic efficiency.

Across kernel choices for $\tilde c_n^{(i)}$, the exponential kernel systematically outperforms the uniform kernel across all values of $Y$ and at both 1- and 5-minute sampling frequencies. This finding is consistent with the established theoretical optimality of exponential kernels in related settings where debiasing is not required \cite{Figueroa-Lopez_Wu_2024}, and more broadly underscores the advantages of using unbounded kernels in spot volatility estimation. Overall, in this experiment, the two-step estimator with the exponential kernel gives the strongest and most reliable performance among the configurations considered.

\begin{table}[htb!]

\centering
\caption{Estimation performance at a 5-minute sampling frequency, based on a set of $M=1000$ paths. The simulated data were generated at over a $T=1$ month horizon using the model setup of \cite{LIU2018}. The indicated tuning parameters are from the grid search and correspond to the values of $\zeta,p$ for $\tilde \sigma^2_{\tau,n}$ and to the values of $\vec \zeta,\vec p$ for $c_{n,\tau}^{(i)}$, $i=1,2$.}\label{Table1}
\scalebox{0.87}{
\begin{tabular}{|c|c|c|c|>{\centering\arraybackslash}p{2cm}|c|c|c|>{\centering\arraybackslash}p{2cm}|}
\hline
 &\multicolumn{4}{|c|}{$Y=0.8$} & \multicolumn{4}{|c|}{$Y=1.2$} \\
\hline
\textbf{Estimator} & \textbf{RMSE} & \textbf{ARE} &\textbf{RE}&\tightbf{Tuning Param.}
& \textbf{RMSE} & \textbf{ARE} &\textbf{RE} &\tightbf{Tuning Param.} \\
\hline
 $\hat \sigma^2_{\tau,n}(u_n,h)$       & 0.2024 & 0.1074 &\tightbf{0.0070} &-& 0.2365 & 0.1168 &0.0451 &-\\
 $\tilde \sigma^2_{\tau,n}(\zeta,u_n,h)$  & 0.2355 &0.1109 &0.0094 &$\zeta = 1.6$,\quad$p = 0.9$& 0.2288 &0.121 &-\tightbf{0.0030}&$\zeta = 1.25$,\quad$p = 0.2$\\
 $\hat c_{n,\tau,\text{exp}}(m_n,v_n)$ &0.2805 & 0.1738 &-0.1704&- &0.2509 & 0.1539 &-0.1519 &-\\
 $\tilde{c}^{(1)}_{n,\tau,\text{exp}}(m_n,v_n,\vec \zeta_1)$          & 0.2806 & 0.1740 &-0.1704 &$\zeta = 1.8$,\quad$p =0.6$ & 0.2513 & 0.1540 &-0.1522 &$ \zeta = 1.7$,\quad$p =0.85$\\
 $\tilde{c}^{(2)}_{n,\tau,\text{exp}}(m_n,v_n,\vec \zeta_2)$         & \tightbf{0.1215} & \tightbf{0.0608} &0.0099 &$\vec \zeta = (1.7,1.8)$,\quad$\vec p =(0.5,0.85)$ & \tightbf{0.1421} &  \tightbf{0.0740} &0.0234 &$\vec \zeta = (1.9,1.25)$,\quad$\vec p =(0.5,0.65)$\\
 $\hat c_{n,\tau,\text{unif}}(m_n,v_n)$&0.2932 & 0.1768 &-0.1722&-&0.2634 & 0.1568 &-0.1527 &-\\
 $\tilde{c}^{(1)}_{n,\tau,\text{unif}}(m_n,v_n,\vec \zeta_1)$           & 0.2934 & 0.1769 &-0.1726  & $\zeta = 1.85$,\quad$p =0.85$&0.2635 & 0.1568 &-0.1528 &$  \zeta = 1.65$,\quad$p =0.8$\\
 $\tilde{c}^{(2)}_{n,\tau,\text{unif}}(m_n,v_n,\vec \zeta_2)$         & 0.1824 & 0.0861 &0.0106  &$\vec \zeta = (1.6,1.9)$,\quad$\vec p =(0.5,0.9)$&0.1983 & 0.1046 &0.0631 &$\vec \zeta = (1.9,1.25)$,\quad$\vec p =(0.6,0.15)$\\
\hline
\hline
 &\multicolumn{4}{|c|}{$Y=1.6$} & \multicolumn{4}{|c|}{$Y=1.75$} \\
\hline
\textbf{Estimator} & \textbf{RMSE} & \textbf{ARE} &\textbf{RE}&\tightbf{Tuning Param.}
& \textbf{RMSE} & \textbf{ARE} &\textbf{RE} &\tightbf{Tuning Param.}\\
\hline
 $\hat \sigma^2_{\tau,n}(u_n,h)$       &  0.4551 & 0.2604 & 0.2534 &-& 0.7784 &0.4830 &0.4827&-\\
 $\tilde \sigma^2_{\tau,n}(\zeta,u_n,h)$   & 0.3264 & 0.1760 & 0.1013 &$\zeta = 1.9$,\quad$p = 0.1$& 0.5361 & 0.2984 &0.2590&$\zeta = 1.9$,\quad$p = 0.25$\\
 $\hat c_{n,\tau,\text{exp}}(m_n,v_n)$ & \tightbf{0.1121} &\tightbf{0.0595}& -0.0036 &-& 0.2785 & 0.1702 &0.1688&-\\
 $\tilde{c}^{(1)}_{n,\tau,\text{exp}}(m_n,v_n,\vec \zeta_1)$          & \tightbf{0.1121} &\tightbf{0.0595} & -0.0036&$ \zeta = 1.75$,\quad$p = 0.75$ &0.1783 & 0.0978 &0.0706&$ \zeta =1.65$,\quad$p =0.1$\\
 $\tilde{c}^{(2)}_{n,\tau,\text{exp}}(m_n,v_n,\vec \zeta_2)$         &  \tightbf{0.1121} &\tightbf{0.0595} & -0.0036 &$\vec \zeta = (1.9,1.75)$,\quad$\vec p =(0.7,0.25)$& \tightbf{0.1640} & \tightbf{0.0885} &\tightbf{0.0412}&$\vec \zeta = (1.8,1.55)$,\quad$\vec p =(0.2,0.35)$\\
 $\hat c_{n,\tau,\text{unif}}(m_n,v_n)$ & 0.1490 &0.0796 & -0.0012 &-& 0.2992 & 0.1749 &0.1684&-\\
 $\tilde{c}^{(1)}_{n,\tau,\text{unif}}(m_n,v_n,\vec \zeta_1)$          &0.1490 & 0.0796 & -0.0012&$ \zeta = 1.9$,\quad$p =0.65$ &0.2265 & 0.1240 &0.0892&$ \zeta = 1.8$,\quad$p =0.1$\\
 $\tilde{c}^{(2)}_{n,\tau,\text{unif}}(m_n,v_n,\vec \zeta_2)$         & 0.1490 & 0.0796 &\tightbf{-0.0011} &$\vec \zeta = (1.9,1.9)$,\quad$\vec p =(0.6,0.15)$&0.2323 & 0.1279 &0.0983&$\vec \zeta = (1.9,1.9)$,\quad$\vec p =(0.1,0.2)$\\
\hline
\end{tabular}
}
\end{table}

\begin{table}[htb!]
\centering
\caption{Estimation performance at 1-minute sampling frequency, based on a set of $M=1000$ paths. The simulated data were generated at over a $T=1$ month horizon using the model setup of \cite{LIU2018}. The indicated tuning parameters are from the grid search and correspond to the values of  $\zeta,p$ for $\tilde \sigma^2_{\tau,n}$ and to the values of $\vec \zeta,\vec p$ for $c_{n,\tau}^{(i)}$, $i=1,2.$ }\label{Table2}
\centering
\scalebox{0.87}{
\begin{tabular}{|c|c|c|c|>{\centering\arraybackslash}p{2cm}|c|c|c|>{\centering\arraybackslash}p{2cm}|}
\hline
 &\multicolumn{4}{|c|}{$Y=0.8$} & \multicolumn{4}{|c|}{$Y=1.2$} \\
\hline
\textbf{Estimator} & \textbf{RMSE} & \textbf{ARE} &\textbf{RE}&\tightbf{Tuning Param.}
& \textbf{RMSE} & \textbf{ARE} &\textbf{RE} &\textbf{\tightbf{Tuning Param.}} \\
\hline
 $\hat \sigma^2_{\tau,n}(u_n,h)$       & 0.1373 & 0.0725 & \tightbf{0.0028} &-& 0.1434 & 0.0754 & \tightbf{0.0222} &-\\
 $\tilde \sigma^2_{\tau,n}(\zeta,u_n,h)$  &0.1384 & 0.0729 & 0.0043&$\zeta = 1.85$,\quad$p =0.8$&0.1441 & 0.0758 & 0.0242&$\zeta = 1.85$,\quad$p =0.9$\\
 $\hat c_{n,\tau,\text{exp}}(m_n,v_n)$ &0.1682 & 0.1016 &-0.0998&- &0.1494 & 0.0888 & -0.0873 &-\\
 $\tilde{c}^{(1)}_{n,\tau,\text{exp}}(m_n,v_n,\vec \zeta_1)$          & 0.1684 & 0.1016 & -0.0990 &$\zeta = 1.55$,\quad$p =0.9$ & 0.1495 & 0.0888 &  -0.0874 &$ \zeta = 1.7$,\quad$p =0.8$\\
 $\tilde{c}^{(2)}_{n,\tau,\text{exp}}(m_n,v_n,\vec \zeta_2)$         &  \tightbf{0.0760} & \tightbf{0.0401} & 0.0050 &$\vec \zeta = (1.7,1.9)$,\quad$\vec p =(0.5,0.75)$ & \tightbf{0.0859} & \tightbf{0.0455} & 0.0251 &$\vec \zeta = (1.9,1.25)$,\quad$\vec p = (0.6,0.5)$\\
 $\hat c_{n,\tau,\text{unif}}(m_n,v_n)$& 0.1788 & 0.1043 &-0.1002&-&0.1610 & 0.0920 &-0.0871 &-\\
 $\tilde{c}^{(1)}_{n,\tau,\text{unif}}(m_n,v_n,\vec \zeta_1)$           & 0.1792 & 0.1044 &-0.1002  &$
  \zeta = 1.65$,\quad$p = 0.8$ &0.1610 & 0.0920 &-0.0871 &$  \zeta = 1.65$,\quad$p = 0.75$\\
 $\tilde{c}^{(2)}_{n,\tau,\text{unif}}(m_n,v_n,\vec \zeta_2)$        & 0.1060 & 0.0567 &0.0050  &$\vec \zeta = (1.7,1.9)$,\quad$\vec p = (0.5,0.75)$&0.1184 & 0.0625 &0.0293 &$\vec \zeta = (1.9,1.75)$,\quad$\vec p =(0.4,0.8)$\\
\hline
\hline
 &\multicolumn{4}{|c|}{$Y=1.6$} & \multicolumn{4}{|c|}{$Y=1.75$} \\
\hline
\textbf{Estimator} & \textbf{RMSE} & \textbf{ARE} &\textbf{RE}&\textbf{\tightbf{Tuning Param.}}
& \textbf{RMSE} & \textbf{ARE} &\textbf{RE} &\textbf{\tightbf{Tuning Param.}}\\
\hline
 $\hat \sigma^2_{\tau,n}(u_n,h)$        & 0.3157& 0.1853 & 0.1823 &-&0.6148 &0.3959 &0.3950&-\\
 $\tilde \sigma^2_{\tau,n}(\zeta,u_n,h)$   & 0.2456 & 0.1324 & 0.0910 &$\zeta = 1.85$,\quad$p =0.1$ & 0.5286 &0.3575 &0.0907&$\zeta = 1.85$,\quad$p =0.55$\\
 $\hat c_{n,\tau,\text{exp}}(m_n,v_n)$ & 0.0965 & 0.0533 & 0.0406 &-& 0.3364 & 0.2209 &0.2209&-\\
 $\tilde{c}^{(1)}_{n,\tau,\text{exp}}(m_n,v_n,\vec \zeta_1)$          & \tightbf{0.0823} & \tightbf{0.0440} & \tightbf{0.0036}&$ \zeta = 1.9,\: p =0.1$ & 0.1563 & 0.0882 &0.0703&$ \zeta = 1.75,\: p =0.25$\\
 $\tilde{c}^{(2)}_{n,\tau,\text{exp}}(m_n,v_n,\vec \zeta_2)$         & \tightbf{0.0823} & \tightbf{0.0440} & \tightbf{0.0036} &$\vec \zeta = (1.9,1.65)$,\quad$\vec p =(0.1,0.3)$& \tightbf{0.1423} & \tightbf{0.0776} &\tightbf{0.0253}&$\vec \zeta = (1.4,1.9)$,\quad$\vec p =(0.3,0.2)$\\
 $\hat c_{n,\tau,\text{unif}}(m_n,v_n)$ &0.1198 & 0.0652 &0.0415 &-&0.3464 & 0.2208 &0.2207&-\\
 $\tilde{c}^{(1)}_{n,\tau,\text{unif}}(m_n,v_n,\vec \zeta_1)$          &0.1132 & 0.0605 &-0.0053&$ \zeta = 1.7$,\quad$p =0.1$ &0.2076 & 0.1168 &0.0967&$ \zeta = 1.4$,\quad$p =0.2$\\
 $\tilde{c}^{(2)}_{n,\tau,\text{unif}}(m_n,v_n,\vec \zeta_2)$         &0.1102 &  0.0591 & 0.0044 &$\vec \zeta = (1.9,1.65)$,\quad$\vec p =(0.1,0.3)$&0.2181 & 0.1243 &0.1108&$\vec \zeta = (1.5,1.85)$,\quad$\vec p =(0.2,0.2)$\\
\hline
\end{tabular}
}
\end{table}

\subsection{Sensitivity to the truncation level $v_n$} \label{Sct42}
In this section, we consider a different parameter setting for the Heston model, which is borrowed from the literature, but again we take a stable L\'evy process as the jump component. In addition, we study the effect of the truncation level $v_n$ on the performance of the estimator.

For this experiment, we 
assume that $V$ follows the same Heston model \eqref{HestonModGen} with the annualized parameter values as in \cite{ZhangMk} and \cite{BONIECE2024}:
\begin{equation}\label{Kawai}
\rho = -0.5,\quad \kappa = 5,\quad\xi=0.5,\quad\theta=0.16.
\end{equation}
In particular, we note that the annualized average volatility is $\sqrt{.16}=.4$, which is more representative of typical financial-data calibrations. Assuming 252 trading days per year, and a 6.5 hour trading day, we set $\Delta_n = (252*6.5*12)^{-1}$, corresponding to a sampling frequency of  5 minutes. We consider a time horizon of 3 months  ($T = \frac{1}{4}$), which yields a sample size of $ n = T/\Delta_n=4914$. In our experiments,  we examine the performance in the cases $Y=1.6$ and $Y=1.75$,
%, corresponding to 1 and 2 steps of debiasing required for efficiency, 
and use an exponential kernel for our estimators $\hat c_n(m_n,v_n)$, $\tilde c^{(1)}_{m,\tau}(m_n,v_n,\vec\zeta_1)$ and $\tilde c^{(2)}_{n,\tau}(m_n,v_n,\vec\zeta_2)$.   The tuning procedure is similar to that in the previous section, with $\vec\zeta_1$, $\vec\zeta_2$, $\zeta$, $p$, $p_1$, and $(p_1,p_2)$ being selected via a grid search that minimizes $\widehat{RMSE}$ over 100 iterations (using the same grid for $p$ as described in the previous section  and an extended grid for the $\zeta$ parameters over $[1.2, 6]$ with a step size of $0.2$). We also take $v_n = \sqrt{BV}v_0$, for different $v_0$ to assess the sensitivity of our methods to the threshold choice. We report the performance of our estimators across various threshold choices $v_0$  in Table \ref{Table3} $(Y=1.6)$ and in Table \ref{Table4} $(Y=1.75)$ as measured by $\widehat{RMSE}$, $\widehat{ARE}$ and $\widehat{RE}$ using $M=1000$ paths. In each table, we also include the performance of the estimators \eqref{LiuEstDfn0} proposed in \cite{LIU2018} for comparison. We also report the values of the auxiliary tuning parameters found in our grid search.

As shown in Tables \ref{Table3} and \ref{Table4}, the threshold-based kernel spot volatility estimators are competitive with, and often substantially improve upon, the characteristic-function estimators of \cite{LIU2018}. The gains are especially clear for $Y=1.6$, where the one- and two-step debiased truncated-kernel estimators attain the lowest RMSE and ARE over the threshold choices considered, with particularly strong performance at $v_0=\Delta_n^{20/48}$ and $v_0=\Delta_n^{21/48}$. For $Y=1.75$, the no-debiasing estimator exhibits substantial positive bias, while one-step debiasing leads to large reductions in RMSE, ARE, and RE across all threshold choices. The two-step correction further reduces signed relative error in several cases, although this bias reduction may come with a modest increase in RMSE or ARE relative to the one-step estimator. Overall, the results illustrate the sensitivity of finite-sample performance to the truncation level, while supporting the practical value of threshold-based debiased kernel estimation in high-activity jump settings.

\begin{table}[htbp]
\centering
\caption{Estimation performance across $v_0$ with $Y=1.6$ under the setting \eqref{Kawai}. Results based on $M=1000$ paths at a 5-minute sampling frequency using the exponential kernel. The indicated tuning parameters are obtained through an exhaustive grid search and correspond to the values of $\zeta,p$ for $\tilde \sigma^2_{\tau,n}$ and to the values of $\vec \zeta,\vec p$ for $c_{n,\tau}^{(i)}$, $i=1,2.$}\label{Table3}
\centering
\begin{tabular}{|c|c|c|c|c|c|}
\hline
\textbf{Threshold} & \textbf{Estimator} & \textbf{RMSE} & \textbf{ARE} & \textbf{RE} & \textbf{Tuning Param.}\\
\hline
\multirow{3}{*}{$v_0 = \Delta_n^{19/48}$}
& $\hat c_{n,\tau}(m_n,v_n)$ & 0.05579 & 0.38661 & 0.38635 & -\\
& $c^{(1)}_{n,\tau}(m_n,v_n,\vec \zeta_1)$ & 0.03975 & 0.21198 & 0.1709 & $\zeta=2,\:p=0.2$\\
& $c^{(2)}_{n,\tau}(m_n,v_n,\vec \zeta_2)$ & 0.03552 & 0.19077 & 0.09526 & $\vec\zeta=(4,1.6),\:\vec p = (0.1,0.2)$\\
\hline
\multirow{3}{*}{$v_0 = \Delta_n^{20/48}$}
& $\hat c_{n,\tau}(m_n,v_n)$ & 0.03675 & 0.25636 & 0.24843 & -\\
& $c^{(1)}_{n,\tau}(m_n,v_n,\vec \zeta_1)$ & 0.02656 & 0.14973 & 0.06264 & $\zeta=4.2,\:p=0.1$\\
& $c^{(2)}_{n,\tau}(m_n,v_n,\vec \zeta_2)$ & 0.02656 & 0.14973 & 0.06264 & $\vec\zeta=(4.2,1.6),\:\vec p = (0.1,0.1)$\\
\hline
\multirow{3}{*}{$v_0 = \Delta_n^{21/48}$}
& $\hat c_{n,\tau}(m_n,v_n)$ & 0.02848 & 0.1486 & 0.02963 & -\\
& $c^{(1)}_{n,\tau}(m_n,v_n,\vec \zeta_1)$ & 0.02848 & 0.1486 & 0.02963 & $\zeta=2.8,\:p=0.6$\\
& $c^{(2)}_{n,\tau}(m_n,v_n,\vec \zeta_2)$ & 0.02745 & 0.15011 & 0.05047 & $\vec\zeta=(2.4,2.8),\:\vec p = (0.6,0.1)$\\
\hline
\multirow{2}{*}{-}
& $\hat \sigma^2_{\tau,n}(u_n,h)$ & 0.07602 & 0.50203 & 0.50173 & -\\
& $\tilde \sigma^2_{\tau,n}(\zeta,u_n,h)$ & 0.04393 & 0.27202 & 0.22357 & $\zeta=2.4,\:p=0.2$\\
\hline
\end{tabular}
\end{table}

\begin{table}[htbp]
\centering
\caption{Estimation performance across $v_0$ with $Y=1.75$ under the setting \eqref{Kawai}. Results based on $M=1000$ paths at a 5-minute sampling frequency. The indicated tuning parameters are obtained through an exhaustive grid search and correspond to the values of $\zeta,p$ for $\tilde \sigma^2_{\tau,n}$ and to the values of $\vec \zeta,\vec p$ for $c_{n,\tau}^{(i)}$, $i=1,2.$} \label{Table4}
\centering
\begin{tabular}{|c|c|c|c|c|c|}
\hline
\textbf{Threshold} & \textbf{Estimator} & \textbf{RMSE} & \textbf{ARE} & \textbf{RE} & \textbf{Tuning Param.}\\
\hline
\multirow{3}{*}{$v_0 = \Delta_n^{19/48}$}
& $\hat c_{n,\tau}(m_n,v_n)$ & 0.13 & 0.9135 & 0.9135 & -\\
& $c^{(1)}_{n,\tau}(m_n,v_n,\vec \zeta_1)$ & 0.07234 & 0.41499 & 0.20019 & $\zeta=4.6,\:p=0.1$\\
& $c^{(2)}_{n,\tau}(m_n,v_n,\vec \zeta_2)$ & 0.094 & 0.45313 & -0.02132 & $\vec\zeta=(5,1.6),\:\vec p = (0.1,0.2)$\\
\hline
\multirow{3}{*}{$v_0 = \Delta_n^{20/48}$}
& $\hat c_{n,\tau}(m_n,v_n)$ & 0.09955 & 0.70777 & 0.70772 & -\\
& $c^{(1)}_{n,\tau}(m_n,v_n,\vec \zeta_1)$ & 0.04795 & 0.29335 & 0.22465 & $\zeta=2.6,\:p=0.2$\\
& $c^{(2)}_{n,\tau}(m_n,v_n,\vec \zeta_2)$ & 0.0513 & 0.28237 & 0.13339 & $\vec\zeta=(5.2,1.4),\:\vec p = (0.1,0.2)$\\
\hline
\multirow{3}{*}{$v_0 = \Delta_n^{21/48}$}
& $\hat c_{n,\tau}(m_n,v_n)$ & 0.05919 & 0.4323 & 0.42548 & -\\
& $c^{(1)}_{n,\tau}(m_n,v_n,\vec \zeta_1)$ & 0.03768 & 0.22193 & 0.10067 & $\zeta=5.2,\:p=0.1$\\
& $c^{(2)}_{n,\tau}(m_n,v_n,\vec \zeta_2)$ & 0.03768 & 0.22193 & 0.10067 & $\vec\zeta=(5.2,1.6),\:\vec p = (0.1,0.1)$\\

\hline
\multirow{2}{*}{-}
& $\hat \sigma^2_{\tau,n}(u_n,h)$ & 0.15731 & 1.067 & 1.067 & -\\
& $\tilde \sigma^2_{\tau,n}(\zeta,u_n,h)$ & 0.04424 & 0.27552 & 0.17311 & $\zeta=3,\:p=0.1$\\
\hline
\end{tabular}
\end{table}

\subsection{Guidance on the choice of $\zeta$}

The scale parameter $\zeta$ plays an important role in the finite-sample behavior of the debiasing procedure. Its choice is nontrivial because the estimator in \eqref{ItrDebProc} depends nonlinearly on the estimated scale-dependent coefficient $\widehat A_n^{(k)}$. To isolate the main effect of $\zeta$, we first consider the one-step debiasing problem and replace the estimated coefficient by its population counterpart.

Indeed, the first-order jump-induced bias of $\hat c(m_n,v_n)$ is proportional to $v_n^{2-Y}$. Hence the corresponding leading-order difference between the estimators computed at thresholds $v_n$ and $\zeta v_n$ is proportional to
\[
    (\zeta v_n)^{2-Y}-v_n^{2-Y}
    =
    (\zeta^{2-Y}-1)v_n^{2-Y}.
\]
This motivates the semi-oracle, infeasible estimator
\begin{equation}\label{ItrDebProcOrc}
\begin{aligned}
\hat{c}^{\mathrm{orcl}}_{n,\tau}(v_n,\zeta)
&:= \hat{c}(v_n)
-\frac{1}{\zeta^{2-Y}-1}\{\hat c(\zeta v_n)-\hat c(v_n)\},
\end{aligned}
\end{equation}
where the dependence of $\hat c$ on $m_n$ and $\tau$ is suppressed, since these quantities are fixed throughout this discussion.

We use this semi-oracle estimator only to obtain heuristic guidance for the choice of $\zeta$, not to develop an optimality theory for the estimator in \eqref{ItrDebProc}. Indeed, the estimator in \eqref{ItrDebProc} replaces the population coefficient $1/(\zeta^{2-Y}-1)$ by an estimated, scale-dependent coefficient $\widehat A_n^{(1)}$, and its finite-sample behavior may therefore differ from that of \eqref{ItrDebProcOrc}. Nevertheless, the semi-oracle analysis provides a potentially useful way to understand the basic bias--variance tradeoff induced by $\zeta$.

The following result describes the leading bias--variance tradeoff for the semi-oracle estimator \eqref{ItrDebProcOrc} in the L\'evy case, where $b$, $\sigma$, and $\chi$ are constants and $\delta\equiv0$ in \eqref{model}. The proof is given in Appendix \ref{PrfProp2}.
\begin{proposition}\label{propTrdeOff}
	 Under the asymptotic regime stated in Theorem \ref{thm_clt_difference}, we have the following expansions for the bias and variance of $\hat c^{(\mathrm{orcl})}_{n,\tau}(v_n,\zeta)$:
	\begin{align}\label{BsOrcl}
		\EE\left(\hat{c}^{(\mathrm{orcl})}_{n,\tau}(v_n,\zeta)\right)-\sigma^2&=d_1 \sigma^2|\chi|^Y\frac{\zeta^{2-Y}-\zeta^{-Y}}{\zeta^{2-Y}-1}\Delta_n v_{n}^{-Y}\left(1+o(1)\right),\\
		\label{VarOrcl}
		{\rm Var}\left(\hat{c}^{(\mathrm{orcl})}_{n,\tau}(v_n,\zeta)\right)
		&=\frac{2\sigma^4}{m_n}\|K\|_2^2+c_2|\chi|^Y\left(1+\frac{\zeta^{4-Y}-1}{(\zeta^{2-Y}-1)^2}\right)\frac{v_{n}^{4-Y}}{m_n\Delta_n}\left(1+o(1)\right),
	\end{align}
where $d_1:=\frac{(C_++C_-)(Y+1)(Y+2)}{2Y}$ and $c_{2} := \frac{(C_++C_-)}{4-Y}$.
\end{proposition}

The expansion in Proposition \ref{propTrdeOff} separates the role of $\zeta$ in the remaining bias and variance after one oracle debiasing step. The leading bias factor ${\frac{\zeta^{2-Y}-\zeta^{-Y}}{\zeta^{2-Y}-1}}$
decreases with $\zeta$, while the additional variance factor
${1+\frac{\zeta^{4-Y}-1}{(\zeta^{2-Y}-1)^2}}$ 
has an interior minimum. The left panel of Figure \ref{MSEVarPlots} plots this variance factor for several values of $Y$; although the level varies substantially with $Y$, the minimizing value of $\zeta$ is relatively stable, typically around $2$--$2.5$. The right panel plots the corresponding leading-order MSE approximation,
$$
    \left(\frac{\zeta^{2-Y}-\zeta^{-Y}}{\zeta^{2-Y}-1}\right)^2
    \Delta_n^2 v_n^{-2Y}
    +
    \left(1+\frac{\zeta^{4-Y}-1}{(\zeta^{2-Y}-1)^2}\right)
    \frac{v_n^{4-Y}}{m_n\Delta_n},
$$
using $m_n=\Delta_n^{-1/2}$, the threshold choice $v_n$ from Section \ref{SubSectSimulation}, and $\Delta_n$ corresponding to one-minute sampling. In this MSE calculation, the bias reduction from larger $\zeta$ shifts the optimum to a larger value, around $\zeta\approx 8$ in the displayed examples. Thus, while smaller values of $\zeta$ may be variance-favorable, moderate-to-large values can potentially be preferable once residual bias is taken into account.
\begin{figure}[htp]
    {\par \centering
    \includegraphics[width=0.8\linewidth]{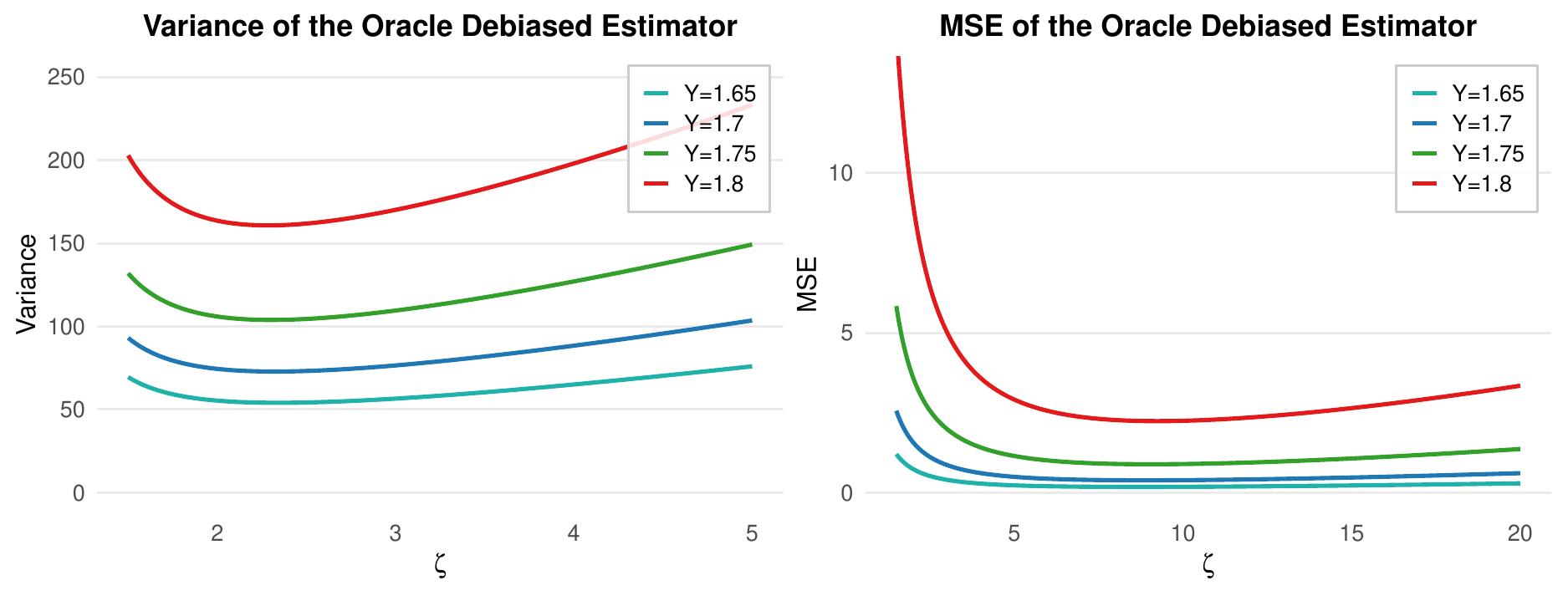}
       \par}
    \caption{\label{MSEVarPlots}(Left Panel) Plot of $\left(1+\frac{\zeta^{4-Y}-1}{(\zeta^{2-Y}-1)^2}\right)$ against $\zeta$; (Right Panel) Plot of ${\rm MSE}=\left(\frac{\zeta^{2-Y}-\zeta^{-Y}}{\zeta^{2-Y}-1}\right)^2\Delta_n^2 v_{n}^{-2Y}+\left(1+\frac{\zeta^{4-Y}-1}{(\zeta^{2-Y}-1)^2}\right)\frac{v_{n}^{4-Y}}{m_n\Delta_n}$ for $m_n=\Delta_n^{-1/2}$, $v_n$ as selected in Section \ref{SubSectSimulation}, and $\Delta_n=1\text{ min}$, frequency.}
\end{figure}

\subsection{A plug-in debiasing approach}

The oracle debiasing coefficient in \eqref{ItrDebProcOrc} depends on the jump activity index $Y$, which is unknown in practice. Here $Y$ is not itself the target of inference, but rather a nuisance parameter entering the leading bias correction. This motivates a plug-in approach: estimate $Y$ from the scale dependence of the first-step debiasing coefficient, and then use the resulting estimate in the debiasing formula.

Recall that the coefficient $\widehat A^{(1)}_n(\zeta):=\widehat A^{(1)}_n(m_n,v_n,\zeta)$ in \eqref{ItrDebProc} is designed to estimate the population quantity
\[
    A(\zeta;Y):=\frac{1}{\zeta^{2-Y}-1}.
\]
Thus, as $\zeta$ varies, the function $\zeta\mapsto \widehat A^{(1)}_n(\zeta)$ contains information about the unknown activity index $Y$. We therefore consider the grid-based estimator
\begin{equation}\label{EstHatY0}
    \widehat{Y}
    :=
    \underset{Y\in\mathcal{I}}{\operatorname{argmin}}
    \sum_{\zeta\in\Lambda}
    \left|\widehat{A}^{(1)}_n(\zeta)-A(\zeta;Y)\right|,
\end{equation}
where $\mathcal{I}\subset(1,2)$ is a finite grid and $\Lambda$ is a finite grid of scale parameters. Given $\widehat Y$, we define the plug-in one-step debiased estimator
\begin{equation}\label{Est2}
\begin{aligned}
\widehat{\widehat{c}}_{n,\tau}(v_n,\zeta)
&:= \hat{c}(v_n)
-\frac{1}{\zeta^{2-\widehat{Y}}-1}
\{\hat c(\zeta v_n)-\hat c(v_n)\}.
\end{aligned}
\end{equation}
This estimator should be viewed as a feasible approximation to the semi-oracle estimator \eqref{ItrDebProcOrc}: it avoids requiring prior knowledge of $Y$, while retaining the same leading-order form of the bias correction. In the numerical experiments below, we assess its finite-sample behavior through its sensitivity to the tuning parameter $\zeta$.

\subsection{Finite-sample sensitivity to $\zeta$}

We next study how the finite-sample MSE varies with $\zeta$ for the plug-in estimator \eqref{Est2}, the one-step estimator \eqref{ItrDebProc}, and the characteristic-function estimator \eqref{LiuEstDfn1}. 

Both classes of debiased estimators involve the scale parameter $\zeta$ and an auxiliary parameter $p$, which is used to estimate the corresponding scale-dependent correction coefficient. As already noted in \cite{Jacod2014}, smaller values of $p$ may better capture local scale behavior, although they may also lead to greater sensitivity with respect to $\zeta$.

We first consider the same data-generating process as in Section \ref{Sct42}, with $Y=1.75$. The target is the spot variance $c_\tau=\sigma_\tau^2$ at $\tau=T/2$, with $T=1/4$ year. Figure \ref{MSEDiffp175} reports
$\mathrm{MSE} = \frac{1}{M}\sum_{j=1}^M(\widehat c_{\tau,j}-c_{\tau,j})^2$
based on $M=100$ simulated paths and one-minute observations. The threshold parameter $v_n$, the scale parameter $u_n$, and the bandwidth $m_n$ are chosen as in Sections \ref{SubSectSimulation} and \ref{Sct42}.

The left panel of Figure \ref{MSEDiffp175} shows the truncation-based estimators. The unadjusted estimator $\hat c_{\tau,n}(m_n,v_n)$ provides a useful baseline, while the one-step estimator $\widetilde c^{(1)}_{\tau,n}(m_n,v_n,\zeta)$ can reduce MSE for appropriate choices of $\zeta$. The plug-in estimator \eqref{Est2}, which uses $\widehat Y$ only through the debiasing coefficient, is comparatively stable over a wider range of $\zeta$. The right panel shows the characteristic-function estimators. These estimators also improve over their unadjusted version for suitable $\zeta$, but the improvement is more localized, with performance deteriorating outside a narrower range of $\zeta$ values.  Overall, the comparison suggests that the truncation-based debiasing procedures considered here are less sensitive to $\zeta$ than the characteristic-function-based alternatives in this setting. The relative performance of the plug-in approach also indicates that there may be further gains from refining the feasible truncation-based correction.

\begin{figure}[htb]
    {\par \centering
   \includegraphics[width=0.8\linewidth]{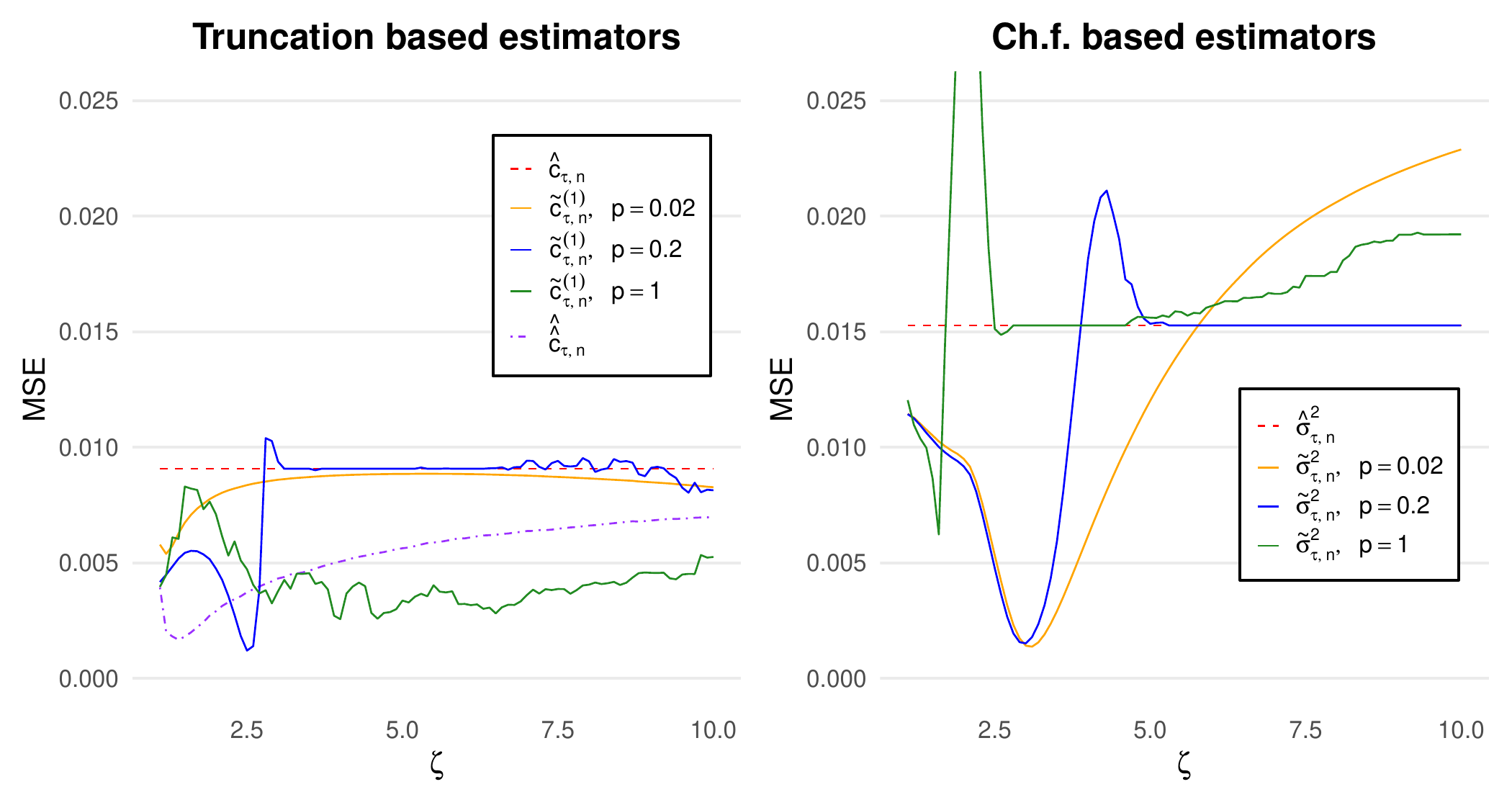}
    \par}
    \caption{\label{MSEDiffp175}
    Sensitivity of the MSE to $\zeta$ when $Y=1.75$. 
    Left panel: truncation-based estimators $\hat c_{\tau,n}(m_n,v_n)$ in \eqref{MEDHF}, 
    $\widetilde c^{(1)}_{\tau,n}(m_n,v_n,\zeta)$ in \eqref{ItrDebProc} 
    for $p\in\{0.02,0.2,1\}$, and the plug-in estimator 
    $\widehat{\widehat c}_{\tau,n}(v_n,\zeta)$ in \eqref{Est2}. 
    Right panel: characteristic-function estimators 
    $\hat \sigma^2_{\tau,n}(u_n,h)$ in \eqref{LiuEstDfn0} and 
    $\tilde \sigma^2_{\tau,n}(\zeta,u_n,h)$ in \eqref{LiuEstDfn1} 
    for $p\in\{0.02,0.2,1\}$. 
    Here $\tau=1/8$, and the results are based on $M=100$ simulated paths using one-minute observations over $[0,T]=[0,1/4]$.}
\end{figure}

Figure \ref{MSEDiffp160} repeats the same experiment for $Y=1.6$, with all other parameters unchanged. The qualitative pattern is similar, but the lower jump activity makes the truncation-based procedures especially stable. The unadjusted estimator $\hat c_{\tau,n}(m_n,v_n)$ already performs well relative to the unadjusted characteristic-function estimator $\hat \sigma^2_{\tau,n}(u_n,h)$. The plug-in estimator \eqref{Est2} has low MSE over a broad range of $\zeta$, while the fully data-driven one-step estimator \eqref{ItrDebProc} also performs well for suitable choices of $\zeta$. By contrast, the debiased characteristic-function estimator remains more sensitive to $\zeta$, with improvements concentrated over a narrower range of tuning values. These findings suggest that the plug-in correction can reduce the practical dependence of the debiasing procedure on prior knowledge of $Y$ while retaining good finite-sample performance; we leave further study of plug-in debiasing approaches to future work.

\begin{figure}[htb]
    {\par \centering
   \includegraphics[width=0.8\linewidth]{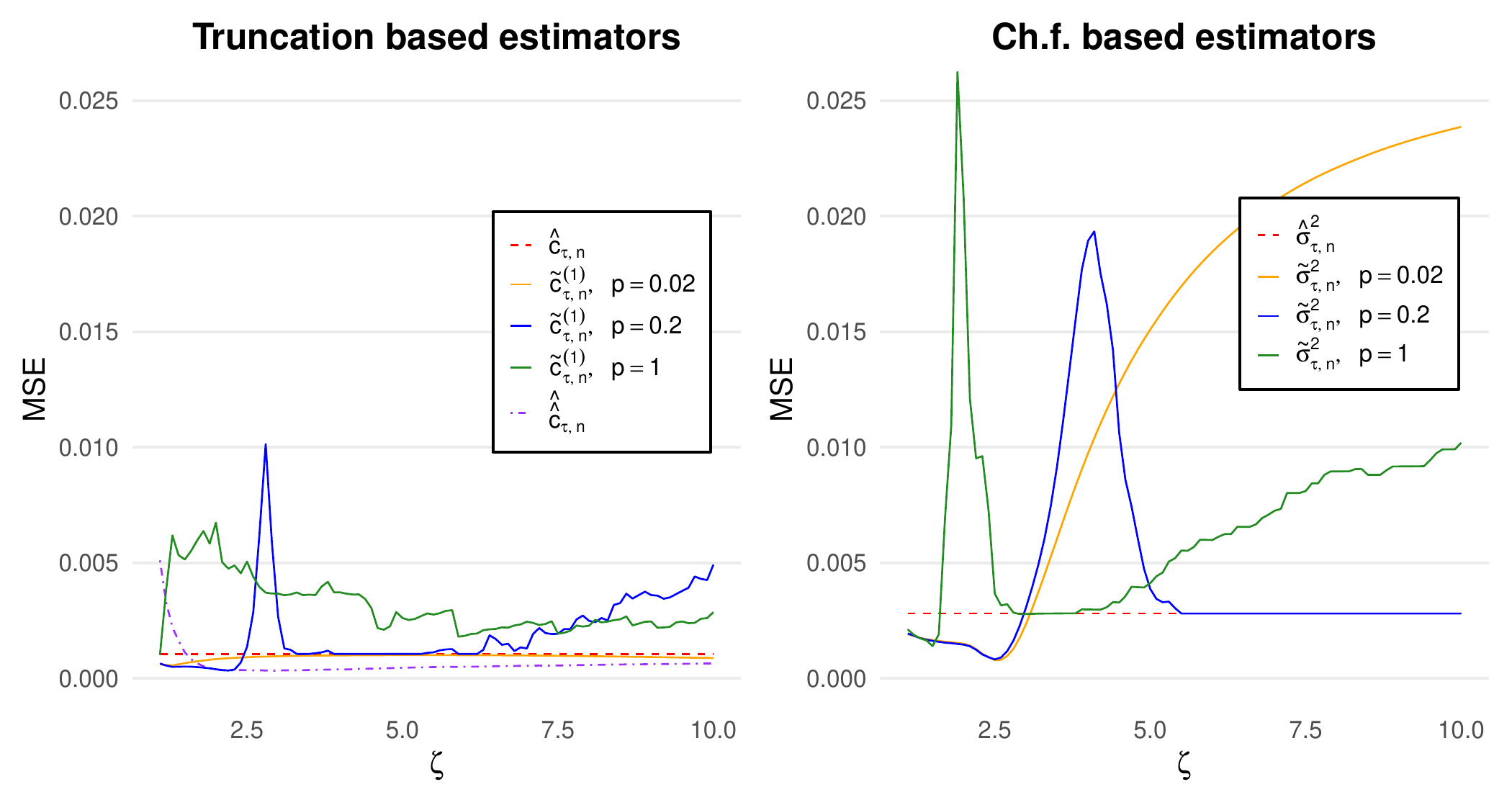}
    \par}
    \caption{\label{MSEDiffp160}
    Sensitivity of the MSE to $\zeta$ when $Y=1.6$. 
    Left panel: truncation-based estimators $\hat c_{\tau,n}(m_n,v_n)$ in \eqref{MEDHF}, 
    $\widetilde c^{(1)}_{\tau,n}(m_n,v_n,\zeta)$ in \eqref{ItrDebProc} 
    for $p\in\{0.02,0.2,1\}$, and the plug-in estimator 
    $\widehat{\widehat c}_{\tau,n}(v_n,\zeta)$ in \eqref{Est2}. 
    Right panel: characteristic-function estimators 
    $\hat \sigma^2_{\tau,n}(u_n,h)$ in \eqref{LiuEstDfn0} and 
    $\tilde \sigma^2_{\tau,n}(\zeta,u_n,h)$ in \eqref{LiuEstDfn1} 
    for $p\in\{0.02,0.2,1\}$. 
    Here $\tau=1/8$, and the results are based on $M=100$ simulated paths using one-minute observations over $[0,T]=[0,1/4]$.}
\end{figure}

\appendix

\section{Main proofs}\label{MainPrfsa}

\subsection{Proof of Theorem \ref{clt}}
Denote
\begin{align*}
\bar{K}_{b_n}(\tau)&:=\Delta_n\sum_{j=1}^nK_{b_n}(t_{j-1}-\tau),\\
Y_i &:= \sqrt{m_n}K_{m_n\Delta_n}(t_{i-1}-\tau)(\Delta_iX)^2\mathbf{1}_{\{|\Delta_iX|\leq  v_n\}},\\
    Y'_i &:= \frac{\Delta_n}{\sqrt{m_n\Delta_n}}\times\left\{
    \begin{aligned}
        &0, \;&\text{if } i=1,\\
        -&\Big(\sum_{l=1}^{i-1}K_{m_n\Delta_n}(t_{l-1}-\tau)\Big)\Delta_iB,\; &\text{if } 2\leq i\leq \ceil{\tau/\Delta_n},\\
       &\Big(\sum_{l=i}^{n}K_{m_n\Delta_n}(t_{l-1}-\tau)\Big)\Delta_iB,\; &\text{if } \ceil{\tau/\Delta_n} < i \leq n.
    \end{aligned}\right.
\end{align*}
Consider the following decompositions:
\begin{align}\label{TrmT1T2}
    \tilde Z_n(m_n, v_n)  
    &=\Big(\frac{\sum_{i=1}^n Y_i}{\bar{K}_{m_n\Delta_n}(\tau)} - \frac{\sum_{i=1}^n \EE(Y_i|\mathcal{F}_{i-1})}{\bar{K}_{m_n\Delta_n}(\tau)}\Big) \nonumber\\
    &\qquad+ \Big(\frac{\sum_{i=1}^n \EE(Y_i|\mathcal{F}_{i-1})}{\bar{K}_{m_n\Delta_n}(\tau)} - \sqrt{m_n}\Delta_n\sum_{i=1}^n\frac{K_{m_n\Delta_n}(t_{i-1}-\tau)}{\bar{K}_{m_n\Delta_n}(\tau)}\sigma_{t_{i-1}}^2 -\sqrt{m_n}A(v_n,m_n)\Big)\nonumber\\
    &=:T_1 + T_2,
\end{align}
and
\begin{align}\label{TrmT3T4a}
    \tilde Z'_n(m_n) &= \tilde \sigma_{\tau}\frac{\sum_{i=1}^nY_i'}{\bar{K}_{m_n\Delta_n}(\tau)} + \Big({\tilde Z'_n(m_n)}- \frac{\tilde\sigma_{\tau}\sum_{i=1}^n Y_i'}{\bar{K}_{m_n\Delta_n}(\tau)}
    \Big)\nonumber\\
    &=\tilde\sigma_{\tau} \frac{\sum_{i=1}^nY_i'}{\bar{K}_{m_n\Delta_n}(\tau)}  \nonumber\\
    &\quad+ \Big(\frac{\Delta_n}{\sqrt{m_n\Delta_n}}\sum_{i=1}^{n}\frac{K_{m_n\Delta_n}(t_{i-1}-\tau)(\sigma^2_{t_{i-1}} - \sigma^2_\tau)}{\bar{K}_{m_n\Delta_n}(\tau)} - \tilde\sigma_{\tau} \frac{\sum_{i=1}^nY_i'}{\bar{K}_{m_n\Delta_n}(\tau)}\Big)\nonumber\\
    &=:T_3 + T_4.
\end{align}
We now proceed to show $(T_1,T_3)\xrightarrow{st} (Z_1, \tilde{\sigma}_\tau Z_2)$ by verifying the conditions of Theorem 2.2.15 in \cite{jacod2012discretization}, and in addition that $T_2$ and $T_4$ are asymptotically negligible.

\medskip
\noindent
{\bf a)} We start with $T_3$. Clearly, $\EE(Y'_i|\mathcal F_{i-1})  = 0$ and, using the moment formula for Gaussian distributions {and Lemma 3.1 in \cite{FIGUEROALOPEZ20204693}}, we have
\begin{align*}
  &  \sum_{i=1}^n {\var(Y'_i|\mathcal F_{i-1})}\\
  &= \frac{\Delta_n^2}{m_n\Delta_n} \sum_{i=1}^n \left(\Big(-\sum_{l=1}^{i-1}K_{m_n\Delta_n}(t_{l-1}-\tau)\Big)^2\Delta_n\mathbf{1}_{\tau\geq t_i} + \Big(\sum_{l=i}^{n}K_{m_n\Delta_n}(t_{l-1}-\tau)\Big)^2\Delta_n\mathbf{1}_{\tau< t_i}\right)\nonumber\\
    &\quad=  \frac{1}{m_n\Delta_n}\int_0^1 \Big(\int_{{v}}^{1}K_{m_n\Delta_n}({s}-\tau)ds\mathbf{1}_{\tau\geq {v}}-\int^{{v}}_{0}K_{m_n\Delta_n}({s}-\tau)ds\mathbf{1}_{\tau< {v}}\Big)^2 dv + o_P(1)\nonumber\\
    &\quad \xrightarrow{P} \int \Big(\int_t^{\infty}K(u)du\mathbf{1}_{t\geq 0}-\int^t_{-\infty}K(u)du\mathbf{1}_{t<0}\Big)^2 dt,\nonumber
\end{align*}
and
 \begin{align*}
    &\sum_{i=1}^n\EE({Y'}_i^4|\mathcal F_{i-1})\\
    &= \frac{3\Delta_n^4}{m_n^2\Delta_n^2} \sum_{i=1}^n \left(\Big(\sum_{l=1}^{i-1}K_{m_n\Delta_n}(t_{l-1}-\tau)\Big)^4\Delta_n^2\mathbf{1}_{\tau\geq t_i} + \Big(\sum_{l=i}^{n}K_{m_n\Delta_n}(t_{l-1}-\tau)\Big)^4\Delta_n^2\mathbf{1}_{\tau< t_{i}}\right)\nonumber\\
    &\quad =\frac{3}{m_n}\int \left(\int_t^{\infty}K(u)du\mathbf{1}_{t\geq 0}-\int^t_{-\infty}K(u)du\mathbf{1}_{t<0}\right)^4 dt + o_P(1)\xrightarrow{P} 0.\nonumber
\end{align*}
On the other hand, by Assumption \ref{asu_kernel} and Lemma 3.1 in \cite{FIGUEROALOPEZ20204693}, with $f\equiv 1$ and $m=1$ therein, we have
\begin{align}\label{kernel_convergence}
    \Delta_n\sum_{j=1}^nK_{m_n\Delta_n}(t_{j-1}-\tau) - 1 = %
    {O(m_n^{-1})},
\end{align}
which vanishes as $n\rightarrow \infty$. Therefore,
\begin{gather*}
\frac{\sum_{i=1}^n {\var(Y'_i|\mathcal F_{i-1})}}{{\big(\Delta_n\sum_{j=1}^nK_{m_n\Delta_n}(t_{j-1}-\tau)\big)^2}} \xrightarrow{P} \int \Big(\int_t^{\infty}K(u)du\mathbf{1}_{t\geq 0}-\int^t_{-\infty}K(u)du\mathbf{1}_{t<0}\Big)^2 dt,\\
\text{and }\;\frac{\sum_{i=1}^n\EE({Y'}_i^4|\mathcal F_{i-1})}{\big(\Delta_n\sum_{j=1}^nK_{m_n\Delta_n}(t_{j-1}-\tau)\big)^4}\rightarrow 0.
\end{gather*}
To conclude $T_3 \xrightarrow{ st}  \tilde{\sigma}_\tau Z_2$, %
from  Theorem 2.2.15 in \cite{jacod2012discretization} it suffices to show the following technical condition:
\begin{align}\label{initial_technicalYprime}
    \sum_{i=1}^n\EE_{i-1}\big(Y'_i\Delta_iM\big) \rightarrow 0,
\end{align}
where $\EE_{i-1}(\cdot) = \EE(\cdot|\mathcal{F}_{i-1})$ and $M$ is either $W$ or $B$ or is in the set $\mathcal N$ containing all bounded martingales orthogonal to $W$ and $B$. %
When $M = B$, %
we have
\begin{align*}
    &\sum_{i=1}^n\EE(Y'_i\Delta_iB|\mathcal F_{i-1})\\
    &= \frac{\Delta_n}{\sqrt{m_n\Delta_n}} \sum_{i=1}^n \left(-\Big(\sum_{l=1}^{i-1}K_{m_n\Delta_n}(t_{l-1}-\tau)\Big)\Delta_n\mathbf{1}_{\tau\geq t_i} + \Big(\sum_{l=i}^{n}K_{m_n\Delta_n}(t_{l-1}-\tau)\Big)\Delta_n\mathbf{1}_{\tau<{ t_i}}\right)\nonumber\\
    & =  \frac{1}{\sqrt{m_n\Delta_n}}\int_0^1 \Big(\int_{{v}}^{1}K_{m_n\Delta_n}({s}-{\tau})ds\mathbf{1}_{\tau\geq {v}}-\int^{{v}}_{0}K_{m_n\Delta_n}(s-{\tau})ds\mathbf{1}_{\tau<{v}}\Big) d{v} +  o(1)\nonumber\\
    & =\sqrt{m_n\Delta_n} \int \Big(\int_t^{\infty}K(u)du\mathbf{1}_{t\geq 0}-\int^t_{-\infty}K(u)du\mathbf{1}_{t<0}\Big) d{t} +  o(1) \xrightarrow{P} 0.
\end{align*}
When $M \in \mathcal N$ or $M = W$, since $
\EE(\Delta_iB\Delta_iW|\mathcal{F}_{i-1}) = O_P(\Delta_n)$, similar to the case with $M=B$, we obtain
$$\sum_{i=1}^n\EE\big(Y'_i\Delta_iM|\mathcal{F}_{i-1}\big)\lesssim \frac{1}{\sqrt {m_n\Delta_n}}\int \Big(\int_t^{\infty}K(u)du\mathbf{1}_{\tau\geq 0}-\int^t_{-\infty}K(u)du\mathbf{1}_{\tau<0}\Big)dt \rightarrow 0.$$

\medskip
\noindent
{\bf b)}
Next, we work with the term $T_4$ of \eqref{TrmT3T4a}. 
Let $i'\in\{1,\dots,n\}$ be such that $\tau \in (t_{i'-1},t_{i'}]$. Notice that
\begin{align}
    \sum_{i=1}^n Y_i' &= -\sum_{l=1}^{i'-1}\big(\sum_{i=l+1}^{i'}\Delta_iB\big)K_{m_n\Delta_n}(t_{l-1}-\tau) + \sum_{l=i'+1}^n\big(\sum_{i=i'+1}^{l}\Delta_iB\big)K_{m_n\Delta_n}(t_{l-1}-\tau)\\
    &=\sum_{l=1}^{i'-1}K_{m_n\Delta_n}(t_{l-1}-\tau)(B_{t_l}-B_{t_{i'}}) + \sum_{l=i'+1}^nK_{m_n\Delta_n}(t_{l-1}-\tau)(B_{t_l}-B_{t_{i'}}).
\end{align}
{In light of \eqref{kernel_convergence}}, we can then write
\begin{align}
    |T_4|
    &\lesssim \Big|\frac{1}{\sqrt{m_n\Delta_n}}\Big(\Delta_n\sum_{i=1}^{n}K_{m_n\Delta_n}(t_{i-1}-\tau)\big(\int_{\tau}^{t_{i-1}}(\tilde\sigma_s-\tilde\sigma_\tau)dB_s+\int_{\tau}^{t_{i-1}}\tilde \mu_sds\big)\Big)\Big|\nonumber\\
    &\quad + \Big| \frac{1}{\sqrt{m_n\Delta_n}} \Delta_n\sum_{i=1}^{n}K_{m_n\Delta_n}(t_{i-1}-\tau)\Big(\int_{t_{i-1}}^{t_{i}}\tilde\sigma_\tau dB_s+\int_{\tau}^{t_{i'}}\tilde\sigma_\tau dB_s\Big)\Big|=:|T_{4,1}|+|T_{4,2}|.
\end{align}
Since $|B_{t_{i'}} - B_\tau| = O_P(\Delta_n^{1/2})$ and $\sup_i|B_{t_i}-B_{t_{i-1}}| = O_P(\Delta_n^{1/2}\log(n)^{1/2})$, the second term $T_{4,2}$ has order $O_P(\frac{\sqrt{\Delta_n\log(n)}}{\sqrt{m_n\Delta_n}})$ and vanishes given that $m_n\gg \sqrt{\log(n)}$ as $n\rightarrow \infty$. It remains to control the first term.
Let 
\[
    \eta_i := \int_{\tau}^{t_{i-1}}(\tilde\sigma_\tau-\tilde\sigma_s)dB_s+\int_{\tau}^{t_{i-1}}\tilde \mu_sds,\quad\text{and}\quad
    \quad \rho(i) := \frac{1}{t_{i-1}-\tau}\EE\Big(\int_{\tau}^{t_{i-1}}|\tilde\sigma_s - \tilde \sigma_{\tau}|^2ds\Big).
\]
Note that, following the same argument as in the proof of (13.3.37) for $j=6$ in \cite{jacod2012discretization} (see also the proof of (A.6) for $l=4$ in \cite{Figueroa-Lopez_Wu_2024}), 
and using that $\tilde \sigma$ is c\'adl\'ag and bounded, for any positive sequence $N_n \rightarrow \infty$ with  $m_n\Delta_nN_n\rightarrow 0$ as $n\rightarrow \infty$, it holds that
$$\EE(\eta_i^21_{i\in\mathcal{I}_n})\leq c|t_{i-1}-\tau|\rho(i)\quad \text{and}\quad\sup_{i\in\mathcal{I}_n}\rho(i)  \rightarrow 0,$$
where $\mathcal{I}_n:=\{i\in\{1,\dots,n\}:t_i \in (\tau-Nm_n\Delta_n,\tau + Nm_n\Delta_n)\}$.
Additionally, using the It\^o isometry and boundedness of $\tilde{\sigma}$ and $\tilde \mu$,  it is easily seen that $\EE \eta_i^2 \lesssim |\tau - t_i|$ for any $i$. Thus, 
\begin{align*}
    \EE|T_{4,1}|  
     &\lesssim\frac{1}{\sqrt{m_n\Delta_n}}\Big(\Delta_n\sum_{i=1}^nK_{m_n\Delta_n}(t_{i-1}-\tau){\left(\mathbf{1}_{i\in\mathcal{I}_n}+
     \mathbf{1}_{i\notin\mathcal{I}_n}\right)}\sqrt{\EE\eta_i^2}\Big)\\
     &\lesssim\frac{1}{\sqrt{m_n\Delta_n}}\Big(\Delta_n\sum_{i=1}^{n}\frac{1}{m_n\Delta_n}\mathbf{1}_{i\in\mathcal{I}_n}\sqrt{|t_{i-1}-\tau|\rho(i)}\Big)\\
    &\quad+ \frac{1}{\sqrt{m_n\Delta_n}}\Big(\Delta_n\sum_{i=1}^{n}K_{m_n\Delta_n}(t_{i-1}-\tau))\mathbf{1}_{i\notin\mathcal{I}_n}\sqrt{|t_{i-1}-\tau|}\Big) \\
    & \lesssim \frac{m_n\Delta_n}{\sqrt{m_n\Delta_n}m_n\Delta_n}N_n\sqrt{N_n m_n\Delta_n \sup_{i\in\mathcal{I}_n}\rho(i)} +\int_{|u|>N_n} K(u)\sqrt{|u|} du\Big).
\end{align*}
Then, we see that $T_4=o_P(1)$ by taking $N_n\to\infty$ slow enough such that $N_n^{3/2}\sup_{i\in\mathcal{I}_n}\rho(i) \rightarrow 0$ 
(such an $N_n$ is possible since as $N_n$ diverges more slowly, the quantity $\sup_{i\in\mathcal{I}_n}\rho(i)$ vanishes at a faster rate). 
Combining the convergence of $T_3$ with $T_4=o_P(1)$ gives
\[
    \tilde Z'_n(m_n)\xrightarrow{st}\tilde\sigma_\tau Z_2 .
\]

\medskip
\noindent
{\bf c)}
We now proceed to consider $T_1$ and $T_2$ of Eq.~\eqref{TrmT1T2}. First, by Proposition \ref{prop}, for any $p>2$, we can write
\begin{align*}
    \EE(Y_i|\mathcal F_{i-1}) & = \sqrt{m_n}K_{m_n\Delta_n}(t_{i-1}-\tau)\big(\sigma_{t_{i-1}}^2\Delta_n+C_{1,i}\Delta_n v_n^{2-Y}+D_{1,i} \Delta_n^2 v_n^{-Y}+O_P(\Delta_n^3 v_n^{-2-Y}) +o_P(\Delta_n^{5/4})\big),\nonumber\\
    \EE(Y_i^p|\mathcal F_{i-1}) & \lesssim m_n^{p/2} K^p_{m_n\Delta_n}(t_{i-1}-\tau)\big(\Delta_n^{p}\sigma_{t_{i-1}}^{2p} + \Delta_n v_n^{2p-Y}\big),\nonumber\\
    \var(Y_i|\mathcal F_{i-1}) &=m_n K^2_{m_n\Delta_n}(t_{i-1}-\tau)\big( 2\Delta_n^2\sigma_{t_{i-1}}^4 + O_P(\Delta_n^2 v_n^{2-Y}) + O(\Delta_n v_n^{4-Y})\big) \nonumber\\
    &=m_n K^2_{m_n\Delta_n}(t_{i-1}-\tau)\big( 2\Delta_n^2\sigma_{t_{i-1}}^4 + O_P(\Delta_n v_n^{4-Y})\big).
\end{align*}
Then, 
$$s_{1n}^2 :=\sum_{i=1}^n\var(Y_i|\mathcal F_{i-1}) = m_n\sum_{i=1}^nK^2_{m_n\Delta_n}(t_{i-1}-\tau)\big( 2\Delta_n^2\sigma_{t_{i-1}}^4 + O_P(\Delta_n v_n^{4-Y})\big).$$
Given $ v_n \ll \Delta_n^{\frac{1}{4-Y}}$ and \eqref{kernel_convergence}, we obtain
\begin{align*}
    \frac{s_{1n}^2}{(\Delta_n\sum_{j=1}K_{m_n\Delta_n}(t_{j-1}-\tau))^2}\rightarrow2\sigma^4_\tau\int K^2(x)dx.
\end{align*}
Moreover, we have
\begin{align*}
    \sum_{i=1}^n\EE(Y_i^p|\mathcal F_{i-1}) \lesssim \frac{1}{m_n^{p/2-1}\Delta_n^{p-1}}\int K^p(x)dx (\sigma^{2p}_{\tau}\Delta_n^{p-1} + O_P(v_n^{2p-Y})).
\end{align*}
If we rewrite $p = (2 \ell +1)/ \ell$ for some ${\ell}>0$, the right-hand side of the above inequality vanishes if and only if
\begin{align*}
    m_n^{-\frac{1}{2\ell}}\Delta_n^{-1-\frac{1}{\ell}}v_n^{\frac{4\ell+2}{\ell}-Y} = \Big((\Delta_n^{-1}v_n^{4-Y})^\ell (m_n^{-\frac{1}{2}}\Delta_n^{-1}v_n^{2})\Big)^{\frac{1}{\ell}} = \Delta_n^{-1}v_n^{4-Y}  \Big(m_n^{-\frac{1}{2}}\Delta_n^{-1}v_n^{2}\Big)^{\frac{1}{\ell}}\rightarrow 0.
\end{align*}
Since $\Delta_n^{-1}v_n^{4-Y}\ll \Delta_n^{\beta'(4-Y)-1}\ll \Delta_n^s$ for small enough $s>0$, the above vanishes by picking $\ell$ large enough. 
In light of \eqref{kernel_convergence}, we also have $\frac{\sum_{i=1}^n\EE(Y_i^p|\mathcal F_{i-1})}{(\Delta_n\sum_{j=1}K_{m_n\Delta_n}(t_{j-1}-\tau))^p}\rightarrow 0$ as $n\rightarrow \infty$ for such {an $\ell$}. 
To show  $T_2=o_P(1)$, recall $m_n = O(\Delta_n^{-\frac{1}{2}})$ and, thus, by Proposition \ref{prop} and \eqref{kernel_convergence},
\begin{align}
   \sum_{i=1}^n\EE(Y_i|\mathcal F_{i-1}) &=\sqrt{m_n}\sum_{i=1}^n K_{m_n\Delta_n}(t_{i-1}-\tau)\Delta_n\sigma_{t_{i-1}}^2 +\sqrt{m_n}\sum_{i=1}^n K_{m_n\Delta_n}(t_{i-1}-\tau)C_{1,i} \Delta_nv_n^{2-Y} \nonumber\\
    &+ \sqrt{m_n}\sum_{i=1}^nK_{m_n\Delta_n}(t_{i-1}-\tau)D_{1,i} \Delta_n^2 v_n^{-Y} + O_P(\sqrt{m_n}\Delta_n^2 v_n^{-2-Y})+o_P(1).
\end{align}
Hence,%
\begin{align*}
T_2&=\frac{\sum_{i=1}^n \EE(Y_i|\mathcal{F}_{i-1})}{\Delta_n\sum_{j=1}^nK_{m_n\Delta_n}(t_{j-1}-\tau)} - \sqrt{m_n}\Delta_n\sum_{i=1}^n\frac{K_{m_n\Delta_n}(t_{i-1}-\tau)}{\Delta_n\sum_{j=1}^nK_{m_n\Delta_n}(t_{j-1}-\tau)}\sigma_{t_{i-1}}^2 -\sqrt{m_n}A(v_n,m_n)\\
&= O_P(\sqrt{m_n}\Delta_n^2 v_n^{-2-Y}),
\end{align*}
which vanishes due to the condition $m_n\ll \Delta_n^{-4}v_n^{4+2Y}$. To obtain $T_1\xrightarrow{st} Z_1$, by Theorem 2.2.15 in \cite{jacod2012discretization}, it suffices to check the condition:
\begin{align}\label{initial_technical}
    \sum_{i=1}^n\EE_{i-1}\big(Y_i\Delta_iM\big) \rightarrow 0,
\end{align}
where $M$ is either $W$ or $B$ or is in the set $\mathcal N$ containing all bounded martingales orthogonal to $W$ and $B$. The proof of \eqref{initial_technical} is technical and is deferred to Appendix \ref{TecnPrf1}.
This concludes the proof for $\tilde Z_n(m_n, v_n)
\xrightarrow{ st} Z_1$.

\medskip
\noindent
\textbf{d)}
To establish the joint convergence \eqref{clt_spot} and the asymptotic independence of $T_1,T_2$, it is sufficient to show that 
\begin{align}\label{cov_summand}
    \sup_i\Big|\EE\big((\Delta_iX)^2\mathbf{1}_{\{|\Delta_iX|\leq v_n\}}\Delta_iB|\mathcal{F}_{i-1}\big)\Big| = o_P(\Delta_n^{3/2}),
\end{align}
which is shown in Appendix \ref{TecnPrf1}. Indeed, from \eqref{cov_summand} we obtain:
    \begin{align*}
        &\sum_{i=1}^n\Big|\EE(\left. Y_iY'_i\right|\mathcal F_{i-1})\Big|\nonumber\\
        &\quad = \frac{\sqrt{m_n}\Delta_n}{\sqrt{m_n\Delta_n}}\sum_{i=1}^n K_{m_n\Delta_n}(t_{i-1}-\tau)\cdot\left|\Big(-\sum_{l=1}^{i-1}K_{m_n\Delta_n}(t_{l-1}-\tau)\Big)\mathbf{1}_{\tau\geq t_i} + \Big(\sum_{l=i}^{n}K_{m_n\Delta_n}(t_{l-1}-\tau)\Big)\mathbf{1}_{\tau< t_i}\right|\nonumber\\
        &\qquad\qquad\qquad\qquad\qquad\qquad\qquad\cdot\Big|\EE\big((\Delta_iX)^2\mathbf{1}_{\{|\Delta_iX|\leq v_n\}}\Delta_iB|\mathcal{F}_{i-1}\big)\Big|\nonumber\\
        &\quad = \Delta_n^2\sum_{i=1}^n K_{m_n\Delta_n}(t_{i-1}-\tau)\cdot\left(\Big(\sum_{l=1}^{i-1}K_{m_n\Delta_n}(t_{l-1}-\tau)\Big)\mathbf{1}_{\tau\geq t_i} + \Big(\sum_{l=i}^{n}K_{m_n\Delta_n}(t_{l-1}-\tau)\Big)\mathbf{1}_{\tau< t_i}\right)o_P(1)\nonumber\\
        &\quad = \int_0^1 K_{m_n\Delta_n}(v-\tau)\Big({\int_{0}^{v}}K_{m_n\Delta_n}(s-\tau)ds\mathbf{1}_{\tau\geq  v}+{\int_{ v}^{1}}K_{m_n\Delta_n}( s-\tau)ds\mathbf{1}_{\tau< v}\Big) dv \cdot o_P(1)\xrightarrow{P} 0.
    \end{align*}
    In view of \eqref{initial_technicalYprime} and \eqref{initial_technical}, one can finally apply Theorem 2.2.15 in  \cite{jacod2012discretization} to conclude the proof.
    
    \subsection{Proof of Theorem \ref{thm_clt_difference}}
    
    Denote
$$Z_i =  \frac{m_n^{1/2}}{\Delta_n^{-1/2} v_n^{2-Y/2}} K_{m_n\Delta_n}(t_{i-1}-\tau)(\Delta_iX)^2\left(\mathbf{1}_{\{|\Delta_iX|\leq \zeta  v_n\}}-\mathbf{1}_{\{|\Delta_iX|\leq  v_n\}}\right),$$
and note that
\begin{align*}
&u_n^{-1}\big(\tilde Z_n(m_n, \zeta v_n)-\tilde Z_n(m_n, v_n)\big)\\
&=\frac{\sum_{i=1}^n Z_i}{\Delta_n\sum_{j=1}^nK_{m_n\Delta_n}(t_{j-1}-\tau)}-{u_n^{-1}\sqrt{m_n}\left(A(\zeta v_n,m_n)-A(v_n,m_n)\right)}\\
&=:\frac{\sum_{i=1}^n\xi_i}{\Delta_n\sum_{j=1}^nK_{m_n\Delta_n}(t_{j-1}-\tau)}\\
&\quad+\Big(\frac{\sum_{i=1}^n\EE(Z_i|\mathcal{F}_{i-1})}{\Delta_n\sum_{j=1}^nK_{m_n\Delta_n}(t_{j-1}-\tau)}-u_n^{-1}\sqrt{m_n}\left(A(\zeta v_n,m_n)-A(v_n,m_n)\right)\Big),
\end{align*}
{where $\xi_i:=Z_i-\EE(Z_i|\mathcal{F}_{i-1})$}. The second term above is $o_P(1)$. Indeed, since $m_n = O(\Delta_n^{-\frac{1}{2}})$, {by \eqref{PartExp0_difference} in Proposition \ref{prop},
\begin{align*}
    &\frac{\sum_{i=1}^n\EE(Z_i|\mathcal{F}_{i-1})}{\Delta_n\sum_{j=1}^nK_{m_n\Delta_n}(t_{j-1}-\tau)} - u_n^{-1}\sqrt{m_n}\big(A({\zeta v_n},m_n) - A({v_n},m_n)\big)\\
    &\quad\lesssim \frac{m_n^{1/2}\Delta_n^2 v_n^{-2-Y}}{\Delta_n^{-1/2} v_n^{2-Y/2}}
    +\frac{m_n^{1/2}o_P(\Delta_n^{-1/4} v_n^{2-Y/2})}{\Delta_n^{-1/2} v_n^{2-Y/2}}\\
    &\quad= m_n^{1/2}\Delta_n^{5/2}v_n^{-4-Y/2}+o_P(1),
\end{align*}
which} vanishes given our assumption $m_n \ll \Delta_n^{-5}v_n^{8+Y}$.  Next we obtain the asymptotic behavior for $\sum_{i=1}^n {\xi_i}$ by verifying the conditions of Theorem 2.2.15 in \cite{jacod2012discretization}.
Expression \eqref{GNMExp} in Proposition \ref{prop} yields
\begin{align*}
   & \var({\xi_i}|\mathcal{F}_{i-1})\\ & = \frac{m_n}{\Delta_n^{-1} v_n^{4-Y}} K^2_{m_n\Delta_n}(t_{i-1}-\tau)\Big((\zeta^{4-Y}-1){C_{2,i}}\Delta_n v_n^{4-Y} + {o_P(\Delta_n v_n^{4-Y})}\\
     &\quad\qquad\qquad\qquad- \big\{(\zeta^{2-Y}-1){C_{1,i}}\Delta_nv_n^{2-Y}+(\zeta^{-Y}-1){D_{1,i}}\Delta_n^2v_n^{-Y}+{O_P(\Delta_n^3 v_n^{-2-Y})}\big\}^2 \Big)\nonumber\\
    & = \frac{m_n}{\Delta_n^{-1} v_n^{4-Y}} K^2_{m_n\Delta_n}(t_{i-1}-\tau)\big( (\zeta^{4-Y}-1){C_{2,i}}\Delta_n v_n^{4-Y} + o_P(\Delta_n v_n^{4-Y})
    \big),
\end{align*}
where $C_{2,i} = \frac{(C_++C_-)|\chi_{t_{i-1}}|^Y}{4-Y}$.
{Therefore,} with $1<Y<2$ and $ v_n \ll \Delta_n^{1/(4-Y)}$, we obtain
\begin{align*}
    s_{2n}^2 &:= \sum_{i=1}^n \var({\xi_i}|\mathcal{F}_{i-1}) \\
    &=m_n\Delta_n^2 \sum_{i=1}^nK^2_{m_n\Delta_n}(t_{i-1}-\tau)\frac{(C_++C_-)|\chi_{t_{i-1}}|^Y}{4-Y}(\zeta^{{4-Y}}-1) + o_P(1) \\
    &\longrightarrow
    \frac{(C_++C_-)|\chi_{\tau}|^Y}{4-Y}(\zeta^{{4-Y}}-1)\int K^2(x)dx.
\end{align*}
And consequently, with \eqref{kernel_convergence}, we have 
\begin{equation}\label{th2limitvar}
\frac{s_{2n}^2}{\big(\Delta_n\sum_{j=1}^nK_{m_n\Delta_n}(t_{j-1}-\tau)\big)^2} \rightarrow \frac{(C_++C_-)|\chi_{\tau}|^Y}{4-Y}(\zeta^{{4-Y}}-1)\int K^2(x)dx.
\end{equation}
{Similarly, for any} $p>2$, we have\footnote{{By a standard localization argument, we may assume that $\sigma$ is bounded.}}
\begin{align}
    &\frac{\sum_{i=1}^n\EE({\xi_i^p}|\mathcal{F}_{i-1})}{\big(\Delta_n{\sum_{j=1}^nK_{m_n\Delta_n}(t_{j-1}-\tau)\big)^p}}\\
     &\lesssim \frac{m_n^{p/2}}{\Delta_n^{-p/2} v_n^{2p-pY/2}} \sum_{i=1}^nK^p_{m_n\Delta_n}(t_{i-1}-\tau)(\Delta_n v_n^{2p-Y}+\Delta_n^{p})\notag\\
    &\lesssim \frac{v_n^{(p/2-1)Y}}{m_n^{p/2-1}\Delta_n^{p/2-1}}\int K^p(x)dx +\frac{\Delta_n^{p/2}}{m_n^{p/2-1}v_n^{2p-pY/2}}\int K^p(x)dx.\label{th2_momentvanish}
\end{align}
In \eqref{th2_momentvanish} the first term vanishes for any $p>2$ since $m_n^{-1}\Delta_n^{-1}v_n^Y \rightarrow 0$. For the second term, note that it can be written as $\Big(m_n^{1/p}\Delta_nv_n^{-2}\Delta_n^{-1/2}m_n^{-1/2}v_n^{Y/2}\Big)^p$. By taking $p\downarrow2$, the second term vanishes given $\Delta_nv_n^{-2}\lesssim \Delta_n^{1-2\beta}\rightarrow 0$ with $\beta<\frac{1}{2}$ and $m_n^{-1}\Delta_n^{-1}v_n^Y \rightarrow 0$.

In view of \eqref{th2_momentvanish} and \eqref{th2limitvar}, what remains is to check the martingale technical condition from Theorem 2.2.15 in \cite[]{jacod2012discretization}. Specifically, due to \eqref{kernel_convergence}, it requires us to show that as \(n \to \infty\),
\[
{\sum_{i=1}^n \EE_{i-1} \left[ \xi_i \Delta_i M \right]}=\sum_{i=1}^n \EE_{i-1} \left[ Z_i \Delta_i M \right] \overset{P}{\to} 0,
\]
when \(M = W\) or \(M\) is a bounded martingale orthogonal to \(W\). Equivalently, it suffices to prove that
\[
\frac{m_n^{1/2}}{\Delta_n^{-1/2} v_n^{2-Y/2}} \sum_{i=1}^n K_{m_n\Delta_n}(t_{i-1}-\tau) \EE_{i-1} \left[ (\Delta_i X)^2 f(\Delta_i X) \Delta_i M \right] = o_P(1),
\]
where \(f(x) := \mathbf{1}_{\{v_n < |x| \leq \zeta v_n\}}\). 
Similar to the proof of Lemma 12 in \cite{BONIECE2024}, there exists a \(C^2\) smooth approximation \(f_n\) of \(f\) such that for any \(\eta > 0\),
\[
\mathbf{1}_{\{v_n(1+ \frac{2}{3}v_n^\eta) < |x| < \zeta v_n(1 - \frac{2}{3}v_n^\eta)\}} \leq f_n(x) \leq \mathbf{1}_{\{v_n(1 + \frac{1}{3}v_n^\eta) < |x| < \zeta v_n(1 - \frac{1}{3}v_n^\eta)\}},
\]
\[
|f'_n(x)| \leq \frac{C}{v_n^{1+\eta}}, \quad |f''_n(x)| \leq \frac{C}{v_n^{2+2\eta}}.
\]
Then, the result follows from the following two limits
\begin{align}\label{diff_fn_approximation}
    \frac{m_n^{1/2}}{\Delta_n^{-1/2} v_n^{2-Y/2}} \sum_{i=1}^nK_{m_n\Delta_n}(t_{i-1}-\tau) \EE_{i-1} \left[ (\Delta_i X)^2 (f - f_n)(\Delta_i X) \Delta_i M \right] \to 0,
\end{align}
and 
\begin{align}\label{diff_fn_technical}
    \frac{m_n^{1/2}}{\Delta_n^{-1/2} v_n^{2-Y/2}} \sum_{i=1}^n K_{m_n\Delta_n}(t_{i-1}-\tau) \EE_{i-1} \left[ (\Delta_i X)^2 f_n(\Delta_i X) \Delta_i M \right] = o_P(1).
\end{align}
{While the proof of \eqref{diff_fn_approximation} is relatively straightforward, the proof of \eqref{diff_fn_technical} is  nontrivial and lengthy. 
This is shown in Appendix \ref{TecnPrf2}.}

\subsection{Proof of Theorem \ref{main}}
Denote $\mathcal{I} = \{(1,1),(2,1)\}$ and recall that 
\begin{align}\label{original}
    \tilde Z_n(m_n, v_n) = \sqrt {m_n} \left(\hat c_n(m_n, v_n) - \frac{\Delta_n\sum_{i=1}^{n}K_{m_n\Delta_n}(t_{i-1}-\tau)\sigma_{t_{i-1}}^2}{\Delta_n{\sum_{j=1}^nK_{m_n\Delta_n}(t_{j-1}-\tau)}} -{A(v_n,m_n)}\right),
\end{align}
where
\begin{align*}
    A(v_n,{m_n}) &= \frac{\sum_{i=1}^nK_{m_n\Delta_n}(t_{i-1}-\tau){C_{1,i} \Delta_n} v_n^{2-Y}}{\Delta_n{\sum_{j=1}^nK_{m_n\Delta_n}(t_{j-1}-\tau)}} + \frac{\sum_{i=1}^nK_{m_n\Delta_n}(t_{i-1}-\tau){D_{1,i}} {\Delta_n^2} v_n^{-Y}}{{\Delta_n{\sum_{j=1}^nK_{m_n\Delta_n}(t_{j-1}-\tau)}}}\\
    &=: \sum_{(i,j)\in \mathcal I} a_{i,j}( v_n).
\end{align*}
We introduce the following notation:
\begin{equation}\label{INNH}
\begin{aligned}
&\eta_{i,j}(\zeta) := \zeta^{2-2(i-j)-jY}-1,\nonumber\\
    &\Psi_n:= u_n^{-1}\big(\tilde Z_n(\zeta_{1}^2 m_n, v_n) - 2\tilde Z_n(\zeta_{1} m_n, v_n) + \tilde Z_n( m_n, v_n)\big)
    = O_P(1),\nonumber\\
    &\Phi_n := u_n^{-1}\big(\tilde Z_n(\zeta m_n, v_n)-\tilde Z_n( m_n, v_n)\big) = O_P(1),
\end{aligned}
\end{equation}
where $O_P(1)$ is a consequence of Theorem \ref{thm_clt_difference}.
Note that, for any $(i,j)$ we have
\begin{equation}\label{difference}
    \begin{aligned}
    &a_{i,j}(\zeta v_n) - a_{i,j}( v_n) = \eta_{i,j}(\zeta)a_{i,j}( v_n),\\
   & a_{i,j}(\zeta^2 v_n) - 2a_{i,j}(\zeta v_n)  + a_{i,j}( v_n) = \eta_{i,j}^2(\zeta)a_{i,j}( v_n).
\end{aligned}
\end{equation}
Recall $\Delta_nv_n^{-2}\ll1$ gives the order comparison:
\begin{align}
    a_{1,1}( v_n) \gg  a_{2,1}( v_n).
\end{align}
In the first stage of debiasing, the leading order term $a_{1,1}( v_n)$ is removed. %
\subsubsection*{First step: removal of $a_{1,1}( v_n)$}
Let
\begin{align}\label{debias1}
    \widetilde c_n^{(1)}(m_n,v_n,\zeta_{1}) &= \hat c_n(m_n, v_n) - \frac{\Big(\hat c_n(m_n,\zeta_{1} v_n)-\hat c_n(m_n, v_n)\Big)^2}{{\hat c_n(m_n,\zeta_{1}^2 v_n)} - 2\hat c_n(m_n,\zeta_{1} v_n) + \hat c_n(m_n, v_n)}\nonumber\\
    &=:\hat c_n(m_n, v_n) - \hat a_{1,1}( v_n).
\end{align}
Note that, due to \eqref{original}-\eqref{INNH},
\begin{align*}
\hat{c}_n(m_n,\zeta v_n)-\hat{c}_n(m_n,v_n)
&=m_n^{-1/2}\left(\tilde Z_n(m_n, \zeta v_n)-\tilde Z_n(m_n, v_n)\right)+A_n(\zeta v_n,m_n)-A_n(v_n,m_n)\\
&=m_n^{-1/2}u_n\Phi_n+A_n(\zeta v_n,m_n)-A_n(v_n,m_n).
\end{align*}
Similarly, we can write:
\begin{align*}
&\hat{c}_n(m_n,\zeta^2 v_n)-2\hat{c}_n(m_n,\zeta v_n)+\hat{c}_n(m_n,v_n)\\
&\quad=m_n^{-1/2}u_n\Psi_n+A_n(\zeta^2 v_n,m_n)-2A_n(\zeta v_n,m_n)+A_n(v_n,m_n).
\end{align*}
Then, in the light of Proposition \ref{prop} and \eqref{difference}, we may expand $\hat a_{1,1}( v_n)$ as follows:
\begin{equation}\label{hata11}
\begin{aligned}
    \hat a_{1,1}( v_n)
    &= \frac{\big(\sum_{\substack{(r,s)\in \mathcal I}} \eta_{r,s}a_{r,s}( v_n)+m_n^{-1/2}u_n\Phi_n\big)^2}{\sum_{\substack{(r,s)\in \mathcal I}} \eta^2_{r,s}a_{r,s}( v_n)+m_n^{-1/2}u_n\Psi_n}\\
     &=a_{1,1}( v_n) +O_P(m_n^{-1/2}u_n) \\
     &\quad+ \eta_{2,1}a_{2,1}( v_n)\frac{(2\eta_{1,1} - \eta_{2,1})a_{1,1} + \eta_{2,1}a_{2,1}( v_n)}{\eta^2_{1,1}a_{1,1}( v_n) + \eta^2_{2,1}a_{2,1}( v_n)+ m_n^{-1/2}u_n \Psi_n},
\end{aligned}
\end{equation}
where $O_P(m_n^{-1/2}u_n)$ includes the cross products between $\sum_{(r,s)\in \mathcal I }\eta_{r,s}a_{r,s}( v_n)$ and $m_n^{-1/2}u_n \Phi_n=o_P(1)$, along with $(m_n^{-1/2}u_n \Phi_n)^2$ and $m_n^{-1/2}u_n \Psi_n a_{1,1}$, all divided by $\sum_{\substack{(r,s)\in \mathcal I}} \eta^2_{r,s}a_{r,s}( v_n)+m_n^{-1/2}u_n\Psi_n$. %

To simplify the last term in \eqref{hata11}, %
note that for any variable $N = o_p\big(a_{1,1}^2( v_n)\big)$,
\begin{align}\label{bigopapprox}
    &\frac{N}{\sum_{\substack{(r,s)\in \mathcal I}} a_{r,s}( v_n) + O_p( m_n^{-1/2}u_n)} -\frac{N}{\sum_{\substack{(r,s)\in \mathcal I}} a_{r,s}( v_n)}\nonumber\\
    &\quad= -\frac{N}{\sum_{\substack{(r,s)\in \mathcal I}} a_{r,s}( v_n)} \cdot \frac{O_P(m_n^{-1/2}u_n)}{\big(\sum_{\substack{(r,s)\in \mathcal I}} a_{r,s}( v_n) + O_P(m_n^{-1/2}u_n)\big)}\nonumber\\
    &\quad=\frac{-N}{a_{1,1}^2( v_n)(1+o_p(1))}\cdot O_P(m_n^{-1/2}u_n) \nonumber\\
    &\quad =o_P(m_n^{-1/2}u_n),
\end{align}
{where above, for simplicity, we omitted the coefficients associated with $a_{i,j}( v_n)$. The above shows that we can drop the term $m_n^{-1/2}u_n \Psi_n$ in the denominator and can write \eqref{hata11} as follows
\begin{equation*}%
\begin{aligned}
    \hat a_{1,1}( v_n)
     &=a_{1,1}( v_n) +O_P(m_n^{-1/2}u_n) \\
     &\quad+ \eta_{2,1}a_{2,1}( v_n)\frac{(2\eta_{1,1} - \eta_{2,1})a_{1,1} + \eta_{2,1}a_{2,1}( v_n)}{\eta^2_{1,1}a_{1,1}( v_n) + \eta^2_{2,1}a_{2,1}( v_n)}\\
     &=a_{1,1}( v_n)+\tilde{\eta}_{2,1}a_{2,1}+\breve{\eta}_{2,1}\frac{a_{2,1}^2(v_n)}{\eta^2_{1,1}a_{1,1}( v_n) + \eta^2_{2,1}a_{2,1}( v_n)}+O_P(m_n^{-1/2}u _n),
\end{aligned}
\end{equation*}
for some constants $\tilde{\eta}_{2,1}$ and $\breve{\eta}_{2,1}$. Note that the third term above is $O_P(a_{2,1}^2/a_{1,1})$ and, thus, this term is $o_P(m_n^{-1/2}u_n)$ due to our assumption  $m_n \ll \Delta_n^{-5} v_n^{8+Y}$. We conclude that 
\begin{equation}\label{hata112}
    \hat a_{1,1}(v_n)
     =a_{1,1}(v_n)+\tilde{\eta}_{2,1}a_{2,1}(v_n)
     +O_P(m_n^{-1/2}u _n).
\end{equation}
We now show the first debiasing step eliminates $a_{1,1}(v_n)$. Indeed, denoting $\tilde{a}_{2,1}(v_n):=(1-\tilde{\eta}_{2,1})a_{2,1}(v_n)$, in light of \eqref{original}, \eqref{debias1}, and \eqref{hata112} and recalling that $A(v_n,m_n)=a_{1,1}+a_{2,1}$, we see that
\begin{align*}
    \tilde Z^{(1)}_n(m_n, v_n) &:= \sqrt{m_n} \Big(\widetilde c_n^{(1)}(m_n, v_n,\zeta_{1}) - \tilde{a}_{2,1}(v_n)-\Delta_n\sum_{i=1}^{n}K_{m_n\Delta_n}(t_{i-1}-\tau)\sigma_{t_{i-1}}^2\Big)\nonumber\\
    &=\sqrt{m_n} \Big(\hat c_n(m_n, v_n) -\hat{a}_{1,1}(v_n)-\tilde{a}_{2,1}(v_n)-\Delta_n\sum_{i=1}^{n}K_{m_n\Delta_n}(t_{i-1}-\tau)\sigma_{t_{i-1}}^2\Big)\nonumber\\
    &= \tilde Z_n(m_n, v_n) + O_P(u_n).
\end{align*}
Therefore, \eqref{clt_spot} implies that}
\begin{align}\label{clt_spot2b}
    (\tilde Z_n^{(1)}(m_n, v_n),\tilde Z'_n(m_n))
\xrightarrow{ st} (Z_1,Z_2).
\end{align}

\subsubsection*{Second step: removal of $a_{2,1}( v_n)$}
Similar to \eqref{debias1}, we can define
\begin{align*}
    \widetilde c_n^{(2)}(m_n, v_n,\zeta_{2},\zeta_{1}) &= \widetilde c_n^{(1)}(m_n, v_n,\zeta_{1}) - \frac{\Big(\widetilde c_n^{(1)}(m_n,\zeta_{2} v_n,\zeta_{1})-\widetilde c_n^{(1)}(m_n, v_n,\zeta_{1})\Big)^2}{\widetilde c_n^{(1)}(m_n,\zeta_{2}^2 v_n,\zeta_{1}) - 2\widetilde c_n^{(1)}(m_n,\zeta_{2} v_n,\zeta_{1}) + \widetilde c_n^{(1)}(m_n, v_n,\zeta_{1})}\nonumber\\
    &=:\widetilde c_n^{(1)}(m_n, v_n,\zeta_{1}) - \hat a_{2,1}( v_n).
\end{align*}
Similar to \eqref{hata11}, we may expand $\hat a_{2,1}( v_n)$ as follows:
\begin{align*}
    &\hat a_{2,1}( v_n)
    = {\tilde{a}_{2,1}( v_n)}
    +O_P(m_n^{-1/2}u_n).
\end{align*}
Then we see that
\begin{align*}
    \tilde Z^{(2)}_n(m_n, v_n) &:= \sqrt{m_n} \Big(\widetilde c_n^{(2)}( m_n,v_n,\zeta_{2},\zeta_{1}) -\Delta_n\sum_{i=1}^{n}K_{m_n\Delta_n}(t_{i-1}-\tau)\sigma_{t_{i-1}}^2\Big)\nonumber\\
     &=\sqrt{m_n} \Big(\widetilde c_n^{(1)}( m_n,v_n,\zeta_{1}) -\hat{a}_{2,1}(v_n)-\Delta_n\sum_{i=1}^{n}K_{m_n\Delta_n}(t_{i-1}-\tau)\sigma_{t_{i-1}}^2\Big)\nonumber\\
     &=\sqrt{m_n} \Big(\widetilde c_n^{(1)}( m_n,v_n,\zeta_{1}) -\tilde{a}_{2,1}(v_n)-\Delta_n\sum_{i=1}^{n}K_{m_n\Delta_n}(t_{i-1}-\tau)\sigma_{t_{i-1}}^2\Big)+O_P(u_n)\nonumber\\
    &= \tilde Z^{(1)}_n(m_n, v_n) + O_P(u_n).
\end{align*}
Again, \eqref{clt_spot2b} now implies that
\begin{align*}%
    (\tilde Z_n^{(2)}(m_n, v_n),\tilde Z'_n(m_n))
\xrightarrow{ st}  (Z_1,Z_2).
\end{align*}
To conclude the last two statements of the theorem, we follow the same arguments as in Remark \ref{SlyRmk1}.

\section{Additional technical proofs}\label{ApTechPrfs}
Throughout the proofs below, we make use of the following notation:
\begin{align}\label{discretizing}
    x_i :=&b_{t_{i-1}}\Delta_n + \sigma_{t_{i-1}}\Delta_iW + \chi_{t_{i-1}}\Delta_iJ,\;\; \hat x_i:=\sigma_{t_{i-1}}\Delta_iW + \chi_{t_{i-1}}\Delta_iJ,\nonumber\\
    \varepsilon_i := &\Delta_iX - x_i=\int_{t_{i-1}}^{t_i}\sigma_tdW_t-\sigma_{t_{i-1}}\Delta_iW + \int_{t_{i-1}}^{t_i}\chi_tdJ_t-\chi_{t_{i-1}}\Delta_iJ + \int_{t_{i-1}}^{t_i}b_tdt-b_{t_{i-1}}\Delta_n.
\end{align}
In what follows we often refer to the following standard moment estimates of the error terms (see Lemma 3 in \cite{BONIECE2024supplement}):
\begin{equation}\label{KME0}
    \mathbb{E}\left[|\varepsilon_i|^p\big|\mathcal{F}_{t_{i-1}}\right]\leq C
    \Delta_{n}^{\frac{(2\wedge p)+p}{2}},
\end{equation}
for any $p>0$ and a constant $C$.
\subsection{Proof of  Proposition \ref{prop}}\label{PrfProp1a}
{Similar to \cite{BONIECE2024} and \cite{BONIECE2024supplement}, we approximate the original process $X_t$ by
\begin{equation}\label{DfnOfXpr0}
    X_t' := \int_0^t b_s \, ds + \int_0^t \sigma_s \, dW_s + \int_0^t \chi_s \, dJ_s^\infty,
\end{equation}
where, with $\delta_0 \in (0,1)$ being chosen such that $q(x) > 0$ for all $|x| \leq \delta_0$ and denoting the jump measure of $J$ and its compensated version as $N$ and $\bar{N}$, respectively, the process $J^\infty$ is defined as
\begin{equation}\label{DfnJunfty0}
    J_t^\infty := \left( \bar{b} + \int_{\delta_0 < |x| \leq 1} x \, \nu(dx) \right)t + \int_0^t \int_{|x| \leq \delta_0} x \, \bar{N}(ds,dx) + \breve{J}_t.
\end{equation}
The term $\breve{J}_t$ is a pure-jump L\'evy process, independent of $J$, with characteristic triplet $(0,0,\breve{\nu})$, where its L\'evy measure, $\breve{\nu}$, is given by
\begin{equation*}
    \breve{\nu}(dx) := e^{-|x|^p} \left( C_+ \mathbf{1}_{(0, \infty)}(x) + C_- \mathbf{1}_{(-\infty, 0)}(x) \right) \mathbf{1}_{|x| > \delta_0} |x|^{-1-Y} \, dx,
\end{equation*}
for a fixed constant $p < 1 \wedge Y$. Additionally, we write
\begin{equation}\label{e:def_J0}
    J_t^0 := J_t - J_t^{\infty} = \int_0^t \int_{|x|>\delta_0}xN(ds,dx) - \breve J_t,
    \end{equation}
and note that
$$X_t - X_t' = \int_0^t\int\delta(s,z)\mathfrak{p}(ds,dz) + \int_0^t \chi_sdJ_s^0.$$
In conclusion, the process $X'$ contains the continuous part and the infinite-variation jump component, and $X-X'$ contains only bounded-variation jump terms.}

\subsubsection{Proof of \eqref{GNMExp}}
This expansion follows from the one obtained in \cite[Lemma 4]{BONIECE2024}. Indeed, the condition $Y<8/5$ stated in the hypotheses of that lemma is not used in its proof. 

\subsubsection{Proof of \eqref{PartExp0}}
The proof of \eqref{PartExp0} follows along the lines of the proof of Lemma 2 and Lemma 5 in \cite{BONIECE2024}. The main difference is that \cite{BONIECE2024} has the constraint $v_n\gg \Delta_n^{\frac{3}{2(2+Y)}}$ as well as the condition $Y\leq 8/5$. However, these restrictions were imposed to achieve an error term of order $o_P(\Delta_n^{3/2})$ in the expansion therein, while, in our case, it suffices to have an error of order $o_P(\Delta_n^{\frac{5}{4}})+O_P(\Delta_n^3 v_n^{-2-Y})$. Under this more relaxed constraint, we will show the statement \eqref{PartExp0} holds under the entire range $Y\in (0,1)\cup (1,2)$.%

Note that the proof of Lemma 2 in \cite{BONIECE2024} remains valid for $0<Y<1$, without relying on the condition $v_n\gg \Delta_n^{\frac{3}{2(2+Y)}}$. Hence, Lemma 2 holds for $0<Y<1$ under the weaker condition $\Delta_n^{1/2}\ll v_n\ll \Delta_n^{1/(4-Y)}$, which in turn implies \eqref{PartExp0}. In the following, we focus on the proof for the case $1<Y<2$.

First, we show that the expansion \eqref{PartExp0} holds {when $X$ is a sum of a Brownian motion and the tempered stable L\'evy process $J^\infty$ defined in \eqref{DfnJunfty0}. This fact actually follows directly from} Proposition 2 in \cite{BONIECE2024supplement}. Indeed, it was shown there that {for such a process} the error, excluding exponentially decaying terms, is of order $O_P(\Delta_n^{3}v_n^{-Y-2})+O_P(\Delta_n^{2}v_n^{2-2Y})+O_P(\Delta_nv_n^{2-\bar\delta})+O_P(\Delta_n^{3/2}v_n^{1-Y/2})$. However, we observe that both $\Delta_nv_n^{2-\bar\delta}$ and $\Delta_n^{\frac{3}{2}}v_n^{1-Y/2}$ are $o_P(\Delta_n^{\frac{5}{4}})$ and $\Delta_n^3v_n^{-Y-2}\gg\Delta_n^{2}v_n^{2-2Y}$ {since $v_n\ll\Delta_n^{1/(4-Y)}$}. Therefore, the expansion of Proposition 2 in \cite{BONIECE2024supplement} is valid for any $Y\in (1,2)$ with the errors being of order $O_P(\Delta_n^3v_n^{-2-Y})+o_P(\Delta_n^{5/4})$.

{As in \cite{BONIECE2024}, we next extend the expansion for the process $X'$ defined in \eqref{DfnOfXpr0}, which broadly replaces $J$ with $J^\infty$ and omits the last bounded variation Poisson term in $X$. This is done in Lemma 2 in \cite{BONIECE2024}. To conclude that the expansion therein holds, %
but with an adjusted error of order $O_P(\Delta_n^3v_n^{-2-Y})+o_P(\Delta_n^{5/4})$,} it suffices to show the terms $R_{i,1}$ and $R_{i,4}$ appearing in the proof of \cite[Lemma 2]{BONIECE2024} are of order $o_P(\Delta_n^{5/4})$. We only show $R_{i,4}=o_P(\Delta_n^{5/4})$ since the argument for $R_{i,1}$ is similar. It suffices to bound $D_{i,1}$ and $D_{i,2}$ as defined on \cite[p.20]{BONIECE2024}.
To obtain $D_{i,1}=o_P(\Delta_n^{5/4})$, under the constraint $1<Y<2$, the analog of \cite[eq.~(B.12)]{BONIECE2024} reads
\begin{align}\label{b.12}
    \Delta_nv_n^{1-Y}\delta \ll \Delta_n^{5/4} \quad\iff \quad\delta\ll \Delta_n^{1/4}v_n^{Y-1}.
\end{align}
To obtain $D_{i,2}=o_P(\Delta_n^{5/4})$ for $1<Y<2$, it suffices to bound the terms $\mathcal V_{i,2}$ and $\mathcal V_{i,3}$ defined on \cite[p.~21]{BONIECE2024}. Following the same argument leading to \cite[eq.~(B.13)]{BONIECE2024}, to obtain $\mathcal V_{i,1}=o_P(\Delta_n^{5/4}),$ we require, for arbitrarily small $s',s''>0$, %
\begin{align}\label{b.13}
    \Delta_n^{5/2-s'}v_n^{1-Y-s''}\delta^{-1} \ll \Delta_n^{5/4} \quad\iff \quad \delta\gg \Delta_n^{5/4-s'}v_n^{1-Y-s''}.
\end{align}
The conditions \eqref{b.12} and \eqref{b.13} are consistent for any $1<Y<2$, since
$$\Delta_n^{5/4-s'}v_n^{1-Y-s''}\ll \Delta_n^{1/4}v_n^{Y-1}\Leftrightarrow\Delta_n^\frac{1-s'}{2Y-2+s''}\ll v_n,$$
which holds for sufficiently small $s',s''$ under our condition $\Delta_n^{\frac{1}{2}-s}\ll v_n$. Finally, for the term $\mathcal V_{i,3},$ since $v_n\ll\Delta_n^{\frac{1}{4-Y}}$, the condition $\delta\gg \Delta_n^{\frac{1}{2}}v_n^{\frac{Y}{2}}$ imposed at the end of the proof of \cite[Lemma 2]{BONIECE2024} is consistent with \eqref{b.12}. {The condition $\delta\gg \Delta_n^{\frac{1}{2}}v_n^{\frac{Y}{2}}$ enables us to apply \cite[Lemma 4]{BONIECE2024supplement} and conclude that $\EE(V_{i,3}) = O(\Delta_n^2v_n^{2-2Y})$, which is also $O(\Delta_n^3v_n^{-Y-2})$, as desired.}

To  {conclude the proof},
with the updated error $O_P(\Delta_n^3v_n^{-2-Y})+o_P(\Delta_n^{5/4})$ and the more general process $X$, it remains to establish an analog of Lemma 5 in \cite{BONIECE2024}, {which allows us to add the last bounded variation term to $X$ and go from $J^\infty$ to the more general tempered stable L\'evy process $J$}. This is detailed in Section \ref{YWeCAL5} below. %

\subsubsection{Proof of \eqref{PartExp0_difference}}
We first note {that} the expansion \eqref{PartExp0_difference} holds when $X$ is a sum of a Brownian motion and {the independent tempered stable L\'evy process $J^\infty$} on account of \cite[Proposition 2]{BONIECE2024supplement}, which implies that {that the error terms in \eqref{PartExp0_difference} are $O_P(\Delta_n^{3}v_n^{-Y-2})+O_P(\Delta_n^{2}v_n^{2-2Y})+O_P(\Delta_nv_n^{2-\bar\delta})+O_P(\Delta_n^{3/2}v_n^{1-Y/2})$. As with the proof of \eqref{PartExp0},
we can similarly check that $O_P(\Delta_n^3v_n^{-2-Y})$ and $o_P(\Delta_n^{\frac{3}{4}}v_n^{\frac{4-Y}{2}})$ dominate the higher order terms $\Delta_n^2v_n^{2-2Y}$, $\Delta_n^{\frac{3}{2}}v_n^{1-\frac{Y}{2}}$ and $\Delta_nv_n^{2-\bar\delta}$ therein.}

Next, {similarly to \cite[Lemma 3]{BONIECE2024}, we argue that the expansion \eqref{PartExp0_difference} holds for the process $X'$ defined in \eqref{DfnOfXpr0}}.
{Indeed, following the proof of \cite[Lemma 3]{BONIECE2024}}, it suffices to obtain bounds for the terms $\overline{\mathcal D}_{i,1}$ and $\overline{\mathcal D}_{i,2}$ on \cite[p.22]{BONIECE2024}. First to show $\overline{\mathcal D}_{i,1}=o_P(\Delta_n^{\frac{3}{4}}v_n^{\frac{4-Y}{2}})$, we arrive at the constraint analogous to \cite[eq.~(B.17)]{BONIECE2024}:
\begin{align}\label{b.17}
    \Delta_nv_n^{1-Y}\delta \ll \Delta_n^{\frac{3}{4}}v_n^{\frac{4-Y}{2}} \quad\iff\quad \delta\ll \Delta_n^{-\frac{1}{4}} v_n^{1+\frac{Y}{2}}.
\end{align}
Similarly, to obtain $\overline{\mathcal D}_{i,2}=o_P(\Delta_n^{\frac{3}{4}}v_n^{\frac{4-Y}{2}})+ O_P(\Delta_n^{3}v_n^{-Y-2})$, we first recall that
(see \cite[eq.~(B.19)]{BONIECE2024}), under the constraint $\delta\gg \Delta_n^{1/2} v_n^{Y/2}$,
\[
    \overline{\mathcal D}_{i,2}=O_P(\Delta_n^{5/2-s'}v_n^{1-Y-s''}\delta^{-1})
    +O_P(\Delta^2 v_n^{2-2Y}).
\]
The second term above is $O_P(\Delta_n^{3}v_n^{-Y-2})$ because $v_n\ll \Delta_n^{1/(4-Y)}$. For the first term above to be $o_P(\Delta_n^{\frac{3}{4}}v_n^{\frac{4-Y}{2}})$, we need:
\begin{align}\label{b.19}
     \Delta_n^{5/2-s'}v_n^{1-Y-s''}\delta^{-1} \ll \Delta_n^{\frac{3}{4}}v_n^{\frac{4-Y}{2}}  \quad\iff\quad  \Delta_n^{\frac{7}{4}-s'} v_n^{-1-\frac{Y}{2}-s''}\ll\delta,
\end{align}
which is compatible with \eqref{b.17} if $\Delta_n^{\frac{7}{4}} v_n^{-1-\frac{Y}{2}}\ll \Delta_n^{-\frac{1}{4}} v_n^{1+\frac{Y}{2}}$ or, equivalently, $\Delta_n^{\frac{2}{2+Y}}\ll v_n$. The latter obviously holds for 
all $0<Y<2$ since $v_n\gg \Delta_n^{\frac{1}{2}}$. On the other hand, \eqref{b.17} is also compatible with the constraint $\delta\gg \Delta_n^{1/2} v_n^{Y/2}$ required for \cite[eq.~(B.19)]{BONIECE2024} since $\Delta_n^{1/2}\ll v_n$. The remaining part of the proof is the same as Lemma 3 in \cite{BONIECE2024}. 
To carry the expansion of \cite[Lemma 3]{BONIECE2024}, with the updated error $O_P(\Delta_n^3 v_n^{-2-Y})
+o_P(\Delta_n^{\frac{3}{4}}v_n^{\frac{4-Y}{2}})$, to the most general process $X$, it remains to establish an appropriate analog of Lemma 5 in \cite{BONIECE2024}. This is detailed in Section \ref{YWeCAL5} below.

\subsubsection{Proof of the analog of Lemma 5 in \cite{BONIECE2024}}\label{YWeCAL5}

With $r = r_0\in (0,Y\wedge1)$, Lemma 5 in \cite{BONIECE2024} states that for any $\alpha\geq1$,
\begin{align}
    &|(\Delta_iX)^{2p}\mathbf{1}_{\{|\Delta_iX|\leq v_n\}} - (\Delta_iX')^{2p}\mathbf{1}_{\{|\Delta_iX'|\leq v_n\}}|\notag\\
    &\quad=O_P(\Delta_nv_n^{2p-\alpha r_0}) + O_P(\Delta_nv_n^{2p-Y-1+\alpha}).\label{lemma5in3}
\end{align}
The above equation still holds under the hypotheses of  Proposition \ref{prop}.  To establish  \eqref{GNMExp}, it is further required that the above terms are 
\(o_P(\Delta_n v_n^{2p-Y})\), which can be ensured by choosing 
\(\alpha\) close to 1 from above, since $ r_0 < Y$, $\Delta_nv_n^{-r_1\alpha}\ll\Delta_nv_n^{-2}\ll 1$ under our condition $r_0\in[0,Y\wedge1)$. 

For \eqref{PartExp0} and \eqref{PartExp0_difference}, 
the corresponding requirements for the errors in \eqref{lemma5in3} are \(o_P(\Delta_n^{5/4})\) and 
\(o_P(\Delta_n^{3/4}v_n^{2-Y/2})\), respectively. Since 
$\Delta_n^{3/4} v_n^{2-Y/2} \ll \Delta_n^{5/4}$ under $v_n \ll \Delta_n^{1/(4-Y)}$,
it suffices to establish that the terms are \(o_P(\Delta_n^{3/4}v_n^{2-Y/2})\), which occurs if the following hold: 

\begin{enumerate}

    \item $\Delta_nv_n^{2-\alpha r_0} \ll \Delta_n^{3/4}v_n^{2-Y/2}$, {which holds} if $\alpha < (Y+1)/(2r_0)$. This is obtained by observing $\Delta_nv_n^{2-\alpha r_0} \ll \Delta_n^{3/4}v_n^{2-Y/2} \Leftrightarrow \Delta_n\ll v_n^{4\alpha r_0-2Y}$ {and, since we required $\Delta_n^{1/2}\ll v_n$, we only need to impose that $v_n^2\ll v_n^{4\alpha r_0-2Y}$, which is satisfied if $\alpha < (Y+1)/(2r_0)$.}
    \item $\Delta_nv_n^{2-Y-1+\alpha} \ll \Delta_n^{3/4}v_n^{2-Y/2}$ if $\alpha > 1/2 + Y/2$. This is obtained by observing $\Delta_nv_n^{2-Y-1+\alpha} \ll \Delta_n^{3/4}v_n^{2-Y/2} \Leftrightarrow \Delta_n^{1/(
    2Y+4-4\alpha)}\ll v_n$ and comparing $\Delta_n^{1/(2Y+4-4\alpha)}\ll \Delta_n^{1/2}$.
\end{enumerate}
In conclusion, we need to pick $\alpha$ such that $$\frac{1+Y}{2}<\alpha<\frac{1+Y}{2r_0},$$
which is always possible given $r_0\in[0,1)$. %

\subsubsection{Proof of \eqref{prop1_non_integer}}\label{FPNH0}

{As with the proofs of \eqref{PartExp0} and \eqref{PartExp0_difference}, we proceed in two steps: first we show \eqref{prop1_non_integer} for the process $X'$ defined in \eqref{DfnOfXpr0} and then, we show the analog of Lemma 5 in \cite{BONIECE2024};} i.e., we show that, for any real number $p>1$,
\begin{equation}\label{finite_jump_difference}
    |(\Delta_iX)^{2p}\mathbf{1}_{\{|\Delta_iX|\leq v_n\}} - (\Delta_iX')^{2p}\mathbf{1}_{\{|\Delta_iX'|\leq v_n\}}| =O_P(\sigma_{t_{i-1}}^{2p}\Delta_n^p)+ O_P(\Delta_nv_n^{2p-Y}).
\end{equation}
Note that \eqref{finite_jump_difference} allows non-integer moment and does not directly follow from Lemma 5 in \cite{BONIECE2024}.

For simplicity, we re-use the notation in \eqref{discretizing}, except with $J_t^{\infty}$ in place of $J_t$. To {prove} \eqref{prop1_non_integer}, %
    we need to show that, for any $p>1$, the following two estimates hold:
    \begin{align}\label{prop1_non_integer_levy_drift}
        \EE_{i-1}|x_i\mathbf{1}_{\{|x_i|\leq v_n\}}|^{2p} = O_P(\sigma_{t_{i-1}}^{2p}\Delta_n^p) + O_P(\Delta_nv_n^{2p-Y}),
    \end{align}
    and
    \begin{align}\label{prop1_non_integer_semidiff}
        \EE_{i-1}\big|(\Delta_iX)\mathbf{1}_{\{|\Delta_iX|\leq v_n\}}|^{2p}-\EE_{i-1}|x_i\mathbf{1}_{\{|x_i|\leq v_n\}}|^{2p} = O_P(\sigma_{t_{i-1}}^{2p}\Delta_n^p) + O_P(\Delta_nv_n^{2p-Y}).
    \end{align}
We begin with \eqref{prop1_non_integer_levy_drift} and proceed along the lines of Proposition 3 in \cite{BONIECE2024supplement}. It suffices to show
\begin{align}\label{prop1_non_integer_levy}
    \EE_{i-1}\big(|
    \sigma_{t_{i-1}}\Delta_iW +\chi_{t_{i-1}} \Delta_iJ^\infty|^{2p}
    \mathbf{1}_{\{|\sigma_{t_{i-1}}\Delta_iW +\chi_{t_{i-1}} \Delta_iJ^\infty|\leq v_n\}}\big)= O_P(\sigma_{t_{i-1}}^{2p}\Delta_n^p) + O_P(\Delta_nv_n^{2p-Y}).
\end{align}
Note that 
\begin{align*}
    &\EE_{i-1}\big(\big|
    \sigma_{t_{i-1}}\Delta_iW + \chi_{t_{i-1}}\Delta_iJ^\infty\big|^{2p}\mathbf{1}_{\{|\sigma_{t_{i-1}}\Delta_iW +\chi_{t_{i-1}} \Delta_iJ^\infty|\leq v_n\}}\big) \\
    &\quad\lesssim \EE_{i-1}| \sigma_{t_{i-1}}\Delta_iW|^{2p} + \EE_{i-1}\big(| \Delta_iJ^\infty|^{2p}\mathbf{1}_{\{|\sigma_{t_{i-1}}\Delta_iW +\chi_{t_{i-1}} \Delta_iJ^\infty|\leq v_n\}}\big),
\end{align*}
and, by self-similarity, $\EE_{i-1}| \sigma_{t_{i-1}}\Delta_iW|^{2p} =O_P( \sigma_{t_{i-1}}^{2p}\Delta_n^{p})$. Thus, to show \eqref{prop1_non_integer_levy}, it suffices that
$$ 
A_n:=\EE_{i-1}\big( |\chi_{t_{i-1}}\Delta_iJ^\infty|^{2p}\mathbf{1}_{\{|\sigma_{t_{i-1}}\Delta_iW +\chi_{t_{i-1}} \Delta_iJ^\infty|\leq v_n\}}\big)\lesssim \Delta_nv_n^{2p-Y}+\Delta_n^p.$$
Note that $\mathbf{1}_{\{|\sigma_{t_{i-1}}\Delta_iW +\chi_{t_{i-1}} \Delta_iJ^\infty|\leq v_n\}} \leq \mathbf{1}_{|\sigma_{t_{i-1}}\Delta_iW|\leq v_n}\mathbf{1}_{\{|\chi_{t_{i-1}}\Delta_iJ^\infty|\leq 2v_n\}} + \mathbf{1}_{|\sigma_{t_{i-1}}\Delta_iW|> v_n}$. Then we may further write
\begin{align*}
    &A_n\leq \EE_{i-1}\big( |\chi_{t_{i-1}}\Delta_iJ^\infty|^{2p}\mathbf{1}_{\{|\chi_{t_{i-1}}\Delta_iJ^\infty|\leq 2v_n\}}\big) \\
    &\qquad+ \EE_{i-1}\big( |\chi_{t_{i-1}}\Delta_iJ^\infty|^{2p}\mathbf{1}_{\{|\sigma_{t_{i-1}}\Delta_iW +\chi_{t_{i-1}} \Delta_iJ^\infty|\leq v_n\}}\cdot\mathbf{1}_{|\sigma_{t_{i-1}}\Delta_iW|> v_n}\big)\\
    &\quad=: V_1+V_2.
\end{align*}
For $V_1$, by {(B.38) in \cite{BONIECE2024supplement}}, we obtain that
\begin{align}\label{prop1_v1}
    V_1 \lesssim \Delta_nv_n^{2p-Y} + \Delta_n^p.
\end{align}
For $V_2$, let $K$ be the smallest positive integer such that $K> p$. And further assume $\sigma_{t_{i-1}}\neq 0$, otherwise $V_2=0$. By H\"older's inequality and {(B.37) in \cite{BONIECE2024supplement}}, we have 
\begin{align}
    V_2 &\leq \EE_{i-1}\big( |\chi_{t_{i-1}}\Delta_iJ^\infty|^{2K}\mathbf{1}_{\{|\sigma_{t_{i-1}}\Delta_iW +\chi_{t_{i-1}} \Delta_iJ^\infty|\leq v_n\}}\big)^{\frac{p}{K}}P\big(|\sigma_{t_{i-1}}\Delta_iW|> v_n\big)^{\frac{K-p}{K}}\notag\\
    &\lesssim (\Delta_nv_n^{2K-Y}+\Delta_n^2v_n^{2K-Y-2}+\Delta_n^{2K-Y/2})^{\frac{p}{K}}\exp\Big(-\frac{K-p}{2K{\sigma^2_{t_{i-1}}}}(\Delta_n^{-1}v_n^2)\Big),\label{prop1_v2}
\end{align}
which vanishes faster than any power of $\Delta_n$ given $\Delta_n^{-1}v_n^2\gg {\Delta_n^{-2s}}$ and hence $V_2 = o_P(\Delta_nv_n^{2p-Y})$. Combining \eqref{prop1_v1} and \eqref{prop1_v2} yields \eqref{prop1_non_integer_levy}.

Now we proceed to establish \eqref{prop1_non_integer_semidiff}. Note that for any $q>1$ and real numbers $a,b$, we have\footnote{{Indeed, for some $\theta\in(0,1)$, $||b|^q-|a|^q|=q ||b|-|a|||(\theta |a|+(1-\theta)|b|)|^{q-1}\leq q||b|-|a||(|a|^{q-1}+|b|^{q-1})$ because clearly $\theta |a|+(1-\theta)|b|\leq \max\{|a|,|b|\}$.}}
\begin{align}\label{prop1_momentdiff_bound}
\big||a|^q-|b|^q\big|\leq q\big||a|-|b|\big|\big(|a|^{q-1}+|b|^{q-1}\big).
\end{align}
We utilize the inequality \eqref{prop1_momentdiff_bound} to bound \eqref{prop1_non_integer_semidiff} by taking $q=2p$, $a = (\Delta_iX)\mathbf{1}_{\{|\Delta_iX|\leq v_n\}}$, and $b = x_i\mathbf{1}_{\{|x_i|\leq v_n\}}$. Also, recalling that $\Delta_iX = x_i+\varepsilon_i$, we have $$\big||a|-|b|\big|\leq|a-b| \leq |\varepsilon_i|\mathbf{1}_{\{|x_i+\varepsilon_i|\leq v_n\}} + |x_i|(\mathbf{1}_{\{|x_i+\varepsilon_i|\leq v_n< |x_i|\}}+\mathbf{1}_{\{|x_i|\leq v_n< |x_i+\varepsilon_i|\}}).$$
{Using the two inequalities above together with $|x+\varepsilon|^{2p-1}\leq 2^{2p-2}(|x|^{2p-1}+|\varepsilon|^{2p-2})$}, we obtain:
\begin{align}
    & \EE_{i-1}\big|(\Delta_iX)\mathbf{1}_{\{|\Delta_iX|\leq v_n\}}|^{2p}-\EE_{i-1}|x_i\mathbf{1}_{\{|x_i|\leq v_n\}}|^{2p} \nonumber\\
    \quad&\lesssim\EE_{i-1}\Big\{\big[|\varepsilon_i\mathbf{1}_{\{|x_i+\varepsilon_i|\leq v_n\}}| +|x_i|(\mathbf{1}_{\{|x_i+\varepsilon_i|\leq v_n< |x_i|\}}+\mathbf{1}_{\{|x_i|\leq v_n< |x_i+\varepsilon_i|\}}
    )\big]\nonumber\\
    &\qquad\qquad\cdot
    \big[{|x_i+\varepsilon_i|}^{2p-1}\mathbf{1}_{\{|x_i+\varepsilon_i| \leq v_n\}} + {|x_i|}^{2p-1} \mathbf{1}_{\{|x_i| \leq v_n\}}\big]\Big\}\nonumber\\
    &\lesssim \EE_{i-1}|\varepsilon_i^{2p}\mathbf{1}_{\{|x_i+\varepsilon_i|\leq v_n\}}| + \EE_{i-1}\big|\varepsilon_ix_i^{2p-1}(\mathbf{1}_{\{|x_i+\varepsilon_i|\leq v_n\}}+ \mathbf{1}_{\{|x_i+\varepsilon_i|\leq v_n,|x_i|\leq v_n\}})\big|\nonumber\\
    &\quad + \EE_{i-1}\big|x_i^{2p}(\mathbf{1}_{\{|x_i+\varepsilon_i|\leq v_n< |x_i|\}}+\mathbf{1}_{\{|x_i|\leq v_n\leq |x_i+\varepsilon_i|\}})(\mathbf{1}_{\{|x_i+\varepsilon_i|\leq v_n\}} + \mathbf{1}_{\{|x_i|\leq v_n\}})\big|\nonumber\\
    &= \EE_{i-1}|\varepsilon_i^{2p}\mathbf{1}_{\{|x_i+\varepsilon_i|\leq v_n\}}| + \EE_{i-1}\big|\varepsilon_ix_i^{2p-1}(\mathbf{1}_{|x_i+\varepsilon_i|\leq v_n<|x_i|}+ \mathbf{1}_{\{|x_i+\varepsilon_i|\leq v_n,|x_i|\leq v_n\}})\big|\nonumber\\
    &\quad + \EE_{i-1}\big|x_i^{2p}(\mathbf{1}_{\{|x_i+\varepsilon_i|\leq v_n< |x_i|\}}+\mathbf{1}_{\{|x_i|\leq v_n\leq |x_i+\varepsilon_i|\}})\big|\nonumber\\
    \label{SmplDcmp0}
    &=:R_{i,1} +R_{i,2}+R_{i,3}.
\end{align}
By \eqref{KME0}, we obtain $R_{i,1} = O_P(\Delta_n^{1+p}) = o_P(\Delta_nv_n^{2p-Y})$ given $\Delta_n^{1/2}\ll v_n$ and $Y>0$. 
Now we analyze $R_{i,3}$. For $\EE_{i-1}|x_i|^{2p}\mathbf{1}_{\{|x_i|\leq v_n\leq |x_i+\varepsilon_i|\}}$ we can directly apply \eqref{prop1_non_integer_levy_drift} and consequently $\EE_{i-1}|x_i|^{2p}\mathbf{1}_{\{|x_i|\leq v_n\leq |x_i+\varepsilon_i|\}}$ is of order ${O_P(\sigma_{t_{i-1}}^{2p}\Delta_n^p) + O_P(\Delta_nv_n^{2p-Y})}$. For the remaining term in $R_{i,3}$,  we consider the following decomposition:
\begin{align*}
    \EE_{i-1}|x_i\mathbf{1}_{\{|x_i+\varepsilon_i|\leq v_n\leq |x_i|\}}|^{2p}&\leq \EE_{i-1}|x_i\mathbf{1}_{\{|x_i+\varepsilon_i|\leq v_n\leq |x_i|\}}\cdot\mathbf{1}_{\{|\varepsilon_i|\leq|v_n|\}}|^{2p}\\
    &\quad +\EE_{i-1}|x_i\mathbf{1}_{\{|x_i+\varepsilon_i|\leq v_n\leq |x_i|\}}\cdot\mathbf{1}_{\{|\varepsilon_i|>|v_n|\}}|^{2p}\\
    & =: \tilde D_{i,1} + \tilde D_{i,2}.
\end{align*}
Note that $|x_i+\varepsilon_i|\leq v_n$ and $|\varepsilon_i|\leq|v_n|$ implies $|x_i| \leq 2v_n$ and, we can again, simply apply \eqref{prop1_non_integer_levy_drift} to obtain 
$\tilde D_{i,1}\lesssim \Delta_nv_n^{2p-Y} + \Delta_n^p$. Regarding $\tilde D_{i,2}$, for some $q>1$,  H\"older's inequality yields
\begin{align*}
    \EE_{i-1}|x_i\mathbf{1}_{\{|x_i+\varepsilon_i|\leq v_n\leq |x_i|\}}\cdot\mathbf{1}_{\{|\varepsilon_i|>|2v_n|\}}|^{2p} \leq \big(\EE_{i-1}|x_i\mathbf{1}_{\{|x_i+\varepsilon_i|\leq v_n\}}|^{2pq}\big)^{1/q}P(|\varepsilon_i|>|2v_n|)^{(q-1)/q},
\end{align*}
which is of order $o_P(\Delta_nv_n^{2p-Y})$ by taking $q$ such that $pq$ is integer, as $\EE_{i-1}|x_i\mathbf{1}_{\{|x_i+\varepsilon_i|\leq v_n\}}|^{2pq}$ is finite by \eqref{GNMExp}, while $P(|\varepsilon_i|>|2v_n|)$ vanishes faster than any power of $\Delta_n$ by \eqref{KME0} given that $\Delta_n^\beta\ll v_n$ with $\beta<\frac{1}{2}$.
Combining $\tilde D_{i,1}$ and $\tilde D_{i,2}$, we obtain $R_{i,3} = O_P(\Delta_nv_n^{2p-Y})$.

Lastly, we turn to $R_{i,2}$ in \eqref{SmplDcmp0}. By H\"older's inequality,
\begin{align*}
    R_{i,2}&\leq\big(\EE_{i-1}|\varepsilon_i|^{2p}\big)^{\frac{1}{2p}}\big(\EE_{i-1}|x_i|^{2p}\mathbf{1}_{|x_i+\varepsilon_i|\leq v_n<|x_i|}\big)^{\frac{2p-1}{2p}}\\
    &\quad+\big(\EE_{i-1}|\varepsilon_i|^{2p}\big)^{\frac{1}{2p}}\big(\EE_{i-1}|x_i|^{2p}\mathbf{1}_{\{|x_i+\varepsilon_i|\leq v_n,|x_i|\leq v_n\}}\big)^{\frac{2p-1}{2p}}\\
    &\lesssim\Delta_n^{\frac{1+p}{2p}}(\Delta_nv_n^{2p-Y})^{\frac{2p-1}{2p}} + \Delta_n^{\frac{1+p}{2p}}(\Delta_nv_n^{2p-Y})^{\frac{2p-1}{2p}},
\end{align*}
where the second line follows from \eqref{KME0}, \eqref{prop1_non_integer_levy_drift}, {and the bound we obtained for $\tilde{D}_{i,2}$ above. We can then conclude that  $R_{i,2}=o_P(\Delta_nv_n^{2p-Y})$ as $\Delta_n^{1+p} = o_P(\Delta_nv_n^{2p-Y})$.}

Combining the results for $R_{i,1}$, $R_{i,2}$, and $R_{i,3}$, we establish \eqref{prop1_non_integer_semidiff} and conclude the proof of \eqref{prop1_non_integer} for $X_t'$. More specifically, we have obtained
\begin{equation}\label{non_integer_bound_wo_finite_jump}
    \EE\big(|\Delta_iX'|^{2p}\mathbf{1}_{\{|\Delta_iX|\leq  v_n\}}|\mathcal F_{i-1}\big) = O_P(\sigma_{t_{i-1}}^{2{p}}\Delta_n^{p})+O_P(C_{{p},t_{i-1}}\Delta_n v_n^{2{p}-Y}).
\end{equation}

Now we will show \eqref{finite_jump_difference}. Following the idea of Lemma 5 in \cite{BONIECE2024}, for any $p>1$, let \[ D_{2p} := \left| (\Delta_i X)^{2p} \mathbf{1}_{\{|\Delta_i X| \leq v_n\}} - (\Delta_i X')^{2p} \mathbf{1}_{\{|\Delta_i X'| \leq v_n\}} \right|.\] 
Denote \( V_t = X_t - X_t' \), where recalling $J^0$ as defined in \eqref{e:def_J0},
\[
V_t := \int_0^t \int \delta(s, z) \mathfrak{p}(ds, dz) + \int_0^t\chi_sdJ_s^0 =:Y_t + \int_0^t\chi_sdJ_s^0.
\]

For any fixed \( p \geq 1 \), by decomposing the indicator $\mathbf{1}_{\{|\Delta_i X| \leq v_n\}}$ and $\mathbf{1}_{\{|\Delta_i X'| \leq v_n\}}$, we can obtain
\begin{align*}
D_{2p} &= \left| (\Delta_i X)^{2p} - (\Delta_i X')^{2p} \right| \mathbf{1}_{\{|\Delta_i X| \leq v_n, |\Delta_i X'| \leq v_n\}} \nonumber\\
&\quad + (\Delta_i X)^{2p} \mathbf{1}_{\{|\Delta_i X| \leq v_n, |\Delta_i X'| > v_n\}} \nonumber\\
&\quad + (\Delta_i X')^{2p} \mathbf{1}_{\{|\Delta_i X| > v_n, |\Delta_i X'| \leq v_n\}} \\
&=: T_1 + T_2 + T_3.
\end{align*}
We start with $T_1$. By Taylor expansion of \( (x' + v)^{2p} \) at \( v = 0 \), we obtain
\[
|(x' + v)^{2p} - (x')^{2p}| \leq K|v| \left( |x'|^{2p-1} + |v|^{2p-1} \right),
\]
which applied to $T_1$ yields
\begin{equation*}
T_1 \leq K \left( |\Delta_i V|^{2p} + |\Delta_i V| |\Delta_i X'|^{2p-1} \right) \mathbf{1}_{\{|\Delta_i X| \leq v_n, |\Delta_i X'| \leq v_n\}}.
\end{equation*}
We first bound $\Delta_i V$. By Corollary 2.1.9 in \cite{jacod2012discretization}, for each \( l\geq 1 \) and some constant $K>0$, we have:
\begin{equation}\label{jacod_2.1.9}
\mathbb{E}_{i-1} \left( \left| \frac{\Delta_iY}{v_n} \wedge 1 \right|^l \right) \leq K\Delta_n v_n^{-r_0}.
\end{equation}
On the other hand, since $J^0$ is compound Poisson, if we let $N^0(ds,dx)$ be its jump measure, for some $K>0$, we have
$$\EE_{i-1}\left( \left| \frac{1}{v_n} \int_{t_{i-1}}^{t_i}\chi_sdJ_s^0 \wedge 1 \right|^p \right) \leq \PP(N^0([t_{i-1},t_i],\R)>0) \leq K \Delta_n.$$
Together with \eqref{jacod_2.1.9}, we obtain
\[
\mathbb{E}_{i-1} \left( |\Delta_i V|^l \mathbf{1}_{\{|\Delta_i V| \leq v_n\}} \right) \leq K\Delta_n v_n^{l - r_0}.
\]

Since \( |x + v| \leq v_n \) and \( |x| \leq v_n \) imply \( |v| \leq 2v_n \), applying the above inequality to $T_1$ we get:
\begin{equation}\label{e:T1bound}
\mathbb{E}_{i-1} T_1 \leq K \left( \mathbb{E}_{i-1}[|\Delta_i V|^{2p} \mathbf{1}_{\{|\Delta_i V| \leq 2v_n\}}] + v_n^{2p-1} \mathbb{E}_{i-1}[|\Delta_i V| \mathbf{1}_{\{|\Delta_i V| \leq 2v_n\}}] \right) \leq K\Delta_n v_n^{2p - r_0}.
\end{equation}

Now we move to \( T_2 \). For some $\alpha\geq 1$ to be later specified, we may decompose $T_2$ as follows:
\begin{equation*}
T_2 = |\Delta_i X|^{2p} \mathbf{1}_{\{|\Delta_i X| \leq v_n, |\Delta_i X'| > v_n + v_n^\alpha\}} + |\Delta_i X|^{2p} \mathbf{1}_{\{|\Delta_i X| \leq v_n < |\Delta_i X'| \leq v_n + v_n^\alpha\}} =: T_2' + T_2''
\end{equation*}

Notice that \( |x' + v| \leq v_n \) and \( |x'| > v_n + v_n^\alpha \) implies \( |v| > v_n^\alpha \). For $T_2'$ we obtain the bound 
\begin{align}
    \mathbb{E}_{i-1} T_2' &\leq v_n^{2p} \mathbb{P}_{i-1}(|\Delta_i V| > v_n^\alpha, |\Delta_i X'| > v_n + v_n^\alpha)\nonumber\\
    &\leq K v_n^{2p} \left( \Delta_nv_n^{-\alpha r_0} \right),\label{e:T2primebound}
\end{align}
which follows from the bound $\mathbb{P}_{i-1}(|\Delta_i V| > u) \leq \mathbb{E}_{i-1} \left( \left| \frac{\Delta_iV}{u} \wedge 1 \right|^l \right) \leq K\Delta_n u^{-r_0}.$
For \( T_2'' \), notice that  $|x' + v| \leq v_n$  and $|x'| \leq v_n + v_n^\alpha$ implies $|v| \leq 2v_n + v_n^\alpha$. We have:
\begin{align*}
    \mathbb{E}_{i-1} T_2'' &= \mathbb{E}_{i-1} \left[ |\Delta_i X|^{2p} \mathbf{1}_{\{|\Delta_i X| \leq v_n < |\Delta_i X'| \leq v_n + v_n^\alpha\}} \right]\nonumber\\
    &\leq K \mathbb{E}_{i-1} \left[ |\Delta_i X'|^{2p} + |\Delta_i V|^{2p} \right] \mathbf{1}_{\{|\Delta_i X| \leq v_n < |\Delta_i X'| \leq v_n + v_n^\alpha\}}\nonumber\\
    &\leq \mathbb{E}_{i-1} |\Delta_i X'|^{2p}\mathbf{1}_{\{|\Delta_i X'| \leq v_n + v_n^\alpha\}}  + O_P(\Delta_nv_n^{2p - r_0}).
\end{align*}

Since \( \alpha \geq 1 \), by \eqref{non_integer_bound_wo_finite_jump}, we obtain
\[
\mathbb{E}_{i-1} |\Delta_i X'|^{2p}\mathbf{1}_{\{|\Delta_i X'| \leq v_n + v_n^\alpha\}} = O_P(\sigma_{t_{i-1}}^{2p}\Delta_n^p)+O_P(\Delta_n v_n^{2p - Y}),
\]
which combined with \eqref{e:T2primebound} implies $$\EE_{i-1}T_2=O_P(\sigma_{t_{i-1}}^{2p}\Delta_n^p)+ O_P(\Delta_nv_n^{2p-Y}) + O_P(\Delta_nv_n^{2p-\alpha r_0}).$$
Lastly for \( T_3 \), similarly we may write
\[
T_3 = (\Delta_i X')^{2p} \mathbf{1}_{\{|\Delta_i X| > v_n, |\Delta_i X'| \leq v_n - v_n^\alpha\}} + (\Delta_i X')^{2p} \mathbf{1}_{\{v_n - v_n^\alpha < |\Delta_i X'| \leq v_n\}} = T_3' + T_3''.
\]

Since \( |x' + v| > v_n \) and \( |x'| \leq v_n - v_n^\alpha \) imply \( |v| > v_n^\alpha \), the same arguments as for \( T_2', T_2'' \) apply to \( T_3', T_3'' \). In conclusion, we have the following bound
\begin{align*}
    \EE_{i-1} T_1 + \EE_{i-1} T_2 + \EE_{i-1} T_3\ \lesssim \Delta_n v_n^{2p - r_0} +    \Delta_nv_n^{2p-\alpha r_0}  + \sigma_{t_{i-1}}^{2p}\Delta_n^p+\Delta_n v_n^{2p - Y}.
\end{align*}
To show \eqref{finite_jump_difference}, it suffices to show $$\Delta_n v_n^{2p - r_0} +    \Delta_nv_n^{2p-\alpha r_0} \lesssim \sigma_{t_{i-1}}^{2p}\Delta_n^p+\Delta_n v_n^{2p - Y}.$$
Indeed, notice that $r \in (0,Y\wedge1)$. It follows that $\Delta_n v_n^{2p - r_0} = o( \Delta_n v_n^{2p - Y})$. 
Thus, if we take $\alpha = \frac{Y}{2} + \frac{Y}{2r}$, then $\alpha r < Y$, and Since $Y<2$, we have
$$\Delta_nv_n^{2p-\alpha r_0} = o(\Delta_nv_n^{2p-Y}),$$
which concludes the proof.

\subsection{Technical identities in Theorem \ref{clt}}\label{TecnPrf1}
In view of \eqref{prop1_non_integer} and Lemma 5 in \cite{BONIECE2024}, in the following, we assume that the discontinuous part of $X$ is  comprised only of L\'evy-driven term of infinite variation and may omit the jumps of finite variation. Specifically, we may assume that $X$ takes the form:
$$X_t = \int_0^tb_sds +  \int_0^t\sigma_s dW_s + \int_0^t\chi_{s}dJ_s.$$

\begin{proof}[Proof of \eqref{initial_technical}]
As in \cite{BONIECE2024}, the key idea is to discretize the integrals involved in $Y_i$ as in \eqref{discretizing}.
We  treat each term in the expansion $\Delta_iX^2=(x_i+\varepsilon_i)^2 = x_i^2 + 2x_i\varepsilon_i + \varepsilon_i^2$ separately. For the third term $\varepsilon_i^2$, by Cauchy-Schwarz, we obtain
\begin{align}\label{discretizing_third}
    &m_n \Big|\sum_{i=1}^nK_{m_n\Delta_n}(t_{i-1}-\tau)\EE_{i-1}\varepsilon_i^2\mathbf{1}_{\{|\Delta_iX|\leq  v_n\}}(\Delta_iM)\big)\Big|^2 \nonumber\\
    &\quad\leq m_n \Big|\sum_{i=1}^nK_{m_n\Delta_n}(t_{i-1}-\tau)\EE_{i-1}(\varepsilon_i^4)^{1/2}\EE_{i-1}(\Delta_iM^2)^{1/2}\Big|^2\nonumber\\
    &\quad\leq m_n\Big(\sum_{i=1}^nK_{m_n\Delta_n}(t_{i-1}-\tau)\EE_{i-1}(\varepsilon_i^4)\Big)\Big(\sum_{i=1}^nK_{m_n\Delta_n}(t_{i-1}-\tau)\EE_{i-1}(\Delta_iM^2)\Big)\nonumber\\
    &\quad\leq m_n \Delta_n ^{-1}O_P(\Delta_n^3)O_P\big(\frac{1}{m_n\Delta_n}\big) = O_P(\Delta_n) \rightarrow 0,
\end{align}
where the second inequality follows from Cauchy-Schwarz concerning the summation and the last inequality is due to \eqref{KME0} and
\begin{align}\label{M2_kernel_average}
    &\EE\big(\sum_{i=1}^nK_{m_n\Delta_n}(t_{i-1}-\tau)\EE_{i-1}(\Delta_iM^2)\big)
    \lesssim\frac{1}{m_n\Delta_n}\sum_{i=1}^n\EE_{i-1}(\Delta_iM^2)
    = O_P\big(\frac{1}{m_n\Delta_n}\big),
\end{align}
which follows from the boundedness of $K(\cdot)$ and the fact that $M$ is square integrable.

For the term $2x_i\varepsilon_i$, when $|\varepsilon_i|>v_n$, since $\Delta_n^{1/2}/v_n\ll \Delta_n^{1/2 - \beta} \ll 1$, we can use Markov's inequality to conclude that
\[
P(|\varepsilon_i| > v_n) \leq K \Delta_n^{1 + (1/2 - \beta)r},
\]
for all \( r > Y \) according to {\eqref{KME0}}. By taking large enough $r$ and applying H\"older's inequality, for some $q_1,q_2,q_3>1$ and $\sum_{i=1}^31/q_i = 1/2$, as $\EE_{i-1} [ |x_i|^{q_1}]$ and $\EE_{i-1} [ |\varepsilon_i|^{q_2}]$ is finite, we can show that
\begin{align}\label{holder}
    &\sqrt{m_n} \sum_{i=1}^{n}K_{m_n\Delta_n}(t_{i-1}-\tau) \EE_{i-1} \left[ |x_i \varepsilon_i| \mathbf{1}_{\{|\Delta_i X| \leq v_n\}} \mathbf{1}_{\{|\varepsilon_i| > v_n\}} (\Delta_i M) \right] \nonumber\\
    &\quad\leq \sqrt{m_n} \sum_{i=1}^{n}K_{m_n\Delta_n}(t_{i-1}-\tau) \EE_{i-1} [ |x_i|^{q_1}]^{1/q_1} \EE_{i-1} [ |\varepsilon_i|^{q_2}]^{1/q_2} \EE_{i-1} (\Delta_i M^2)^{1/2} P_{i-1}(|\varepsilon_i| > v_n)^{1/{q_3}} \nonumber\\
    &\quad\leq \sqrt{m_n} \sum_{i=1}^{n}K_{m_n\Delta_n}(t_{i-1}-\tau) \EE_{i-1} [ |x_i|^{q_1}]^{1/q_1} \EE_{i-1} [ |\varepsilon_i|^{q_2}]^{1/q_2} \EE_{i-1} (\Delta_i M^2)^{1/2}\Delta_n^{1/q_{3} + (1/2 - \beta)r/q_3}\nonumber\\
    &\quad = o_P(1).
\end{align}
When \( |\varepsilon_i| \leq v_n \), note \( |\Delta_i X| = |x_i + \varepsilon_i| \leq v_n \) implies that \( |x_i| \leq 2 v_n \). Then by applying Cauchy-Schwarz twice, we obtain:
\begin{align*}
    &m_n \left| \sum_{i=1}^{n}K_{m_n\Delta_n}(t_{i-1}-\tau) \EE_{i-1} \left[ |\varepsilon_i x_i| \mathbf{1}_{\{|x_i| \leq 2v_n\}} \mathbf{1}_{\{|\varepsilon_i| \leq v_n\}} |\Delta_i M| \right] \right|^2 \nonumber\\
    &\leq m_n \sum_{i=1}^{n}K_{m_n\Delta_n}(t_{i-1}-\tau)  \EE_{i-1} \left[ \varepsilon_i^2 x_i^2 \mathbf{1}_{\{|x_i| \leq 2v_n\}} \right] \sum_{i=1}^{n}K_{m_n\Delta_n}(t_{i-1}-\tau)  \EE_{i-1} \left[ (\Delta_i M)^2 \right] \nonumber\\
    &\leq m_n \sum_{i=1}^{n}K_{m_n\Delta_n}(t_{i-1}-\tau)  O_P(\Delta_n^{3/2}) O_P(\Delta_n) \sum_{i=1}^{n}K_{m_n\Delta_n}(t_{i-1}-\tau) \EE_{i-1} \left[ (\Delta_i M)^2 \right] \nonumber\\
     &=m_nO_P(\Delta_n^{3/2})O_P\big(\frac{1}{m_n\Delta_n}\big) = O_P(\Delta_n^{1/2})\xrightarrow{P} 0,
\end{align*}
{where in the second inequality we used Proposition \ref{prop} and \eqref{KME0}, and in the last line, we used \eqref{M2_kernel_average}}. All that remains is the term $x_i^2$. Next, we want to show that 
\begin{align}\label{cross_technical}
    \sqrt{m_n} \sum_{i=1}^{n}K_{m_n\Delta_n}(t_{i-1}-\tau) \EE_{i-1} \left[ x_i^2 \mathbf{1}_{\{|x_i + \varepsilon_i| \leq v_n\}} (\Delta_i M) \right] \xrightarrow{P} 0. 
\end{align}
Firstly we start with replacing $x_i$ with $b_{t_{i-1}}\Delta_n$ in \eqref{cross_technical} and arbitrary $M$. By the boundness of $b_t$, $K$ and Cauchy-Schwarz, it follows that
\begin{align*}
     &\sqrt{m_n} \sum_{i=1}^{n}K_{m_n\Delta_n}(t_{i-1}-\tau) \EE_{i-1} \left[ b_{t_{i-1}}^2\Delta_n^2 \mathbf{1}_{\{|x_i + \varepsilon_i| \leq v_n\}} (\Delta_i M) \right]\\
     \quad&\lesssim \frac{\sqrt{m_n}\Delta_n^2}{m_n\Delta_n} \sum_{i=1}^{n}\EE_{i-1} \left[(\Delta_i M)^2 \right]^{1/2}\\
     \quad&\leq \frac{\sqrt{m_n}\Delta_n^2}{m_n\Delta_n^{3/2}} \Big(\sum_{i=1}^{n}\EE_{i-1} \left[(\Delta_i M)^2 \right]\Big)^{1/2} \xrightarrow{P} 0,
\end{align*}
where we used $\sum_{i=1}^{n}\EE_{i-1} \left[(\Delta_i M)^2 \right]^{1/2} \leq \Delta_n^{-1/2}\Big(\sum_{i=1}^{n}\EE_{i-1} \left[(\Delta_i M)^2 \right]\Big)^{1/2}$. Besides, by Cauchy-Schwarz twice, we also have
\begin{align*}
     &\sqrt{m_n} \sum_{i=1}^{n}K_{m_n\Delta_n}(t_{i-1}-\tau) \EE_{i-1} \left[ b_{t_{i-1}}\Delta_n\hat x_i \mathbf{1}_{\{|x_i + \varepsilon_i| \leq v_n\}} (\Delta_i M) \right]\\
     \quad&\lesssim  \frac{\sqrt{m_n}\Delta_n}{m_n\Delta_n} \Big(\sum_{i=1}^{n}\EE_{i-1} \left[\hat x_i^2 \mathbf{1}_{\{|x_i + \varepsilon_i| \leq v_n\}}\right]\Big)^{1/2}\Big(\sum_{i=1}^{n}\EE_{i-1} \left[(\Delta_i M)^2 \right]\Big)^{1/2}\xrightarrow{P} 0,
\end{align*}
Consequently, it is sufficient to prove \eqref{cross_technical} with $x_i = \sigma_{t_{i-1}}\Delta_iW + \chi_{t_{i-1}}\Delta_iJ$.

Now we prove \eqref{cross_technical} when \( M = W \) and we start with the case when $x_i$ is replaced with $\sigma_{t_{i-1}}\Delta_iW$. Similar to last inequality, by combining Cauchy-Schwarz and {Proposition \ref{prop}}, we can write
\begin{align}\label{W_technical1}
    &\sqrt{m_n} \left| \sum_{i=1}^{n}K_{m_n\Delta_n}(t_{i-1}-\tau) \sigma^2_{t_{i-1}} \EE_{i-1} \left[ (\Delta_i W)^2 \mathbf{1}_{\{|x_i + \varepsilon_i| \leq v_n\}} \Delta_i M \right] \right| \nonumber\\
    &\quad= \sqrt{m_n} \left| \sum_{i=1}^{n}K_{m_n\Delta_n}(t_{i-1}-\tau) \sigma^2_{t_{i-1}} \EE_{i-1} \left[ (\Delta_i W)^3 \mathbf{1}_{\{|x_i + \varepsilon_i| > v_n\}} \right] \right|\nonumber\\
    &\quad\leq \sqrt{m_n} \sum_{i=1}^{n}K_{m_n\Delta_n}(t_{i-1}-\tau) \sigma^2_{t_{i-1}} \EE_{i-1} \left[ |\Delta_i W|^3 |x_i + \varepsilon_i| \right]/ v_n \nonumber\\
    & \quad \leq C \sqrt{m_n} \sum_{i=1}^{n}K_{m_n\Delta_n}(t_{i-1}-\tau) \EE_{i-1} \left[ (\Delta_i W)^6 \right]^{1/2} \EE_{i-1} \left[ |x_i + \varepsilon_i|^2 \right]^{1/2}/ v_n\nonumber\\
& \quad\leq C\sqrt{m_n} n \Delta_n^{3/2} \Delta_n^{1/2}/v_n \rightarrow 0,
\end{align}
where the last limit follows from $m_n\Delta_n\rightarrow 0$ and $\Delta_n ^{1/2} \ll v_n$.

When \( x_i \) and $M$ are replaced with \( \chi_{t_{i-1}}\Delta_i J \) and $M=W$ in \eqref{cross_technical}, we need to show that
\begin{align}\label{jump_technical}
    \sqrt{m_n} \sum_{i=1}^{n}K_{m_n\Delta_n}(t_{i-1}-\tau) \EE_{i-1} \left[ (\Delta_i J)^2 \mathbf{1}_{\{|x_i + \varepsilon_i| \leq v_n\}} |\Delta_i W| \right] \xrightarrow{P} 0.
\end{align}
Note that when \( |\Delta_i W| > v_n \) or \( |\varepsilon_i| > v_n \), by H\"older's inequality, and that \( P(|\varepsilon_i| > v_n) \) and \( P(|\Delta_i W| > v_n) \) decay faster than any power of \( \Delta_n \) since \( \Delta_n^{1/2}/v_n \ll \Delta_n^{1/2 - \beta} \), by similar techniques as in \eqref{holder}, we obtain
\begin{align}\label{jump_technical_1}
    \sqrt{m_n} \sum_{i=1}^{n}K_{m_n\Delta_n}(t_{i-1}-\tau) \EE_{i-1} \left[ (\Delta_i J)^2 \mathbf{1}_{\{|x_i + \varepsilon_i| \leq v_n\}} |\Delta_i W| \left( \mathbf{1}_{\{|\Delta_i W| > v_n\}} + \mathbf{1}_{\{|\varepsilon_i| > v_n\}} \right) \right] = o_P(1).
\end{align}
So we only focus on the case when \( |\Delta_i W| \leq v_n \) and \( |\varepsilon_i| \leq v_n \) with $M=W$. Note that \( |x_i + \varepsilon_i| \leq v_n \) implies that \( |\Delta_i J| \leq Cv_n \), for some positive constant \( C \) that we assume for simplicity is 1. In that case, for \eqref{jump_technical}, it suffices to show
\[
\sqrt{m_n} \sum_{i=1}^{n}K_{m_n\Delta_n}(t_{i-1}-\tau) \EE_{i-1} \left[ (\Delta_i J)^2 \mathbf{1}_{\{|\Delta_i J| \leq v_n\}} |\Delta_i W| \right] \xrightarrow{P} 0.
\]
By H\"older's inequality and Lemma 17 in \cite{BONIECE2024}, for any \( p, q > 1 \) such that \( 1/p + 1/q = 1 \),
\begin{align*}
         &\sqrt{m_n} \sum_{i=1}^{n}K_{m_n\Delta_n}(t_{i-1}-\tau) \EE_{i-1} \left[ (\Delta_i J)^2 \mathbf{1}_{\{|\Delta_i J| \leq v_n\}} |\Delta_i W| \right]\nonumber\\
    &\quad\leq \sqrt{m_n} \sum_{i=1}^{n}K_{m_n\Delta_n}(t_{i-1}-\tau) \EE_{i-1} \left[ (\Delta_i J)^{2p} \mathbf{1}_{\{|\Delta_i J| \leq v_n\}} \right]^{1/p} \EE_{i-1}\left[ |\Delta_i W|^q \right]^{1/q}\nonumber\\
    &\quad\lesssim  \sqrt{m_n} \sum_{i=1}^{n}K_{m_n\Delta_n}(t_{i-1}-\tau) {(\Delta_n v_n^{2p-Y})^{1/p}}\Delta_n^{1/2}
\lesssim \sqrt{m_n} n \Delta_n^{1/p + 1/2} v_n^{2 - Y/p}.
\end{align*}
So, making \( p \) close to 1 (and \( q \) large) and combining with \eqref{jump_technical_1}, we get the convergence to 0 for \eqref{jump_technical} as $m_n\Delta_n \rightarrow 0$ and $\Delta_n^{1/2}\ll v_n$. Combining \eqref{W_technical1} and \eqref{jump_technical}, we establish \eqref{cross_technical} when \( M = W \).

Next, we work on \eqref{cross_technical} with \( M = B \) and we start with the case where $x_i$ is replaced with $\Delta_iW$. Notice that
\begin{align*}\label{technical_WB}
    &\sqrt{m_n} \left| \sum_{i=1}^{n}K_{m_n\Delta_n}(t_{i-1}-\tau) \sigma^2_{t_{i-1}} \EE_{i-1} \left[ (\Delta_i W)^2 \mathbf{1}_{\{|x_i + \varepsilon_i| \leq v_n\}} \Delta_i B \right] \right| \nonumber\\
    &\quad\leq \sqrt{m_n} \left| \sum_{i=1}^{n}K_{m_n\Delta_n}(t_{i-1}-\tau) \sigma^2_{t_{i-1}} \EE_{i-1} \left[ (\Delta_i W)^2 \mathbf{1}_{\{|x_i + \varepsilon_i| \geq v_n\}} \Delta_i B \right] \right|\\
    &\qquad+ \sqrt{m_n} \left| \sum_{i=1}^{n}K_{m_n\Delta_n}(t_{i-1}-\tau) \sigma^2_{t_{i-1}} \EE_{i-1} \left[ (\Delta_i W)^2\Delta_i B \right] \right|.
\end{align*} 
For the first term, by H\"older's inequality and {Proposition \ref{prop}}, we have
\begin{align}
    &\sqrt{m_n} \left| \sum_{i=1}^{n}K_{m_n\Delta_n}(t_{i-1}-\tau) \sigma^2_{t_{i-1}} \EE_{i-1} \left[ (\Delta_i W)^2 \mathbf{1}_{\{|x_i + \varepsilon_i| \leq v_n\}} \Delta_i B \right] \right| \nonumber\\
    &\leq C \sqrt{m_n} \sum_{i=1}^{n}K_{m_n\Delta_n}(t_{i-1}-\tau) \EE_{i-1} \left[ (\Delta_i W)^8 \right]^{1/4} \EE_{i-1} \left[ (\Delta_i B)^4 \right]^{1/4} \EE_{i-1} \left[ |x_i + \varepsilon_i|^2 \right]^{1/2}/ v_n\nonumber\\
&\lesssim \sqrt{m_n} n \Delta_n^{3/2} \Delta_n^{1/2}/v_n,
\end{align}
which vanishes under $v_n\gg \Delta_n^{1/2}$ and $m_n\Delta_n\rightarrow 0.$

For the second term, given $m_n\Delta_n\rightarrow0$, it suffices to show $\sup_i\EE_{i-1}[(\Delta_i W)^2 \Delta_i B ] = o_P(\Delta_n^{3/2})$. Denote $\rho_s = d\langle W,B\rangle_s/ds$. $\rho_s$ is c\`adl\`ag and bounded on the interval $[t_{i-1}, t_i]$. By It\^o's Lemma, Cauchy-Schwarz inequality, and Doob's inequality, we have
 $$\sup_i\EE_{i-1}[(\Delta_i W)^2 \Delta_i B ]\lesssim \Delta_n^{3/2} \sup_i\sqrt{\EE_{i-1}(\rho_{t_i}-\rho_{t_{i-1}})^2}.$$
 We notice that $\rho$ is right-continuous and uniformly bounded on $[0,1]$ and thus, $\EE_{i-1}(\rho_{t_i}-\rho_{t_{i-1}})^2 \rightarrow 0$
Next, we check the case where \( x_i \) and $M$ are replaced with \( \Delta_i J \) and $B$ in \eqref{cross_technical}, respectively. Similar to \eqref{jump_technical}, we need to show that
\begin{align}\label{jump_technical_B}
    \sqrt{m_n} \sum_{i=1}^{n}K_{m_n\Delta_n}(t_{i-1}-\tau) \EE_{i-1} \left[ (\Delta_i J)^2 \mathbf{1}_{\{|x_i + \varepsilon_i| \leq v_n\}} |\Delta_i B| \right] \xrightarrow{P} 0.
\end{align}
Following the similar technique as in \eqref{jump_technical} and \eqref{holder}, note that when \( |\Delta_i W| > v_n \) or \( |\varepsilon_i| > v_n \), by H\"older's inequality, and that \( P(|\varepsilon_i| > v_n) \) and \( P(|\Delta_i W| > v_n) \) decay faster than any power of \( \Delta_n \) due to \( \Delta_n^{1/2}/v_n \ll \Delta_n^{1/2 - \beta} \), we  obtain
\[
\sqrt{m_n} \sum_{i=1}^{n}K_{m_n\Delta_n}(t_{i-1}-\tau) \EE_{i-1} \left[ (\Delta_i J)^2 \mathbf{1}_{\{|x_i + \varepsilon_i| \leq v_n\}} |\Delta_i B| \left( \mathbf{1}_{\{|\Delta_i W| > v_n\}} + \mathbf{1}_{\{|\varepsilon_i| > v_n\}} \right) \right] = o_P(1).
\]
So, when \( |\Delta_i W| \leq v_n \) and \( |\varepsilon_i| \leq v_n \), note \( |x_i + \varepsilon_i| \leq v_n \) implies that \( |\Delta_i J| \leq Cv_n \), for some positive constant \( C \). Then, for \eqref{jump_technical_B}, it suffices to show
\[
\sqrt{m_n} \sum_{i=1}^{n}K_{m_n\Delta_n}(t_{i-1}-\tau) \EE_{i-1} \left[ (\Delta_i J)^2 \mathbf{1}_{\{|\Delta_i J| \leq v_n\}} |\Delta_i B| \right] \xrightarrow{P} 0.
\]
By H\"older's inequality and Lemma 17 in \cite{BONIECE2024}, for any \( p, q > 1 \) such that \( 1/p + 1/q = 1 \),
\begin{align*}
         &\sqrt{m_n} \sum_{i=1}^{n}K_{m_n\Delta_n}(t_{i-1}-\tau) \EE_{i-1} \left[ (\Delta_i J)^2 \mathbf{1}_{\{|\Delta_i J| \leq v_n\}} |\Delta_i B| \right]\nonumber\\
    &\quad\leq \sqrt{m_n} \sum_{i=1}^{n}K_{m_n\Delta_n}(t_{i-1}-\tau) \EE_{i-1} \left[ (\Delta_i J)^{2p} \mathbf{1}_{\{|\Delta_i J| \leq v_n\}} \right]^{1/p} \EE_{i-1} \left[ |\Delta_i B|^q \right]^{1/q}\nonumber\\
    &\quad\lesssim \sqrt{m_n} \sum_{i=1}^{n}K_{m_n\Delta_n}(t_{i-1}-\tau) (\Delta_n v_n^{2p-Y})^{{1/p}}\Delta_n^{1/2}
\lesssim  \sqrt{m_n} n \Delta_n^{1/p + 1/2} v_n^{2 - Y/p}.
\end{align*}
So, {again}, making \( p \) close to 1 (and \( q \) large), we get the convergence to 0, establishing \eqref{cross_technical} with \( M = B \).

Lastly, we want to show \eqref{cross_technical} for any bounded martingale $M$ which is orthogonal to $W$. We first consider the case when $x_i$ is replaced with $\Delta_iJ$. As before, we can deal with the cases $|\Delta_iW|>v_n$ or $\varepsilon_i>v_n$, since \( P(|\varepsilon_i| > v_n) \) and \( P(|\Delta_i W| > v_n) \) decays sufficiently fast. %
So, we assume that \( |\Delta_i W| \leq {v_n} \) and \( |\varepsilon_i| \leq {v_n} \), {so that, together with \( |x_i + \varepsilon_i| \leq v_n \), we have \( |\Delta_i J| \leq 3v_n \). Then,} we proceed as follows:
\begin{align*}
    &\sqrt{m_n} \sum_{i=1}^{n}K_{m_n\Delta_n}(t_{i-1}-\tau) \EE_{i-1} \left[ |\Delta_i J|^2 \mathbf{1}_{\{|\Delta_i J| \leq 3v_n\}} |\Delta_i M| \right]\nonumber\\
&\quad\leq \sqrt{m_n} \sum_{i=1}^{n}K_{m_n\Delta_n}(t_{i-1}-\tau) \EE_{i-1} \left[ |\Delta_i J|^4 \mathbf{1}_{\{|\Delta_i J| \leq 3v_n\}} \right]^{1/2} \EE_{i-1} \left[ |\Delta_i M|^2 \right]^{1/2} \nonumber\\
 &\quad \leq\sqrt{m_n} \left( \sum_{i=1}^{n}K_{m_n\Delta_n}(t_{i-1}-\tau) \EE_{i-1}\left[ |\Delta_i J|^4 \mathbf{1}_{\{|\Delta_i J| \leq 3v_n\}} \right]  \sum_{i=1}^{n}K_{m_n\Delta_n}(t_{i-1}-\tau) \EE_{i-1} \left[ |\Delta_i M|^2 \right] \right)^{1/2} \nonumber\\
& \quad\lesssim\sqrt{m_n} (n\Delta_nv_n^{4 - Y})^{1/2} \Big(\frac{1}{m_n^{1/2}\Delta_n^{1/2}}\Big) = (\Delta_n^{-1} v_n^{4 - Y})^{1/2} \ll 1,
\end{align*}
which follows from \( v_n \ll \Delta_n^{1/(4 - Y)} \). For the cross-product $\sigma_{t_{i-1}}\Delta_iW\Delta_iJ$, when expanding \( x_i^2 \), and again assuming \( |\Delta_i J| \leq 3v_n \), we have, for any $p>1$:
\begin{align*}
    &\sqrt{m_n} \sum_{i=1}^{n}K_{m_n\Delta_n}(t_{i-1}-\tau) \EE_{i-1} \left[ |\Delta_i J||\Delta_i W| \mathbf{1}_{\{|\Delta_i J| \leq 3v_n\}} |\Delta_i M| \right]\nonumber\\
&\quad\leq \sqrt{m_n} \sum_{i=1}^{n}K_{m_n\Delta_n}(t_{i-1}-\tau) \EE_{i-1} \left[ |\Delta_i J|^{2p} \mathbf{1}_{\{|\Delta_i J| \leq 3v_n\}} \right]^{\frac{1}{2p}} \EE_{i-1} \left[ |\Delta_i W|^{\frac{2p}{p-1}} \right]^{\frac{p-1}{2p}} \EE_{i-1} \left[ |\Delta_i M|^{2} \right]^{\frac{1}{2}}\nonumber\\
&\quad\leq \sqrt{m_n}\Delta_n^{1/2}\left(\sum_{i=1}^{n}K_{m_n\Delta_n}(t_{i-1}-\tau) \big(\EE_{i-1} \left[ |\Delta_i J|^{2p} \mathbf{1}_{\{|\Delta_i J| \leq 3v_n\}} \right]\big)^{1/p} \right)^{1/2} \nonumber\\
&\qquad\times\left(\sum_{i=1}^{n}K_{m_n\Delta_n}(t_{i-1}-\tau) \EE_{i-1} \left[ |\Delta_i M|^2 \right] \right)^{1/2}\nonumber\\
&\quad\lesssim m_n^{1/2}\Delta_n^{1/2}(n\Delta_n^{1/p}v_n^{2 - Y/p})^{1/2} \Big(\frac{1}{m_n^{1/2}\Delta_n^{1/2}}\Big).
\end{align*}
By taking $p$ close to 1, the above vanishes given $v_n \rightarrow 0$. Finally, we consider the case when \( x_i \) is replaced with \( \sigma_{t_{i-1}} \Delta_i W \). Note that
\[
(\Delta_i W)^2 = 2 \int_{t_{i-1}}^{t_i} (W_s - W_{t_{i-1}}) dW_s + \Delta_n,
\]
and since \( M \) is orthogonal to \( W \), 
\[
\EE_{i-1} \left[ \left( (\Delta_i W)^2 - \Delta_n \right) \Delta_i M \right] = 0,
\]
which implies that
$\EE_{i-1} \left[ (\Delta_i W)^2 \Delta_i M \right] = 0$. 
Thus, for any \( p, q > 0 \) such that \( 1/p + 1/q = 1/2 \), by orthogonality between $M$ and $W$, we have
\begin{align}
    &\sqrt{m_n} \left| \sum_{i=1}^{n}K_{m_n\Delta_n}(t_{i-1}-\tau) \sigma^2_{t_{i-1}} \EE_{i-1} \left[ (\Delta_i W)^2 \mathbf{1}_{\{|x_i + \varepsilon_i| \leq v_n\}} \Delta_i M \right] \right|\nonumber\\
 &\quad =\sqrt{m_n} \left|- \sum_{i=1}^{n}K_{m_n\Delta_n}(t_{i-1}-\tau) \sigma^2_{t_{i-1}} \EE_{i-1} \left[ (\Delta_i W)^2 \mathbf{1}_{\{|x_i + \varepsilon_i|\geq v_n\}} \Delta_i M \right] \right|\nonumber\\
&\quad \leq  \sqrt{m_n} v_n^{-1} \sum_{i=1}^{n}K_{m_n\Delta_n}(t_{i-1}-\tau) \EE_{i-1} \left[ |\Delta_i W|^2 |x_i + \varepsilon_i| |\Delta_i M| \right]\nonumber\\
&\quad \leq  \sqrt{m_n} v_n^{-1} \sum_{i=1}^{n}K_{m_n\Delta_n}(t_{i-1}-\tau) \EE_{i-1} \left[ |\Delta_i W|^{2p} \right]^{1/p} \EE_{i-1} \left[ |x_i + \varepsilon_i|^q \right]^{1/q} \EE_{i-1} \left[ |\Delta_i M|^2 \right]^{1/2}\nonumber\\
&\quad\leq  \sqrt{m_n} v_n^{-1} \left( \sum_{i=1}^{n}K_{m_n\Delta_n}(t_{i-1}-\tau) \EE_{i-1} \left[ |\Delta_i W|^{2p} \right]^{2/p} \EE_{i-1} \left[ |x_i + \varepsilon_i|^q \right]^{2/q} \right)^{1/2}\nonumber\\
&\qquad\times \left( \sum_{i=1}^{n}K_{m_n\Delta_n}(t_{i-1}-\tau) \EE_{i-1} \left[ |\Delta_i M|^2 \right] \right)^{1/2}\nonumber\\
&\quad \lesssim  \sqrt{m_n} v_n^{-1} \left( n\Delta_n^2 \Delta_n^{2/q} \right)^{1/2} \times \Big(\frac{1}{m_n^{1/2}\Delta_n^{1/2}}\Big)\\
&\quad\leq (m_n\Delta_n)^{1/2}  \Delta_n^{1/q}v_n^{-1} \times \Big(\frac{1}{m_n^{1/2}\Delta_n^{1/2}}\Big)=\Delta_n^{1/q}v_n^{-1}, \nonumber
\end{align}
which vanishes by taking \( q \) close enough to 2 and \( p \) large enough given that $v_n \gg \Delta_n^{\beta}$ with $\beta<1/2$. 
\end{proof}

\begin{proof}[Proof of \eqref{cov_summand}]
Recall that $(\Delta_iX)^2 = x_i^2 + 2x_i\varepsilon_i + \varepsilon_i^2.$ We analyze 
$\EE_{i-1}\big((x_i+2x_i\varepsilon_i+\varepsilon_i^2)\mathbf{1}_{\{|\Delta_iX|\leq v_n\}}\Delta_iB\big)$ term by term. For the third term, by Cauchy-Schwarz {and \eqref{KME0},} we have
\begin{align*}
   \sup_i {|\EE_{i-1}\big(\varepsilon_i^2\mathbf{1}_{\{|\Delta_iX|\leq v_n\}}\Delta_iB\big)|}\leq \sup_i \big(\EE_{i-1}\varepsilon_i^4 \E_{i-1}\Delta_iB^2\big)^{1/2} = O_P(\Delta_n^{3/2})(\Delta_n^{1/2}) = o_P(\Delta_n^{3/2}).
\end{align*}
For the second term, similar to \eqref{holder}, when $|\varepsilon_i|>v_n$, by Markov's inequality and \eqref{KME0}, we can obtain that
\[
P(|\varepsilon_i| > v_n) \leq K \Delta_n^{1 + (1/2 - \beta)r}
\]
for all {\( r > 0 \)}. By H\"older's inequality, for $p_1, p_2,p_3>1$ such that $\sum_{i=1}^31/p_i = 1$, we may write
\begin{align*}
    \sup_i|\EE_{i-1}\big({\varepsilon_ix_i}\mathbf{1}_{|\Delta_iX|\leq v_n, |\varepsilon_i|>v_n}\Delta_iB\big)|\leq \sup_i \EE_{i-1}|\varepsilon_i^{p_1}|^{1/p_1}\EE_{i-1}|x_i^{p_2}\mathbf{1}_{\{|\Delta_iX|\leq v_n\}}|^{1/p_2}P(|\varepsilon_i|>v_n)^{1/p_3}.
\end{align*}
As all the quantities on the right hand side are bounded but $P(|\varepsilon_i|>v_n)$ vanishes faster than any fixed power of $\Delta_n$. By taking large enough $r$, we can obtain
$$\sup_i|\EE_{i-1}\big(|\varepsilon_ix_i|\mathbf{1}_{|\Delta_iX|\leq v_n, {|\varepsilon_i|>v_n}}\Delta_iB\big)| = o_P(\Delta_n^{3/2}).$$
Then what remains for the second term is the case that $|\varepsilon_i|\leq v_n$. Since $|\varepsilon_i|\leq v_n$ and $|\Delta_iX|\leq v_n$ implies $|x_i|\leq 2v_n$. By Cauchy-Schwarz, we obtain
\begin{align*}
    \sup_i\EE_{i-1} \left[ |\varepsilon_i x_i| \mathbf{1}_{\{|x_i| \leq 2v_n\}} \mathbf{1}_{\{|\varepsilon_i| \leq v_n\}} |\Delta_i B| \right]
     &\leq \sup_i\Big(\EE_{i-1} \left[ \varepsilon_i^2 x_i^2 \mathbf{1}_{\{|x_i| \leq 2v_n\}} \right] \EE_{i-1} \left[ (\Delta_i B)^2 \right]\Big)^{1/2} \nonumber\\
    &\leq O_P(\Delta_n^{3/4}) O_P(\Delta_n^{1/2})\Delta_n^{1/2} =o_P(\Delta_n^{3/2}),
\end{align*}
Then what remains is the first term involving $x_i^2$. We aim to show
\begin{align}\label{covariance_x2}
     \sup_i\EE_{i-1} \left[ x_i^2 \mathbf{1}_{\{|x_i+\varepsilon_i| \leq v_n\}} \Delta_i B \right] = o_P(\Delta_n^{3/2}).
\end{align}
We start by replacing $x_i^2$ with $(\sigma_{t_{i-1}}\Delta_iW)^2$. Note that when \( |\Delta_i W| > v_n \) or \( |\varepsilon_i| > v_n \), by H\"older's inequality, and that \( P(|\varepsilon_i| > v_n) \) and \( P(|\Delta_i W| > v_n) \) decay faster than any power of \( \Delta_n \) since \( \Delta_n^{1/2}/v_n \ll \Delta_n^{1/2 - \beta} \), using similar techniques as in \eqref{jump_technical_1} and the boundness of $\sigma$, we obtain
\begin{align*}
    \sup_i\EE_{i-1} \left[\sigma_{t_{i-1}} (\Delta_i W)^2 \mathbf{1}_{\{|x_i + \varepsilon_i| \leq v_n\}} |\Delta_i B| \left( \mathbf{1}_{\{|\Delta_i W| > v_n\}} + \mathbf{1}_{\{|\varepsilon_i| > v_n\}} \right) \right] = o_P(\Delta_n^{3/2}).
\end{align*}
And further notice that when $|\Delta_iW|\leq v_n$ and $\varepsilon_i\leq v_n$,  $|x_i+\varepsilon_i|\leq v_n$  implies $|\Delta_iJ| \leq Cv_n$ for some positive constant $C$ that we assume for simplicity is $1$. Then equivalently, we may write
\begin{align*}
    &\sup_i\EE_{i-1} \left[\sigma_{t_{i-1}} (\Delta_i W)^2 \mathbf{1}_{\{|x_i + \varepsilon_i| \leq v_n\}} \Delta_i B  \mathbf{1}_{\{|\Delta_i W| \leq v_n\}}\mathbf{1}_{\{|\varepsilon_i| \leq v_n\}}  \right] \nonumber\\
    &= \sup_i\EE_{i-1} \left[\sigma_{t_{i-1}} (\Delta_i W)^2 \mathbf{1}_{\{|\Delta_iJ| \leq v_n\}} \Delta_i B \right] \nonumber\\
    &= \sup_i \sigma_{t_{i-1}}\EE_{i-1}[(\Delta_i W)^2 \Delta_i B ] P(|\Delta_iJ| \leq v_n).
\end{align*}
Then it suffices to show $\sup_i\EE_{i-1}[(\Delta_i W)^2 \Delta_i B ] = o_P(\Delta_n^{3/2})$. Similar to how we handle the second term in \eqref{technical_WB}, recall $\rho_s = d\langle W,B\rangle_s/ds$. By It\^o's Lemma, Cauchy-Schwarz inequality, and Doob's inequality, we have
 $$\sup_i\EE_{i-1}[(\Delta_i W)^2 \Delta_i B ]\lesssim \Delta_n^{3/2} \sup_i\sqrt{\EE_{i-1}(\rho_{t_i}-\rho_{t_{i-1}})^2}.$$
Since $\rho$ is right-continuous and uniformly bounded on $[0,1]$, $\EE_{i-1}(\rho_{t_i}-\rho_{t_{i-1}})^2 \rightarrow 0$ and \eqref{covariance_x2} holds with $x_i^2 = \sigma_{t_{i-1}}^2(\Delta_iW)^2$.

Now replace $x_i^2$ with $(\Delta_iJ)^2$ in \eqref{covariance_x2}. We need to show
\begin{align*}
     \sup_i\EE_{i-1} \left[ (\Delta_iJ)^2 \mathbf{1}_{\{|x_i+\varepsilon| \leq v_n\}} \Delta_i B \right] = o_P(\Delta_n^{3/2}).
\end{align*}
Note that when \( |\Delta_i W| > v_n \) or \( |\varepsilon_i| > v_n \), by H\"older's inequality, and that \( P(|\varepsilon_i| > v_n) \) and \( P(|\Delta_i W| > v_n) \) decay faster than any power of \( \Delta_n \) since \( \Delta_n^{1/2}/v_n \ll \Delta_n^{1/2 - \beta} \), by similar techniques as in \eqref{jump_technical_1}, we  obtain
\begin{align*}
    \sup_i\EE_{i-1} \left[ (\Delta_i J)^2 \mathbf{1}_{\{|x_i + \varepsilon_i| \leq v_n\}} |\Delta_i B| \left( \mathbf{1}_{\{|\Delta_i W| > v_n\}} + \mathbf{1}_{\{|\varepsilon_i| > v_n\}} \right) \right] = o_P(\Delta_n^{3/2}).
\end{align*}
And further notice that when $|\Delta_iW|\leq v_n$ and $\varepsilon_i\leq v_n$,  $|x_i+\varepsilon_i|\leq v_n$  implies $|\Delta_iJ| \leq Cv_n$ for some positive constant $C$ that we assume for simplicity is $1$. Then by H\"older inequality and Lemma 17 in \cite{BONIECE2024}, for any $p,q>1$ such that $1/p+1/q = 1$, we obtain
\begin{align*}
     \sup_i\EE_{i-1} \left[ (\Delta_i J)^2 \mathbf{1}_{\{|\Delta_iJ| \leq v_n\}} |\Delta_i B| \right] &\leq \sup_i\EE_{i-1}\big[(\Delta_i J)^{2p} \mathbf{1}_{\{|\Delta_iJ| \leq v_n\}}\big]^{1/p} \EE_{i-1}\big[|\Delta_iW^q|\big]^{1/q}\nonumber\\
     &\lesssim (\Delta_nv_n^{2p-Y})^{1/p}\Delta_n^{1/2}.
\end{align*}
By making $p$ close to 1 and q large, the above bound is of order $o_P(\Delta_n^{3/2})$.

What remains to show is to replace $x_i^2$ with $\Delta_iW\Delta_iJ$ in \eqref{covariance_x2}. Similar to the case for $(\Delta_iJ)^2$. By H\"older's inequality, and that \( P(|\varepsilon_i| > v_n) \) and \( P(|\Delta_i W| > v_n) \) decay faster than any power of \( \Delta_n \) since \( \Delta_n^{1/2}/v_n \ll \Delta_n^{1/2 - \beta} \), with similar techniques as in \eqref{jump_technical_1}, we obtain
\begin{align*}
    \sup_i \sigma_{t_{i-1}}\EE_{i-1} \left[ |\Delta_i J||\Delta_iW| \mathbf{1}_{\{|x_i + \varepsilon_i| \leq v_n\}} |\Delta_i B| \left( \mathbf{1}_{\{|\Delta_i W| > v_n\}} + \mathbf{1}_{\{|\varepsilon_i| > v_n\}} \right) \right] = o_P(\Delta_n^{3/2}).
\end{align*}
And further notice that when $|\Delta_iW|\leq v_n$ and $\varepsilon_i\leq v_n$,  $|x_i+\varepsilon_i|\leq v_n$  implies $|\Delta_iJ| \leq Cv_n$ for some positive constant $C$ that we assume for simplicity is $1$. Then by H\"older inequality and Lemma 17 in \cite{BONIECE2024}, for any $p,q>1$ such that $1/p+1/q = 1/2$, we obtain
\begin{align*}
    & \sup_i\sigma_{t_{i-1}}\EE_{i-1} \left[ |\Delta_i J||\Delta_iW| \mathbf{1}_{\{|\Delta_iJ| \leq v_n\}} |\Delta_i B| \right] \\
    &\leq \sup_i\sigma_{t_{i-1}}\EE_{i-1}\big[(\Delta_i J)^{2} \mathbf{1}_{\{|\Delta_iJ| \leq v_n\}}\big]^{1/2} \EE_{i-1}\big[|\Delta_iW^q|\big]^{1/q}\EE_{i-1}\big[|\Delta_iB^q|\big]^{1/q}\nonumber\\
     &\lesssim (\Delta_nv_n^{2-Y})^{1/2}\Delta_n = o_P(\Delta_n^{3/2}).
\end{align*}
Then we have proved \eqref{cov_summand} and $Z_2 \perp Z_1$.
\end{proof}
\subsection{Auxiliary estimates for Theorem \ref{thm_clt_difference}}\label{TecnPrf2}
\begin{proof}[Proof of \eqref{diff_fn_approximation}]
It suffices to prove the following two terms:
\[
A_{1} := \frac{m_n^{1/2}}{\Delta_n^{-1/2} v_n^{2-Y/2}} \sum_{i=1}^nK_{m_n\Delta_n}(t_{i-1}-\tau) \EE_{i-1} \left[ (\Delta_i X)^2 \mathbf{1}_{\{v_n \leq |\Delta_i X| \leq v_n(1+v_n^\eta)\}} |\Delta_i M| \right] = o_P(1),
\]
and
\[
A_{2} := \frac{m_n^{1/2}}{\Delta_n^{-1/2} v_n^{2-Y/2}} \sum_{i=1}^nK_{m_n\Delta_n}(t_{i-1}-\tau) \EE_{i-1} \left[ (\Delta_i X)^2 \mathbf{1}_{\{\zeta v_n(1-v_n^\eta) \leq |\Delta_i X| \leq \zeta v_n\}} |\Delta_i M| \right] = o_P(1).
\]

We first consider $A_1$. By Cauchy-Schwarz and Proposition \ref{prop}, we obtain:
\begin{align*}
    |A_{1}| &\leq \frac{m_n^{1/2}}{\Delta_n^{-1/2} v_n^{2-Y/2}} \sum_{i=1}^nK_{m_n\Delta_n}(t_{i-1}-\tau) \EE_{i-1} \left[ (\Delta_i X)^4 \mathbf{1}_{\{v_n \leq |\Delta_i X| \leq v_n(1+v_n^\eta)\}} \right]^{\frac{1}{2}} \EE_{i-1} \left[ (\Delta_i M)^2 \right]^{\frac{1}{2}}\nonumber\\
&\lesssim\frac{m_n^{1/2}}{\Delta_n^{-1/2} v_n^{2-Y/2}} \left( \Delta_n v_n^{4-Y} v_n^\eta \right)^{\frac{1}{2}} \sum_{i=1}^n K_{m_n\Delta_n}(t_{i-1}-\tau)\EE_{i-1} \left[ (\Delta_i M)^2 \right]^{\frac{1}{2}}.
\end{align*}
Since $M$ is square integrable, it yields $\sum_{i=1}^n(\Delta_i M)^2 = O_P(1)$. Then we find:
\begin{align}\label{kernel_sqrt_m}
    \sum_{i=1}^n K_{m_n\Delta_n}(t_{i-1}-\tau) \EE_{i-1} \left[ (\Delta_i M)^2 \right]^{\frac{1}{2}} 
    \leq& \Big(\sum_{i=1}^n K^2_{m_n\Delta_n}(t_{i-1}-\tau)\Big)^{1/2}\Big(\sum_{i=1}^n(\Delta_i M)^2\Big)^{1/2}\nonumber\\
    \lesssim& \frac{1}{\sqrt{m_n}\Delta_n}.
\end{align}
Along with $m_n\ll \Delta_n^{-1}$ and the boundedness of $K$, we conclude that
$$|A_{1}| \lesssim\frac{1}{\Delta_n^{1/2}v_n^{2-Y/2}} \Delta_n^{1/2}v_n^{2-Y/2}v_n^{\frac{\eta}{2}} = v_n^{\frac{\eta}{2}} =o_P(1).$$

A similar argument shows that \(A_{2} = o_P(1)\).
\end{proof}
\begin{proof}[Proof of \eqref{diff_fn_technical}]
    {For convenience, we rewrite \eqref{diff_fn_technical} here: 
    \begin{align}\label{diff_fn_technical2b}
    \frac{m_n^{1/2}}{\Delta_n^{-1/2} v_n^{2-Y/2}} \sum_{i=1}^n K_{m_n\Delta_n}(t_{i-1}-\tau) \EE_{i-1} \left[ (\Delta_i X)^2 f_n(\Delta_i X) \Delta_i M \right] = o_P(1).
\end{align}
The proof follows along the lines of the proof of an analogous identity in \cite{BONIECE2024} (see Lemma 7 and Eq.~(D.10) in the supplemental material of \cite{BONIECE2024supplement} posted online). More specifically, it was proved there in that 
\begin{align}\label{diff_fn_technical2before}
    \frac{1}{v_n^{2-Y/2}}\sum_{i=1}^n \EE_{i-1} \left[ (\Delta_i X)^2 f_n(\Delta_i X) \Delta_i M \right] = o_P(1).
\end{align}
Note that \eqref{diff_fn_technical2b} does not directly follow from \eqref{diff_fn_technical2before} and the boundedness of $K$ because $m_n\Delta_n\to{}0$. So, we need to account for the localizing effect of the kernel, for which we applied some different bounds.}

    Using the notation introduced in \eqref{discretizing}, consider the decomposition  
\[
(\Delta_i X)^2 = (x_i + \varepsilon_i)^2 = x_i^2 + 2x_i\varepsilon_i + \varepsilon_i^2.
\]  
We begin by analyzing the second and third terms on the right-hand side by Cauchy-Schwarz. In particular, applying \eqref{KME0}, we obtain that
\begin{align*}
   & \frac{m_n^{1/2}}{\Delta_n^{-1/2} v_n^{2-Y/2}} \sum_{i=1}^nK_{m_n\Delta_n}(t_{i-1}-\tau) \EE_{i-1} \left[ \varepsilon_i^2 f_n(\Delta_i X) |\Delta_i M| \right] \nonumber\\
   &\leq \frac{m_n^{1/2}}{\Delta_n^{-1/2} v_n^{2-Y/2}} \sum_{i=1}^nK_{m_n\Delta_n}(t_{i-1}-\tau) \EE_{i-1} \left[ \varepsilon_i^4 \right]^{\frac{1}{2}} \EE_{i-1} \left[ (\Delta_i M)^2 \right]^{\frac{1}{2}}\nonumber\\
   &\lesssim\frac{m_n^{1/2}}{\Delta_n^{-1/2} v_n^{2-Y/2}} \Delta_n^{\frac{3}{2}} \sum_{i=1}^n K_{m_n\Delta_n}(t_{i-1}-\tau)\EE_{i-1} \left[ (\Delta_i M)^2 \right]^{\frac{1}{2}} = o_P(1),
\end{align*}
since, by \eqref{kernel_sqrt_m}, \(\sum_{i=1}^n K_{m_n\Delta_n}(t_{i-1}-\tau) \EE_{i-1} \left[ (\Delta_i M)^2 \right]^{\frac{1}{2}} \lesssim \frac{1}{\sqrt{m_n}\Delta_n}\) and, recalling that $v_n\gg\Delta_n^{1/2}$,
\[\frac{m_n^{1/2}}{\Delta_n^{-1/2} v_n^{2-Y/2}} \Delta_n^{3/2}\frac{1}{\sqrt{m_n}\Delta_n} = \Delta_n v_n^{Y/2-2} \ll \Delta_n^{Y/4} \ll 1.\]

For the term involving \(x_i\varepsilon_i\), note that when \(|\varepsilon_i| > v_n\), by Lemma 3 in \cite{BONIECE2024supplement}, for some constant $K>0$, we have
\begin{align}\label{varepsilon_tail}
    P_{i-1}(|\varepsilon_i| > v_n) \leq \frac{\EE_{i-1}[|\varepsilon_i|^r]}{v_n^r} \leq K\Delta_n^{1+r\left(\frac{1}{2}-\beta\right)},
\end{align}
which holds for sufficiently large \(r\) due to {\eqref{KME0}} and \(v_n \gg \Delta_n^\beta\). Similar to \eqref{holder}, using H\"older's inequality and {Proposition \ref{prop}}, we conclude that:
\[
\frac{m_n^{1/2}}{\Delta_n^{-1/2} v_n^{2-Y/2}} \sum_{i=1}^nK_{m_n\Delta_n}(t_{i-1}-\tau) \EE_{i-1} \left[ |\varepsilon_ix_i| \mathbf{1}_{\{|x_i| \leq Cv_n\}} |\Delta_i M| \mathbf{1}_{\{|\varepsilon_i| > v_n\}} \right] = o_P(1).
\]

When \(|\varepsilon_i| \leq v_n\), observe that \(|\Delta_i X| \leq \zeta v_n\) implies \(|x_i| = |\Delta_i X - \varepsilon_i| \leq (1 + \zeta)v_n =: Cv_n\). Then, by Proposition 3 in \cite{BONIECE2024supplement}, {\eqref{KME0}}, and \eqref{kernel_sqrt_m}, we get:
\begin{align*}
    &\frac{m_n^{1/2}}{\Delta_n^{-1/2} v_n^{2-Y/2}} \sum_{i=1}^nK_{m_n\Delta_n}(t_{i-1}-\tau) \EE_{i-1} \left[ |\varepsilon_ix_i| \mathbf{1}_{\{|x_i| \leq Cv_n\}} |\Delta_i M| \right]\\
\quad&\leq \frac{m_n^{1/2}}{\Delta_n^{-1/2} v_n^{2-Y/2}} \sum_{i=1}^nK_{m_n\Delta_n}(t_{i-1}-\tau) \EE_{i-1} \left[ \varepsilon_i^4 \right]^{\frac{1}{4}} \EE_{i-1} \left[ x_i^4 \mathbf{1}_{\{|x_i| \leq Cv_n\}} \right]^{\frac{1}{4}} \EE_{i-1} \left[ (\Delta_i M)^2 \right]^{\frac{1}{2}} \\
\quad&\lesssim \frac{m_n^{1/2}}{\Delta_n^{-1/2} v_n^{2-Y/2}} \Delta_n^{\frac{3}{4}} \Delta_n^{\frac{1}{2}} \sum_{i=1}^n K_{m_n\Delta_n}(t_{i-1}-\tau)\EE_{i-1} \left[ (\Delta_i M)^2 \right]^{\frac{1}{2}} \nonumber\\
\quad&= \frac{m_n^{1/2}}{\Delta_n^{-1/2} v_n^{2-Y/2}} \Delta_n^{\frac{3}{4}} \Delta_n^{\frac{1}{2}}\frac{1}{\sqrt{m_n}\Delta_n} = \Delta_n^{\frac{3}{4}}v_n^{Y/2-2} \ll \Delta_n^{\frac{Y-1}{4}} \ll 1,
\end{align*}
since \(\Delta_n^{1/2} \ll v_n\) and $Y>1$.

Then we just need to focus on the last term involving \(x_i^2\). We need to show that
\begin{align}\label{x2_difference}
    \frac{m_n^{1/2}}{\Delta_n^{-1/2} v_n^{2-Y/2}} \sum_{i=1}^n K_{m_n\Delta_n}(t_{i-1}-\tau)\EE_{i-1} \left[ x_i^2 f_n(\Delta_i X) \Delta_i M \right] = o_P(1).
\end{align}
We start by replacing $x_i$ with $b_{t_{i-1}}\Delta_n$ in \eqref{x2_difference} with arbitrary $M$. By the boundedness of $K$, $b$ and Cauchy-Schwarz, we have
\begin{align*}
    &\frac{m_n^{1/2}\Delta_n^2}{\Delta_n^{-1/2} v_n^{2-Y/2}} \sum_{i=1}^n K_{m_n\Delta_n}(t_{i-1}-\tau)\EE_{i-1} \left[ b_{t_{i-1}}^2 f_n(\Delta_i X) \Delta_i M \right] \nonumber\\
    &\quad\lesssim\frac{m_n^{1/2}\Delta_n^2}{\Delta_n^{-1/2} v_n^{2-Y/2}m_n\Delta_n} \sum_{i=1}^n \EE_{i-1} \left[(\Delta_i M)^2 \right]^{1/2} \nonumber\\
    &\quad\lesssim\frac{m_n^{1/2}\Delta_n^{3/2}}{ v_n^{2-Y/2}m_n\Delta_n}\Big( \sum_{i=1}^n \EE_{i-1} \left[(\Delta_i M)^2 \right]\Big)^{1/2} \nonumber\\
    &\quad\lesssim \frac{\Delta_n^{1/2}}{ v_n^{2-Y/2}\Delta_n^{-1/2}v_n^{Y/2}} \rightarrow 0
\end{align*}
where the last line follows from $m_n\gg \Delta_n^{-1}v_n^\gamma$ with $\gamma<Y$ and $v_n\gg\Delta_n^{1/2}$. Besides, recall that $\hat x_i = \sigma_{t_{i-1}}\Delta_iW + \chi_{t_{i-1}}\Delta_iJ$ and by applying Cauchy-Schwarz twice and boundedness of $K$, we obtain
\begin{align*}
    &\frac{m_n^{1/2}\Delta_n}{\Delta_n^{-1/2} v_n^{2-Y/2}} \sum_{i=1}^n K_{m_n\Delta_n}(t_{i-1}-\tau)\EE_{i-1} \left[ b_{t_{i-1}}\hat x_i f_n(\Delta_i X) \Delta_i M \right] \nonumber\\
    \quad&\quad\lesssim\frac{m_n^{1/2}\Delta_n}{\Delta_n^{-1/2} v_n^{2-Y/2}m_n\Delta_n} \Big(\sum_{i=1}^n \EE_{i-1} \left[(\Delta_i M)^2 \right]\Big)^{1/2}\Big(\sum_{i=1}^n \EE_{i-1} \left[\hat x_i^2f_n^2(\Delta_i X) \right]\Big)^{1/2}\nonumber\\
    \quad&\quad\lesssim\frac{\Delta_n^{1/2}}{ v_n^{2-Y/2}m_n^{1/2}} \lesssim \frac{\Delta_n^{1/2}}{ v_n^{2-Y/2}\Delta_n^{-1/2}v_n^{Y/2}} \rightarrow 0,
\end{align*}
where the last line follows from $m_n\gg \Delta_n^{-1}v_n^\gamma$ with $\gamma<Y$ and $v_n\gg\Delta_n^{1/2}$.

Now we just need to work on the case that \(x_i = \hat x_i = \sigma_{t_{i-1}} \Delta_i W +   \Delta_i J\). We first prove this for \(M = W\). Replacing \(x_i\) with \(\sigma_{t_{i-1}} \Delta_i W\) in \eqref{x2_difference}, for any \(q, p > 1\) such that \(\frac{1}{q} + \frac{1}{p} = 1\), we obtain:
\begin{align}\label{x2_difference_brownian}
    &\frac{m_n^{1/2}}{\Delta_n^{-1/2} v_n^{2-Y/2}} \left| \sum_{i=1}^n \sigma_{t_{i-1}}^2 K_{m_n\Delta_n}(t_{i-1}-\tau) \EE_{i-1}\left[ (\Delta_i W)^2 f_n(\Delta_i X) \Delta_i M \right] \right| \nonumber\\
    \quad&\quad\lesssim \frac{m_n^{1/2}}{\Delta_n^{-1/2} v_n^{2-Y/2}} \sum_{i=1}^n K_{m_n\Delta_n}(t_{i-1}-\tau)\EE_{i-1} \left[ |\Delta_i W|^3 \mathbf{1}_{\{|\Delta_i X| > v_n\}} \right]\nonumber\\
    \quad&\quad\lesssim \frac{m_n^{1/2}}{\Delta_n^{-1/2} v_n^{2-Y/2}} \sum_{i=1}^n K_{m_n\Delta_n}(t_{i-1}-\tau)\EE_{i-1} \left[ (\Delta_i W)^{3p} \right]^{\frac{1}{p}} P_{i-1}\left[ |\Delta_i X| > v_n \right]^{\frac{1}{q}}\nonumber\\
    \quad&\quad\lesssim \frac{m_n^{1/2}}{\Delta_n^{-1/2} v_n^{2-Y/2}} \Delta_n^{-1} \Delta_n^{\frac{3}{2}} (\Delta_nv_n^{-Y})^{\frac{1}{q}},
\end{align}
where we used Lemma 2 in \cite{BONIECE2024supplement} and H\"older's inequality in the third line. Notice that
\[
\frac{m_n^{1/2}}{\Delta_n^{-1/2} v_n^{2-Y/2}} \Delta_n^{1/2} (\Delta_n v_n^{-Y})^{\frac{1}{q}} \ll v_n^{-2+Y/2} (\Delta_n v_n^{-Y})^{\frac{1}{q}}.
\]
Moreover, since \(\frac{3}{4+Y} > \frac{1}{2}\), we have
\[
\Delta_n^{3/2} v_n^{-\frac{4+Y}{2}} \ll 1 \quad \Leftrightarrow  \quad  \Delta_n^{3/(4 + Y)} \ll v_n.
\]
Then \eqref{x2_difference_brownian} vanishes by taking $q$ close to 1. 

Now, replacing \(x_i\) with \( \chi_{t_{i-1}}\Delta_i J\) with $M=W$, we need to show that:
\begin{align}\label{x2_difference_jump}
    \frac{m_n^{1/2}}{\Delta_n^{-1/2} v_n^{2-Y/2}} \sum_{i=1}^n  K_{m_n\Delta_n}(t_{i-1}-\tau) \EE_{i-1} \left[ (\Delta_i J)^2 f_n(\Delta_i X) |\Delta_i W| \right] = o_P(1). 
\end{align}
When \(|\Delta_i W| > v_n\) or \(|\varepsilon_i| > v_n\), recall \eqref{varepsilon_tail} that \(P(|\varepsilon_i| > v_n)\) and \(P(|\Delta_i W| > v_n)\) both decay faster than any power of \(\Delta_n\). By H\"older's inequality, we can derive:
\[
\frac{m_n^{1/2}}{\Delta_n^{-1/2} v_n^{2-Y/2}} \sum_{i=1}^n K_{m_n\Delta_n}(t_{i-1}-\tau)\EE_{i-1}\left[ (\Delta_i J)^2 f_n(\Delta_i X) |\Delta_i W| (\mathbf{1}_{\{|\Delta_i W| > v_n\}} + \mathbf{1}_{\{|\varepsilon_i| > v_n\}}) \right] = o_P(1).
\]
Thus, when \(|\Delta_i W| \leq v_n\), \(|\varepsilon_i| \leq v_n\), and \(|\Delta_i X| = |x_i + \varepsilon_i| \leq \zeta v_n\), we have \(|\Delta_i J| \leq C v_n\) for some constant $C$. Therefore, for \eqref{x2_difference_jump} to hold, it suffices to show that:
\[
\frac{m_n^{1/2}}{\Delta_n^{-1/2} v_n^{2-Y/2}} \sum_{i=1}^n K_{m_n\Delta_n}(t_{i-1}-\tau)\EE_{i-1}\left[ (\Delta_i J)^2 \mathbf{1}_{\{|\Delta_i J| \leq C v_n\}} |\Delta_i W| \right] \overset{P}{\to} 0.
\]

Using H\"older's inequality and Lemma 17 in \cite{BONIECE2024}, for any \(p, q > 1\) such that \(\frac{1}{p} + \frac{1}{q} = 1\), we have:
\begin{align*}
    &\frac{m_n^{1/2}}{\Delta_n^{-1/2} v_n^{2-Y/2}} \sum_{i=1}^n K_{m_n\Delta_n}(t_{i-1}-\tau)\EE_{i-1}\left[ (\Delta_i J)^{2p} \mathbf{1}_{\{|\Delta_i J| \leq 2 v_n\}} \right]^{\frac{1}{p}} \EE_{i-1}\left[ |\Delta_i W|^q \right]^{\frac{1}{q}}\nonumber\\
    &\lesssim  \frac{m_n^{1/2}}{\Delta_n^{-1/2} v_n^{2-Y/2}} \sum_{i=1}^n K_{m_n\Delta_n}(t_{i-1}-\tau) \left( \Delta_n v_n^{2p - Y} \right)^{\frac{1}{p}} \Delta_n^{1/2}.
\end{align*}
Thus, by taking \(p\) very close to 1 (and \(q\) very large), we get convergence to 0 since
\begin{align*}
    &\frac{m_n^{1/2}}{\Delta_n^{-1/2} v_n^{2-Y/2}} \sum_{i=1}^n K_{m_n\Delta_n}(t_{i-1}-\tau) \left( \Delta_n v_n^{2 - Y} \right) \Delta_n^{1/2}
    \lesssim  v_n^{Y/2-2}v_n^{2 - Y} \Delta_n^{1/2} \rightarrow 0,
\end{align*}
which is due to $\Delta_n^{1/Y}\ll \Delta_n^{1/2} \ \ll v_n$.

Now, we prove \eqref{x2_difference} for any bounded martingale \(M\) orthogonal to \(W\). The cross-product term when expanding \(x_i^2\) can be analyzed similarly by assuming \(|\Delta_i J| \leq C v_n\). To this end, we need to show that:
\[
\frac{m_n^{1/2}}{\Delta_n^{-1/2} v_n^{2-Y/2}} \sum_{i=1}^n K_{m_n\Delta_n}(t_{i-1}-\tau)\EE_{i-1}\left[ |\Delta_i J| |\Delta_i W| \mathbf{1}_{\{|\Delta_i J| \leq C v_n\}} |\Delta_i M| \right] = o_P(1).
\]
Using H\"older's inequality and \eqref{kernel_sqrt_m}, for any $p,q>2$ such that $1/p+1/q = 1/2$ we conclude that:
\begin{align*}
    &\frac{m_n^{1/2}}{\Delta_n^{-1/2} v_n^{2-Y/2}} \sum_{i=1}^n K_{m_n\Delta_n}(t_{i-1}-\tau)\EE_{i-1}\left[ |\Delta_i J| |\Delta_i W| \mathbf{1}_{\{|\Delta_i J| \leq C v_n\}} |\Delta_i M| \right] \nonumber\\
    \leq & \frac{m_n^{1/2}}{\Delta_n^{-1/2} v_n^{2-Y/2}} \sum_{i=1}^n K_{m_n\Delta_n}(t_{i-1}-\tau)\EE_{i-1}\left[ (\Delta_i J)^{{p}} \mathbf{1}_{\{|\Delta_i J| \leq C v_n\}} \right]^{\frac{1}{p}} \EE_{i-1}\left[ (\Delta_i W)^{{q}} \right]^{\frac{1}{{q}}} \EE_{i-1}\left[ (\Delta_i M)^{2} \right]^{\frac{1}{2}}\nonumber\\
    \lesssim& \frac{m_n^{1/2}}{\Delta_n^{-1/2} v_n^{2-Y/2}} (\Delta_nv_n^{p-Y})^{1/p}\Delta_n^{1/2} \frac{1}{\sqrt{m_n}\Delta_n} = \frac{(\Delta_nv_n^{p-Y})^{1/p}}{ v_n^{2-Y/2}},
\end{align*}
which vanishes by taking $p$ close to 2 since
$$\frac{(\Delta_nv_n^{2-Y})^{1/2}}{ v_n^{2-Y/2}} = \Delta_n^{1/2}v_n^{-1} \ll \Delta_n^{1/2-\beta}\ll 1.$$

Next we consider the case when $x_i$ is replaced with $\sigma_{t_{i-1}}\Delta_iW$. Since $f_n(x) \leq \mathbf{1}_{|\Delta_iX|\geq v_n}$, it suffices to show
$$D_n: = \frac{m_n^{1/2}}{\Delta_n^{-1/2} v_n^{2-Y/2}} \sum_{i=1}^n K_{m_n\Delta_n}(t_{i-1}-\tau)\sigma_{t_{i-1}}\EE_{i-1}\left[ |\Delta_i W|^2 \mathbf{1}_{\{|\Delta_i J| \leq C v_n\}} |\Delta_i M| \right] = o_P(1).$$
Notice that $|\Delta_iX| = |x_i + \varepsilon_i| \geq v_n$ implies that either $|\Delta_iW| > v_n/3$, $|\varepsilon_i| > v_n/3$ or $|{\Delta_i J}| > v_n/3$. The first two cases are straightforward as the tails decay faster than any power of $\Delta_n$. For the remaining case we utilize H\"older's inequality. For any $p>0$, $q>0$ such that $1/p + 1/q = 1/2$, we obtain
\begin{align*}
    D_n &\lesssim \frac{m_n^{1/2}}{\Delta_n^{-1/2} v_n^{2-Y/2}} \sum_{i=1}^n K_{m_n\Delta_n}(t_{i-1}-\tau)\sigma_{t_{i-1}}\EE_{i-1}\left[ |\Delta_i W|^{2p}\right]^{1/p} \\
    &\qquad\qquad\qquad\qquad \times \PP_{i-1}\left[{|\Delta_i J| > v_n/3}\right]^{1/q} \EE_{i-1}\left[|\Delta_i M| ^2\right]^{1/2}\\
     &\lesssim \frac{m_n^{1/2}}{\Delta_n^{-1/2} v_n^{2-Y/2}} \Delta_n (\Delta_nv_n^{-Y})^{1/q} \sum_{i=1}^n K_{m_n\Delta_n}(t_{i-1}-\tau)\EE_{i-1}\left[|\Delta_i M| ^2\right]^{1/2}\\
    & \lesssim\frac{m_n^{1/2}}{\Delta_n^{-1/2} v_n^{2-Y/2}} \Delta_n (\Delta_nv_n^{-Y})^{1/q} \frac{1}{\sqrt{m_n}\Delta_n} = \frac{1}{\Delta_n^{1/2} v_n^{2-Y/2}} \Delta_n (\Delta_nv_n^{-Y})^{1/q},
\end{align*}
where we used \eqref{kernel_sqrt_m}. 
Using $v_n \gg \Delta_n^\beta$ and taking $q$ close to 2 from above, the above equation vanishes since
$$\frac{1}{\Delta_n^{1/2} v_n^{2-Y/2}} \Delta_n (\Delta_nv_n^{-Y})^{1/2} = \Delta_n v_n^{-2} \ll 1.$$

Finally, we consider the case where \(x_i\) is replaced with \(\chi_{t_{i-1}}\Delta_i J\) in \eqref{x2_difference}. It requires showing:
\begin{align*}
    \frac{m_n^{1/2}}{\Delta_n^{-1/2} v_n^{2-Y/2}} \sum_{i=1}^n K_{m_n\Delta_n}(t_{i-1}-\tau) \EE_{i-1}\left[ (\Delta_i J)^2 f_n(\Delta_i X) \Delta_i M \right] = o_P(1).
\end{align*}
Note that
\begin{align}\label{diff_indicator_approx}
    \frac{m_n^{1/2}}{\Delta_n^{-1/2} v_n^{2-Y/2}} \sum_{i=1}^n K_{m_n\Delta_n}(t_{i-1}-\tau) \EE_{i-1}\left[ (\Delta_i J)^2 \left| f_n(\Delta_i X) - f_n(\chi_{t_{i-1}} \Delta_i J) \right| |\Delta_i M| \right] = o_P(1),
\end{align}
which is shown at the end of the appendix.
Hence it suffices to prove:
\begin{align}\label{x2_difference_jump_final}
    \frac{m_n^{1/2}}{\Delta_n^{-1/2} v_n^{2-Y/2}} \sum_{i=1}^n K_{m_n\Delta_n}(t_{i-1}-\tau) \EE_{i-1}\left[ (\Delta_i J)^2 f_n(\chi_{t_{i-1}} \Delta_i J) \Delta_i M \right] = o_P(1).
\end{align}
{In what follows, we assume $\chi = 1$; the general case follows from analogous arguments. Hereafter, we} assume that the following L\'evy-It\^o decomposition for \({J}\):
\[
J_t = {\int_0^t \int{x \mathbf{1}_{\{|x| \leq 1\}}} \bar{N}(ds, dx) + \int_0^t \int{x \mathbf{1}_{\{|x| > 1\}}} N(ds, dx)},
\]
where \(\bar{N}(ds, dx) = N(ds, dx) - \nu(dx)ds\) is the compensated jump measure of \(J\). Additionally, similar to expression (80) in \cite{A_t_Sahalia_2009}, \(M\) admits the decomposition:
\[
M_t = M'_t + \int_0^t \int \delta(s, x) \bar{N}(ds, dx),
\]
where \(M'\) is a martingale orthogonal to both \(W\) and \(N\), and \(\delta\) is a bounded function satisfying:
\[
E \left[ \int_t^u \int \delta^2(s, x) \nu(dx) ds \middle| \mathcal{F}_t \right] \leq E \left[ (M_u - M_t)^2 \middle| \mathcal{F}_t \right] < \infty.
\]
For future reference, define \(N^i(ds, dx) := N(t_i + ds, dx)\), \(\delta_i(s, x) := \delta(t_i + s, x)\), and:
\[
p_n(x) = x^2 f_n(x), \quad t_n(y, x) = p_n(x+y) - p_n(y), \quad w_n(y, x) = p_n(x+y) - p_n(y) - x p'_n(y) \mathbf{1}_{\{|x| \leq 1\}}.
\]

Applying It\^o's formula to \(Y_{u}^{i-1} := L_{t_{i-1}+u} - L_{t_{i-1}}\), we obtain:
\begin{align*}
    p_n(Y_{u}^{i-1}) =& \int_0^u p'_n(Y_{s^-}^{i-1}) dY_s^{i-1} + \int_0^u \int \left( p_n(Y_{s^-}^{i-1} + x) - p_n(Y_{s^-}^{i-1}) - p'_n(Y_{s-}^{i-1}) x \right) N^{i-1}(ds, dx)\nonumber\\
    =& \int_0^u \int_{|x| \leq 1} p'_n(Y_{s-}^{i-1}) x \bar{N}^{i-1}(ds, dx) + \int_0^u \int w_n(Y_{s-}^{i-1}, x) N^{i-1}(ds, dx).
\end{align*}

Now, let \(Z_u^{i-1} := M_{t_{i-1}+u} - M_{t_{i-1}}\). Applying It\^o's formula to \(p_n(Y_u^{i-1}) Z_u^{i-1}\), we get:
\begin{align}\label{e20}
    p_n(Y_u^{i-1}) Z_u^{i-1} &= \int_0^u p_n(Y_s^{i-1}) dZ_s^{i-1} + \int_0^u Z_s^{i-1} d\left( p_n(Y_s^{i-1}) \right) + \sum_{s \leq u} \Delta(p_n(Y_s^{i-1})) \Delta Z_s^{i-1}\nonumber\\
    &= {\text{Mrtg}} + \int_0^u \int \delta_i(s, x) t_n(Y_s^{i-1}, x) \nu(dx) ds + \int_0^u \int w_n(Y_s^{i-1}, x) Z_s^{i-1} \nu(dx) ds,
\end{align}
where $ {\text{Mrtg}}$ denotes the martingale part generated from the It\^o's formula.
Fixing \(u = \Delta_n\) in \eqref{e20} and taking expectations, we obtain:
\begin{align}\label{e21}
    \EE_{i-1} \left[ p_n(\Delta_i J) \Delta_i M \right] =& \EE_{i-1} \left[ \int_0^{\Delta_n} \int \delta_i(s, x) t_n(Y_s, x) \nu(dx) ds \right]\nonumber\\
    &+ \EE_{i-1} \left[ \int_0^{\Delta_n} \int w_n(Y_s, x) Z_s \nu(dx) ds \right]. 
\end{align}
In the above equation, we omit the superscript $i-1$ in $Y$ and $Z$ for simplicity. 
 For the second term in \eqref{e21}, we use the {following bound (see (D.22) and (D.23) in \cite{BONIECE2024supplement})}:

\[
\EE_{i-1} \left[ \int_0^{\Delta_n} \int |w_n(Y_s, x)| |Z_s| \nu(dx) ds \right] \leq K (v_n^{2 - Y(1+\eta)} \vee v_n) \Delta_n \EE_{i-1} (|\Delta_i M|) + K v_n^{2 - Y(1+\eta)} \Delta_n^2 v_n^{\eta - Y}.
\]
Combining the above equation with the second term in \eqref{e21} and {plugging it back in \eqref{x2_difference_jump_final}}, {we need to show the negligibility of the two terms below}:
\begin{align}
    &\frac{m_n^{1/2}}{\Delta_n^{-1/2} v_n^{2-Y/2}} \left( v_n^{2 - Y(1+\eta)} \vee v_n \right) \Delta_n \sum_{i=1}^n K_{m_n\Delta_n}(t_{i-1}-\tau) \EE_{i-1}(|\Delta_i M|) \nonumber\\
& \quad+\frac{m_n^{1/2}}{\Delta_n^{-1/2} v_n^{2-Y/2}} v_n^{2 - Y(1+\eta)} \Delta_n^2 v_n^{\eta - Y} \Delta_n^{-1}.
\label{NSTH}
\end{align}
The first term converges to 0 since $v_n^{Y/2-2}v_n^{2 - Y(1+\eta)}\Delta_n^{1/2} \ll 1$ and $v_n^{Y/2-2}v_n\Delta_n^{1/2} \ll 1$ with $Y>1$ if $\eta$ is small enough and
\begin{align*}
    &\Delta_n^{1/2}\sum_{i=1}^n K_{m_n\Delta_n}(t_{i-1}-\tau) \EE_{i-1}(|\Delta_i M|)\\
    &\leq \Big(\Delta_n\sum_{i=1}^n K^2_{m_n\Delta_n}(t_{i-1}-\tau)\Big)^{1/2} \Big(\sum_{i=1}^n\EE_{i-1}(|\Delta_i M|^2)\Big)^{1/2} = O_P(m_n^{-1/2}\Delta_n^{-1/2}).
\end{align*}
The second term in \eqref{NSTH} if of order $m_n^{1/2}v_n^{-3Y/2+(1-Y)\eta}\Delta_n^{3/2}$, which vanishes by taking $\eta$ small enough due to the condition $m_n\ll \Delta_n^{-5}v_n^{8+Y}$, since $\Delta_n^{-5}v_n^{8+Y} \ll v_n^{3Y}\Delta_n^{-3}$ with $v_n \ll \Delta_n^{1/(4-Y)}$.
Thus, the two terms {in \eqref{NSTH}} converge to zero, and we conclude that:
\[
\frac{m_n^{1/2}}{\Delta_n^{-1/2} v_n^{2-Y/2}} \sum_{i=1}^n K_{m_n\Delta_n}(t_{i-1}-\tau) \EE_{i-1} \left( \int_0^{\Delta_n} \int |w_n(Y_s, x)| |Z_s| \nu(dx) ds \right) = o_P(1).
\]

{It remains to show that the contribution of the first term in \eqref{e21} is negligible:}
\[
F_n := \frac{m_n^{1/2}}{\Delta_n^{-1/2} v_n^{2-Y/2}} \sum_{i=1}^n K_{m_n\Delta_n}(t_{i-1}-\tau) \EE_{i-1} \left[ \int_0^{\Delta_n} \int \delta_{i-1}(s, x) t_n(Y_s, x) \nu(dx) ds \right] = o_P(1).
\]
{As in (D.25) in \cite{BONIECE2024supplement},}
 we separately consider the cases where \( |Y_s| \leq v_n \), \( |Y_s| > \zeta v_n \), \( v_n(1 + v_n^\eta) < |Y_s| < \zeta v_n(1 - v_n^\eta) \), \( v_n \leq |Y_s| \leq v_n(1 + v_n^\eta) \), and \( \zeta v_n \leq |Y_s| \leq \zeta v_n(1 + v_n^\eta) \).

First, suppose that \( |Y_s| \leq v_n \), so that \( f_n(Y_s) = 0 \), and thus \( t_n(Y_s, x) = (Y_s + x)^2 f_n(Y_s + x) \), with \( |x| \geq \frac{1}{3} v_n^{1 + \eta} \) (otherwise \( f_n(Y_s + x) = 0 \)). Notice that \( |t_n(Y_s, x)| \leq Y_s^2 + 2 |x Y_s| + x^2 \).
For \( Y_s^2 \), {using Cauchy-Schwarz and the bound 
\begin{equation}\label{BNDFYC}
\EE_{i-1} \left[ Y_s^{2k} 1_{\{|Y_s| \leq v_n\}} \right] \leq K \Delta_n v_n^{2k - Y},
\end{equation}
valid for all \( 0 < s < \Delta_n \) (see Lemma 12 in \cite{BONIECE2024supplement}),}
\begin{align*}
    &\frac{m_n^{1/2}}{\Delta_n^{-1/2} v_n^{2-Y/2}} \sum_{i=1}^n K_{m_n\Delta_n}(t_{i-1}-\tau) \EE_{i-1} \left( \int_0^{\Delta_n} \int |\delta_{i-1}(s, x)| Y_s^2 1_{\{|Y_s| \leq v_n\}} 1_{\{|x| \geq \frac{1}{3} v_n^{1 + \eta}\}} \nu(dx) ds \right)\nonumber\\
    &\leq \frac{m_n^{1/2}}{\Delta_n^{-1/2} v_n^{2-Y/2}} \sum_{i=1}^n K_{m_n\Delta_n}(t_{i-1}-\tau) \EE_{i-1} \left( \int_0^{\Delta_n} \int Y_s^4 1_{\{|Y_s| \leq v_n\}} 1_{\{|x| \geq \frac{1}{3} v_n^{1 + \eta}\}} \nu(dx) ds \right)^{\frac{1}{2}}\nonumber\\
&\qquad\times \EE_{i-1} \left( \int_0^{\Delta_n} \int |\delta_{i-1}(s, x)|^2 \nu(dx) ds \right)^{\frac{1}{2}}\nonumber\\
&\lesssim \frac{m_n^{1/2}}{\Delta_n^{-1/2} v_n^{2-Y/2}} (v_n^{-Y(1+\eta)}\Delta_n^2 v_n^{4-Y})^{1/2}  \nonumber\\
&\qquad\times\sum_{i=1}^n K_{m_n\Delta_n}(t_{i-1}-\tau)\EE_{i-1} \left( \int_{t_{i-1}}^{t_i} \int |\delta(u, x)|^2 \nu(dx) du \right)^{\frac{1}{2}}\nonumber\\
&\lesssim m_n^{1/2}\Delta_n^{3/2}v_n^{-Y(1+\eta)/2}\sum_{i=1}^n K_{m_n\Delta_n}(t_{i-1}-\tau)\EE_{i-1} \left( (\Delta_iM)^2 \right)^{\frac{1}{2}}\nonumber\\
&\lesssim  \Delta_n^{1/2} v_n^{-Y(1+\eta)/2},\label{Delta1/2*vn^{-Y(1+eta)/2}}
\end{align*}
where the last line follows from $\sum_{i=1}^n K_{m_n\Delta_n}(t_{i-1}-\tau)\EE_{i-1} \left( (\Delta_iM)^2 \right)^{\frac{1}{2}} = O_P(m_n^{-1/2}\Delta_n^{-1})$ {(see \eqref{kernel_sqrt_m})} and $\Delta_nm_n \ll 1$. The bound \eqref{Delta1/2*vn^{-Y(1+eta)/2}} tends to zero under our assumption $v_n\gg \Delta_n^{\beta}$ with small enough $\eta$. Next, we focus on $2Y_sx$. {Using again \eqref{BNDFYC} and Cauchy-Schwarz},

\begin{align*}
    &\frac{m_n^{1/2}}{\Delta_n^{-1/2} v_n^{2-Y/2}} \sum_{i=1}^n K_{m_n\Delta_n}(t_{i-1}-\tau)\EE_{i-1} \left[ \int_0^{\Delta_n} \int |\delta_{i-1}(s, x)||Y_s x| 1_{\{|Y_s| \leq v_n\}} 1_{\{|x| \geq \frac{1}{3} v_n^{1 + \eta}\}} \nu(dx) ds \right]\nonumber\\
    &\leq \frac{m_n^{1/2}}{\Delta_n^{-1/2} v_n^{2-Y/2}} \sum_{i=1}^n K_{m_n\Delta_n}(t_{i-1}-\tau)\EE_{i-1} \left( \int_0^{\Delta_n} \int |\delta_{i-1}(s, x)|^2 \nu(dx) ds \right)^{\frac{1}{2}} \nonumber\\
    &\qquad\times \EE_{i-1} \left( \int_0^{\Delta_n} \int Y_s^2 x^2 1_{\{|Y_s| \leq v_n\}} 1_{\{|x| \geq \frac{1}{3} v_n^{1 + \eta}\}} \nu(dx) ds \right)^{\frac{1}{2}}\nonumber\\
    &\lesssim \frac{m_n^{1/2}}{\Delta_n^{-1/2} v_n^{2-Y/2}} \left( \Delta_n^2 v_n^{2 - Y} \right)^{\frac{1}{2}} \sum_{i=1}^n K_{m_n\Delta_n}(t_{i-1}-\tau)\EE_{i-1} \left[ \int_{t_{i-1}}^{t_i} \int |\delta(u, x)|^2 \nu(dx) du \right]^{\frac{1}{2}}\nonumber\\
    &= \frac{m_n^{1/2}}{\Delta_n^{-1/2}v_n}\Delta_n  \sum_{i=1}^n K_{m_n\Delta_n}(t_{i-1}-\tau) \left( \EE_{i-1} \left[ (\Delta_i M)^2 \right] \right)^{\frac{1}{2}} = \Delta_n^{\frac{1}{2}} v_n^{-1} O_P(1),
\end{align*}
which again is \(o_P(1)\) since \( v_n \gg \Delta_n^{\frac{1}{2}} \).

{It remains to analyze the term corresponding to \(x^2\)}. First, since \(|Y_s + x| \leq \zeta v_n\) when \(f_n(Y_s + x) > 0\), we have \(|x| \leq (1 + \zeta)v_n =: C v_n\) due to our assumption \(|Y_s| \leq v_n\). Then, for any sequence \(\alpha_n > 0\) such that \(\alpha_n \to 0\),
\begin{align*}
    &\frac{m_n^{1/2}}{\Delta_n^{-1/2} v_n^{2-Y/2}}\sum_{i=1}^n K_{m_n\Delta_n}(t_{i-1}-\tau)\EE_{i-1} \left[ \int_0^{\Delta_n} \int_{|\delta_{i-1}(s,x)| \leq \alpha_n} |\delta_{i-1}(s, x)| x^2 1_{\{|x| \leq C v_n\}} \nu(dx) ds \right] \nonumber\\
    & \leq\frac{m_n^{1/2}}{\Delta_n^{-1/2} v_n^{2-Y/2}}\sum_{i=1}^n K_{m_n\Delta_n}(t_{i-1}-\tau)\EE_{i-1} \left[ \int_0^{\Delta_n} \int_{|x| \leq C v_n} x^4 \nu(dx) ds \right]^{\frac{1}{2}} \nonumber\\
    &\qquad\qquad\qquad\qquad\times \EE_{i-1} \left[ \int_0^{\Delta_n} \int_{|\delta_{i-1}(s,x)| \leq \alpha_n} |\delta_{i-1}(s, x)|^2 \nu(dx) ds \right]^{\frac{1}{2}}\nonumber\\
    &\lesssim \frac{m_n^{1/2}}{\Delta_n^{-1/2} v_n^{2-Y/2}} \left( \Delta_n v_n^{4 - Y} \right)^{\frac{1}{2}} \sum_{i=1}^n K_{m_n\Delta_n}(t_{i-1}-\tau)\EE_{i-1} \left[ \int_{t_{i-1}}^{t_i} \int_{|\delta(s,x)| \leq \alpha_n} |\delta(s, x)|^2 \nu(dx) ds \right]^{1/2}\nonumber\\
    &\lesssim \frac{m_n^{1/2} \left( \Delta_n v_n^{4 - Y} \right)^{\frac{1}{2}}}{\Delta_n^{-1/2} v_n^{2-Y/2}} \Big[\sum_{i=1}^n K^2_{m_n\Delta_n}(t_{i-1}-\tau)\sum_{i=1}^n\EE_{i-1} \Big[ \int_{t_{i-1}}^{t_i} \int_{|\delta(s,x)| \leq \alpha_n} |\delta(s, x)|^2 \nu(dx) ds \Big]\Big]^{1/2}\nonumber\\
    & \lesssim m_n^{1/2}\Delta_n[m_n^{-1}\Delta_n^{-2}]^{1/2} {\left [\sum_{i=1}^n\EE_{i-1} \Big[ \int_{t_{i-1}}^{t_i} \int_{|\delta(s,x)| \leq \alpha_n} |\delta(s, x)|^2 \nu(dx) ds \Big]\right]^{1/2}} = o_P(1),
\end{align*}
since as $\alpha_n\to 0$,
\begin{align*}
    &\EE \left[\sum_{i=1}^n\EE_{i-1} \left( \int_{t_{i-1}}^{t_i} \int_{|\delta(s,x)| \leq \alpha_n} |\delta(s, x)|^2 \nu(dx) ds \right) \right] =   \EE \left[\int_{0}^{T} \int_{|\delta(s,x)| \leq \alpha_n} |\delta(s, x)|^2 \nu(dx) ds \right]  \to 0. 
\end{align*}
{On the complementary region, we proceed as follows:}
\begin{align*}
    &\frac{m_n^{1/2}}{\Delta_n^{-1/2} v_n^{2-Y/2}}\sum_{i=1}^n K_{m_n\Delta_n}(t_{i-1}-\tau)\EE_{i-1} \left[ \int_0^{\Delta_n} \int_{|\delta_{i-1}(s,x)| > \alpha_n} |\delta_{i-1}(s, x)| x^2 1_{\{|x| \leq C v_n\}} \nu(dx) ds \right]\nonumber\\
    &\leq \frac{C}{\alpha_n} \frac{m_n^{1/2}}{\Delta_n^{-1/2} v_n^{2-Y/2}} v_n^2\sum_{i=1}^n K_{m_n\Delta_n}(t_{i-1}-\tau)\EE_{i-1} \left[ \int_0^{\Delta_n} \int_{|\delta_{i-1}(s,x)| > \alpha_n} |\delta_{i-1}(s, x)|^2 \nu(dx) ds \right]\nonumber\\
    &\leq \frac{C}{\alpha_n} \frac{m_n^{1/2}}{\Delta_n^{-1/2} v_n^{2-Y/2}} v_n^2\sum_{i=1}^n K_{m_n\Delta_n}(t_{i-1}-\tau)\EE_{i-1} \left[ \int_{t_{i-1}}^{t_i} \int |\delta(s, x)|^2 \nu(dx) ds \right]\nonumber\\
    &\lesssim \frac{C}{\alpha_n} \frac{m_n^{1/2}}{\Delta_n^{-1/2} v_n^{2-Y/2}} v_n^2 \frac{1}{m_n\Delta_n} =  \frac{v_n^{\frac{Y}{2}}}{\alpha_n m_n^{1/2}\Delta_n^{1/2}},
\end{align*}
which is $o_P(1)$ under our assumption $m_n\Delta_n \gg v_n^{\gamma}$ with $\gamma < Y$, since
$\frac{v_n^{\frac{Y}{2}}}{\alpha_n m_n^{1/2}\Delta_n^{1/2}} \ll \frac{v_n^{\frac{Y-\gamma}{2}}}{\alpha_n}$ and we can take \(\alpha_n \to 0\) such that \(\frac{v_n^{(Y-\gamma)/2}}{\alpha_n} \to 0\). The proof for the negligibility of \(F_n\) in the case \(|Y_s| \leq v_n\) is complete.

For the case of \(|Y_s| \geq \zeta v_n\), we have \(f_n(Y_s) = 0\) and \(t_n(Y_s, x) = (Y_s + x)^2 f_n(Y_s + x) \leq (\zeta v_n)^2\). We observe that when \(f_n(Y_s + x) > 0\), we have \(|Y_s + x| \leq \zeta v_n(1 - \frac{1}{3}v_n^\eta)\), implying \(|x| > \zeta v_n^{1 + \eta}/3\). We then can write that
\begin{align*}
    &\frac{m_n^{1/2}}{\Delta_n^{-1/2} v_n^{2-Y/2}}\sum_{i=1}^n K_{m_n\Delta_n}(t_{i-1}-\tau)\nonumber\\
    &\qquad\times\EE_{i-1} \left[ \int_0^{\Delta_n} \int |\delta_{i-1}(s, x)| (Y_s + x)^2 f_n(Y_s + x) 1_{\{|Y_s| \geq \zeta v_n\}} 1_{\{|x| \geq C v_n^{1 + \eta}\}} \nu(dx) ds \right]\nonumber\\
    &\lesssim \frac{m_n^{1/2}}{\Delta_n^{-1/2} v_n^{2-Y/2}} v_n^2\sum_{i=1}^n K_{m_n\Delta_n}(t_{i-1}-\tau)\EE_{i-1} \left[ \int_0^{\Delta_n} \int |\delta_{i-1}(s, x)|^2 \nu(dx) ds \right]^{\frac{1}{2}} \nonumber\\
    &\qquad\qquad\qquad\qquad\qquad\times\EE_{i-1} \left[ \int_0^{\Delta_n} \int 1_{\{|Y_s| \geq \zeta v_n\}} 1_{\{|x| \geq C v_n^{1 + \eta}\}} \nu(dx) ds \right]^{\frac{1}{2}}\nonumber\\
     &\lesssim  \frac{m_n^{1/2}}{\Delta_n^{-1/2} v_n^{2-Y/2}} v_n^2 \left( \Delta_n^2 v_n^{-Y - (1 + \eta)Y} \right)^{\frac{1}{2}}\sum_{i=1}^n K_{m_n\Delta_n}(t_{i-1}-\tau)\EE_{i-1} \left[ \int_{t_{i-1}}^{t_i} \int |\delta(u, x)|^2 \nu(dx) du \right]^{\frac{1}{2}}\\&\stackrel{\eqref{kernel_sqrt_m}}{\lesssim}
    \Delta_n^{\frac{1}{2}} v_n^{-\frac{(1 + \eta)Y}{2}},
\end{align*}
which vanishes by taking small \(\eta > 0\) due to \(v_n \gg \Delta_n^\beta\). {We have then proved the negligibility of \(F_n\) in the case \(|Y_s| \geq \zeta v_n\).}

For the case when \(v_n(1 + v_n^\eta) < |Y_s| < \zeta v_n(1 - v_n^\eta)\), when \(|x| \geq \frac{1}{3} v_n^{1+\eta}\), we can use  the bound \(|t_n(Y_s, x)| \leq C v_n^2\) and follow a similar proof as in (E.27). When \(|x| \leq \frac{1}{3} v_n^{1+\eta}\), \(f_n(x + Y_s) = 1 = f(Y_s)\), implying \(t_n(Y_s, x) = x^2 + 2xY_s\). We consider the contribution of each term separately below. For the case of \(x^2\), we have:
\begin{align*}
    &\frac{m_n^{1/2}}{\Delta_n^{-1/2} v_n^{2-Y/2}}\sum_{i=1}^n K_{m_n\Delta_n}(t_{i-1}-\tau)\nonumber\\
    &\qquad\qquad\qquad\times\EE_{i-1} \left[ \int_0^{\Delta_n} \int |\delta_{i-1}(s, x)| x^2 1_{v_n(1 + v_n^\eta) < |Y_s| < \zeta v_n(1 - v_n^\eta)} 1_{\{|x| \leq C v_n^{1 + \eta}\}} \nu(dx) ds \right]\nonumber\\
    &\leq \frac{m_n^{1/2}}{\Delta_n^{-1/2} v_n^{2-Y/2}}\sum_{i=1}^n K_{m_n\Delta_n}(t_{i-1}-\tau)\nonumber\\
    &\qquad\qquad\times\EE_{i-1} \left[ \int_0^{\Delta_n} \int |\delta_{i-1}(s, x)|^2 \nu(dx) ds \right]^{\frac{1}{2}} E_{i-1} \left[ \int_0^{\Delta_n} \int x^4 1_{\{|Y_s| \geq v_n\}} \nu(dx) ds \right]^{\frac{1}{2}}\nonumber\\
    &\lesssim \frac{m_n^{1/2}}{\Delta_n^{-1/2} v_n^{2-Y/2}} \left( v_n^{(4 - Y)(1 + \eta)} \Delta_n^2 v_n^{-Y} \right)^{\frac{1}{2}} \sum_{i=1}^n K_{m_n\Delta_n}(t_{i-1}-\tau)\left( E_{i-1}[(\Delta_i M)^2] \right)^{\frac{1}{2}}
    \lesssim v_n^{\frac{\eta (4 - Y) - Y}{2}} \Delta_n^{\frac{1}{2}} \ll 1,\nonumber
\end{align*}
for all \(\eta > 0\), where the last line follows from \eqref{kernel_sqrt_m}. For the term corresponding to \(2|x Y_s|\), since \(|Y_s| < \zeta v_n(1 - v_n^\eta)\), for some \(C\),
\begin{align*}
    &\frac{m_n^{1/2}}{\Delta_n^{-1/2} v_n^{2-Y/2}}\sum_{i=1}^n K_{m_n\Delta_n}(t_{i-1}-\tau)\nonumber\\
   & \qquad\qquad\qquad\times\EE_{i-1} \left[ \int_0^{\Delta_n} \int |\delta_{i-1}(s, x)| |x Y_s| 1_{v_n(1 + v_n^\eta) < |Y_s| < \zeta v_n(1 - v_n^\eta)} 1_{\{|x| \leq C v_n^{1 + \eta}\}} \nu(dx) ds \right]\nonumber\\
   & \leq \frac{m_n^{1/2}}{\Delta_n^{-1/2} v_n^{2-Y/2}}\sum_{i=1}^n K_{m_n\Delta_n}(t_{i-1}-\tau)\EE_{i-1} \left[ \int_0^{\Delta_n} \int |\delta_{i-1}(s, x)|^2 \nu(dx) ds \right]^{\frac{1}{2}} \nonumber\\
    &\qquad\qquad\qquad\qquad\times\EE_{i-1} \left[ \int_0^{\Delta_n} \int x^2 Y_s^2 1_{\{|Y_s| \leq \zeta v_n\}} \nu(dx) ds \right]^{\frac{1}{2}}\nonumber\\
   & \lesssim \frac{m_n^{1/2}}{\Delta_n^{-1/2} v_n^{2-Y/2}} \left( v_n^{(2 - Y)(1 + \eta)} \Delta_n^2 v_n^{2 - Y} \right)^{\frac{1}{2}} \sum_{i=1}^n K_{m_n\Delta_n}(t_{i-1}-\tau) \left( \EE_{i-1}[(\Delta_i M)^2] \right)^{\frac{1}{2}} \nonumber\\
    &= v_n^{-\frac{Y}{2} + \frac{\eta (2 - Y)}{2}} \Delta_n^{\frac{1}{2}} O_P(1) = o_P(1),
\end{align*}
where again the last line follows from \eqref{kernel_sqrt_m} for all \(\eta > 0\). This proves the negligibility of \(F_n\) for the case \(v_n(1 + v_n^\eta) < |Y_s| < \zeta v_n(1 - v_n^\eta)\).

The only cases left are when \( v_n \leq |Y_s| \leq v_n(1 + v_n^\eta) \) and \( \zeta v_n \leq |Y_s| \leq \zeta v_n(1 + v_n^\eta) \). It suffices to consider the first case as the other case is proved similarly. If \(|x| \geq v_n^{1 + \eta}\), consider the same argument as in (D.26) in \cite{BONIECE2024supplement} and suppose that \(|x| \leq v_n^{1 + \eta}\). Note that \(|p'_n(y)| \leq 2|y f_n(y)| + |y^2 f'_n(y)| \leq C v_n^{1 - \eta}\) when \(|y| \leq C v_n\). Consequently, \(|t_n(Y_s, x)| \leq |p'_n(Y_s, x)||x| \leq C v_n^{1 - \eta}|x|\). Thus, \(F_n\) is bounded by
\begin{align}
    &\frac{m_n^{1/2}}{\Delta_n^{-1/2}} v_n^{1 - \eta} \sum_{i=1}^n K_{m_n\Delta_n}(t_{i-1}-\tau)\EE_{i-1} \left[ \int_0^{\Delta_n} \int |\delta_{i-1}(s, x)||x| 1_{\{v_n \leq |Y_s| \leq v_n(1 + v_n^\eta)\}} 1_{\{|x| \leq v_n^{1 + \eta}\}} \nu(dx) ds \right]\nonumber\\
   & \leq \frac{m_n^{1/2}}{\Delta_n^{-1/2}} v_n^{1 - \eta} \sum_{i=1}^n K_{m_n\Delta_n}(t_{i-1}-\tau)\EE_{i-1} \left[ \int_0^{\Delta_n} \int |\delta_{i-1}(s, x)|^2 \nu(dx) ds \right]^{\frac{1}{2}} \nonumber\\
    &\qquad\qquad\qquad\qquad\times\EE_{i-1} \left[ \int_0^{\Delta_n} \int_{|x| \leq C v_n^{1 + \eta}} x^2 1_{\{v_n \leq |Y_s| \leq v_n(1 + v_n^\eta)\}} \nu(dx) ds \right]^{\frac{1}{2}}\nonumber\\
    & \lesssim \frac{m_n^{1/2}}{\Delta_n^{-1/2}} v_n^{1 - \eta} \left( v_n^{(2 - Y)(1 + \eta)} \Delta_n^2 v_n^{\eta - Y} \right)^{\frac{1}{2}} \sum_{i=1}^n K_{m_n\Delta_n}(t_{i-1}-\tau) \left( \EE_{i-1}[(\Delta_i M)^2] \right)^{\frac{1}{2}}\nonumber\\
    &= v_n^{-\frac{Y}{2} + \frac{\eta(1 - Y)}{2}} \Delta_n^{\frac{1}{2}} O_P(1),
\end{align}
where we used (D.24) in \cite{BONIECE2024supplement} and \eqref{kernel_sqrt_m}. By taking  \(\eta\) small enough, it follows that the above expression is $o_P(1)$. This concludes the proof.
\end{proof}
\begin{proof}[Proof of \eqref{diff_indicator_approx}]
We start by noting that 
\begin{align}\label{diff_indicator_largejump}
        \frac{m_n^{1/2}}{\Delta_n^{-1/2} v_n^{2-Y/2}} \sum_{i=1}^n K_{m_n\Delta_n}(t_{i-1}-\tau) \EE_{i-1}\left[ (\Delta_i J)^2\mathbf{1}_{|\chi_{t_{i-1}}\Delta_iJ|>4\zeta v_n} \left| f_n(\Delta_i X) - f_n(\chi_{t_{i-1}} \Delta_i J) \right| |\Delta_i M| \right]
    \end{align}
is  $o_P(1)$.    
    Indeed, note that    \begin{align}\label{diff_indicatorupper_largejump}
        \mathbf{1}_{|\chi_{t_{i-1}}\Delta_iJ|>4\zeta v_n} \left| f_n(\Delta_i X) - f_n(\chi_{t_{i-1}} \Delta_i J) \right| &\leq \mathbf{1}_{|\chi_{t_{i-1}}\Delta_iJ|>4\zeta v_n} f_n(\Delta_i X)\leq \mathbf{1}_{|b_{t_{i-1}}\Delta_n + \sigma_{t_{i-1}\Delta_iW} + \varepsilon_i|>3\zeta v_n},
    \end{align}
    and, thus, by H\"older's and Markov's inequalities along with $v_n\gg \Delta_n^{\beta}$ and \eqref{KME0}, for large enough $r$, we have that
    \begin{align*}
        \frac{m_n^{1/2}}{\Delta_n^{-1/2} v_n^{2-Y/2}} \sum_{i=1}^n K_{m_n\Delta_n}(t_{i-1}-\tau) \EE_{i-1}\left[ (\Delta_i J)^2|\Delta_i M| (\mathbf{1}_{|b_{t_{i-1}}\Delta_n|>\zeta v_n}+\mathbf{1}_{|\sigma_{t_{i-1}\Delta_iW|>\zeta v_n}}+\mathbf{1}_{|\varepsilon_i|>\zeta v_n})\right],
    \end{align*}
is $o_P(1)$,    
    which implies \eqref{diff_indicator_largejump} and  \eqref{diff_indicatorupper_largejump}.
    Consequently, it suffices to prove that
\begin{align*}%\label{diff_indicator_smalljump}
        \frac{m_n^{1/2}}{\Delta_n^{-1/2} v_n^{2-Y/2}} \sum_{i=1}^n K_{m_n\Delta_n}(t_{i-1}-\tau) \EE_{i-1}\left[ (\Delta_i J)^2\mathbf{1}_{|\chi_{t_{i-1}}\Delta_iJ|\leq4\zeta v_n} \left| f_n(\Delta_i X) - f_n(\chi_{t_{i-1}} \Delta_i J) \right| |\Delta_i M| \right] 
    \end{align*}
is $o_P(1)$.    
First, we observe that
\begin{align}\label{diff_indicatorapprox_decomp}
        |f_n(x+y)-f_n(y)| \leq \mathbf{1}_{|x|\geq v_n^{1+\eta}/3} + \mathbf{1}_{v_n\leq |y| \leq (1+v_n^\eta) \text{ or } \zeta v_n(1-v_n^\eta)\leq |y| \leq \zeta v_n} \frac{|x|}{v_n^{1+\eta}},
    \end{align}
    where $y=\chi_{t_{i-1}}\Delta_iJ$ and $x = \Delta_iX - \chi_{t_{i-1}}\Delta_iJ = b_{t_{i-1}}\Delta_n+\sigma_{t_{i-1}}\Delta_iW+\varepsilon_i =: z_{i,1}+z_{i,2}+z_{i,3}$.

    We start with the first term in \eqref{diff_indicatorapprox_decomp}.  We proceed to verify that
    $$B_l := \frac{m_n^{1/2}}{\Delta_n^{-1/2} v_n^{2-Y/2}} \sum_{i=1}^n K_{m_n\Delta_n}(t_{i-1}-\tau) \EE_{i-1}\left[ (\Delta_i J)^2\mathbf{1}_{|\chi_{t_{i-1}}\Delta_iJ|\leq4\zeta v_n} \mathbf{1}_{|z_{i,l}|\geq v_n^{1+\eta}/9} |\Delta_i M| \right] = o_P(1),$$
    for $l=1,2,3$. Take the case $l=2$ for instance. Assuming $\chi = \sigma = 1$ for simplicity, by H\"older's inequality we can upper bound $B_2$ by
    $$ \frac{m_n^{1/2}}{\Delta_n^{-1/2} v_n^{2-Y/2}} \sum_{i=1}^n K_{m_n\Delta_n}(t_{i-1}-\tau)\EE_{i-1}\big(\Delta_iJ^8\mathbf{1}_{|\Delta_iJ|\leq4\zeta v_n}\big)^{1/4}P_{i-1}\big(|\Delta_iW|\geq v_n^{1+\eta}/9\big)^{1/4}\EE_{i-1}\big(|\Delta_iM|^2\big)^{1/2}.$$
    Notice that for any $\alpha>0$, for some constant $C$, we have $P_{i-1}(|\Delta_iW|\geq v_n^{1+\eta}/9)\lesssim \EE(|\Delta_iW|^\alpha)/v_n^{\alpha(1+\eta)}$, which can be upper bounded by $\Delta^{\alpha}_n/v_n^{\alpha(1+\eta)}$. Therefore, along with Proposition \ref{prop} and \eqref{kernel_sqrt_m}, we have
    \begin{align*}
        B_2&\lesssim \frac{m_n^{1/2}}{\Delta_n^{-1/2} v_n^{2-Y/2}}\Delta_n^{1/4}v_n^{(8-Y)/4}\Delta^{\alpha/4}_n/v_n^{\alpha(1+\eta)/4} \sum_{i=1}^n K_{m_n\Delta_n}(t_{i-1}-\tau)\EE_{i-1}(|\Delta_iM|^2)^{1/2}\\
        &\lesssim \frac{m_n^{1/2}v_n^{Y/4}}{\Delta_n^{1/4} v_n^{2-Y/2}}\Delta^{\alpha/4}_n/v_n^{\alpha(1+\eta)/4}.
    \end{align*}
    Since $v_n\gg \Delta_n^{\beta}$, for small enough $\eta>0$, $\frac{1}{2}-\beta(1+\eta)>0$. Then by taking $\alpha$ large enough we  obtain $B_2 = o_P(1)$. $B_1$ can be shown $o_P(1)$ using the similar argument. For $B_3$, by \eqref{KME0}, $ B_3 $ can be dominated by
    \begin{align*}
       &\frac{m_n^{1/2}}{\Delta_n^{-1/2} v_n^{2-Y/2}} \sum_{i=1}^n K_{m_n\Delta_n}(t_{i-1}-\tau)\EE_{i-1}\big(\Delta_iJ^8\mathbf{1}_{|\Delta_iJ|\leq4\zeta v_n}\big)^{1/4}\EE\big(|\varepsilon_i|^{4\alpha}\big)^{1/4}/v_n^{\alpha(1+\eta)/4}\EE_{i-1}\big(|\Delta_iM|^2\big)^{1/2}\\
        &\leq \frac{m_n^{1/2}}{\Delta_n^{-1/2} v_n^{2-Y/2}}\Delta_n^{1/4}v_n^{(8-Y)/4}\Delta^{(1+\alpha/2)/4}_n/v_n^{\alpha(1+\eta)/4} \sum_{i=1}^n K_{m_n\Delta_n}(t_{i-1}-\tau)\EE_{i-1}\big(|\Delta_iM|^2\big)^{1/2},
    \end{align*}
    which tends to zero in probability by taking $\eta$ small enough and $\alpha$ large enough similar to $B_2$. Next, we analyze the second term in \eqref{diff_indicatorapprox_decomp}. We only  consider $\{v_n\leq|y|\leq v_n(1+v_n^{\eta})\}$ as the other case is similar. Let $a_n=\frac{m_n^{1/2}}{\Delta_n^{-1/2} v_n^{2-Y/2+1+\eta}}$. We need to show that
    $$C_l := a_n \sum_{i=1}^n K_{m_n\Delta_n}(t_{i-1}-\tau) \EE_{i-1}\left[ (\Delta_i J)^2\mathbf{1}_{v_n\leq|\chi_{t_{i-1}}\Delta_iJ|\leq v_n(1+v_n^{\eta})} |\Delta_i M||z_{i,l}| \right]$$
is $o_P(1)$ for $l=1,2,3$. Assume for simplicity of notation that $\chi=\sigma=1$. By \eqref{kernel_sqrt_m}, H\"older's inequality and Lemma 17 in \cite{BONIECE2024}, for any $p,q>1$ such that $1/p+1/q=1/2$, $C_2$ can be dominated by
    \begin{align*}
         &a_n \sum_{i=1}^n K_{m_n\Delta_n}(t_{i-1}-\tau)\EE_{i-1}\big(\Delta_iJ^{2p}\mathbf{1}_{v_n\leq|\Delta_iJ|\leq v_n(1+v_n^{\eta})}\big)^{1/p}\EE_{i-1}\big(|\Delta_iW|^q\big)^{1/q}\EE_{i-1}(|\Delta_iM|^2)^{1/2}\\
        &\lesssim a_n \sum_{i=1}^n K_{m_n\Delta_n}(t_{i-1}-\tau)(\Delta_nv_n^{2p-Y+\eta})^{1/p}\Delta_n^{1/2}\EE_{i-1}\big(|\Delta_iW|^q\big)^{1/q}\EE_{i-1}(|\Delta_iM|^2)^{1/2}\\
        &\lesssim\frac{(\Delta_nv_n^{2p-Y+\eta})^{1/p}}{v_n^{2-Y/2+1+\eta}}.
    \end{align*}
    Taking $p$ close to $2$ such that $1/p = 1/2-s$ for some small $s>0$, the last line has order $$\frac{(\Delta_nv_n^{2p-Y+\eta})^{1/p}}{v_n^{2-Y/2+1+\eta}} = v_n^{(Y-4)/2}\Delta_n^{1/2-s}v_n^{2-Y/2+\eta/2+s(Y-\eta)} = \Delta_n^{1/2-s}v_n^{\eta/2-1+s(Y-\eta)} \leq \Delta_n^{1/2-s}v_n^{\eta/2-1},$$
    which vanishes since $v_n\gg \Delta_n^{\beta}$ and $\Delta_n^{1/2-s-(1-\eta/2)\beta}\ll 1$ given $\eta$ and $s$ are small enough. A similar argument works for $C_1$ and $C_3$ and establishes \eqref{diff_indicator_approx}.
\end{proof}

\subsection{Proof of  Proposition \ref{propTrdeOff}}\label{PrfProp2}
\begin{proof} Throughout the proof, h.o.t. means high order terms. In other words, for a sequence $a_n\to 0$, we denote $a_n+{\rm h.o.t.}=a_n(1+o(1))$. 
 The denominator in \eqref{MEDHF} converges to $1$ and, thus, we can omit it. From the expansion in \eqref{PartExp0}, it is clear that 
 \begin{align*}
 	\EE\left(\hat{c}^{(\mathrm{orcl})}_{n,\tau}(v_n,\zeta)\right)-\sigma^2&=d_1 \sigma^2|\chi|^Y\Delta_nv_n^{-Y}-\frac{\zeta^{-Y}-1}{\zeta^{2-Y}-1}d_1 \sigma^2|\chi|^Y\Delta_nv_n^{-Y}+{\rm h.o.t.}
 \end{align*}
     which recovers \eqref{BsOrcl}. For \eqref{VarOrcl}, note that 
     \begin{equation*}
\begin{aligned}
\hat{c}^{(\mathrm{orcl})}_{n,\tau}(v_n,\zeta) &:= \frac{\zeta^{2-Y}}{\zeta^{2-Y}-1}\hat{c}(v_n)-\frac{1}{\zeta^{2-Y}-1}\hat c(\zeta v_n),
\end{aligned}
\end{equation*}
and, thus, 
     \begin{equation*}
\begin{aligned}
{\rm Var}\left(\hat{c}^{(\mathrm{orcl})}_{n,\tau}(v_n,\zeta)\right) &= \frac{\zeta^{4-2Y}}{(\zeta^{2-Y}-1)^2}{\rm Var}\left(\hat{c}(v_n)\right)+\frac{1}{(\zeta^{2-Y}-1)^2}{\rm Var}\left(\hat c(\zeta v_n)\right)\\
&-\frac{2\zeta^{2-Y}}{(\zeta^{2-Y}-1)^2}{\rm cov}\left(\hat{c}(v_n),\hat c(\zeta v_n)\right).
\end{aligned}
\end{equation*}
Again, since we are assuming a L\'evy model, we can write
\begin{align*}
	{\rm Var}\left(\hat{c}(v_n)\right)=\sum_{i=1}^{n}K^2_{m_n\Delta_n}(t_{i-1}-\tau) {\rm Var}\left((\Delta_i^nX)^2\mathbf{1}_{\{|\Delta_i^nX|\leq  v_n\}}\right).
\end{align*}
From expansion \eqref{GNMExp} and  \eqref{PartExp0}, we have
\begin{align*}
	{\rm Var}\left((\Delta_i^nX)^2\mathbf{1}_{\{|\Delta_i^nX|\leq  v_n\}}\right)&=
	\EE\left((\Delta_iX)^{4}\mathbf{1}_{\{|\Delta_iX|\leq  v_n\}}\right)-\EE\left((\Delta_iX)^{2}\mathbf{1}_{\{|\Delta_iX|\leq  v_n\}}\right)^2\\
	&=3\sigma^4 \Delta_n^{ 2}+c_2|\chi|^Y \Delta_n v_n^{4-Y}+{\rm h.o.t.}\\
	&\quad-\left(\sigma^2\Delta_n+c_{1}|\chi|^Y\Delta_n v_n^{2-Y}+d_{1}\sigma^2|\chi|^Y \Delta_n^2 v_n^{-Y}+{\rm h.o.t.}\right)^2\\
	&=2\sigma^4 \Delta_n^{ 2}+c_2|\chi|^Y \Delta_n v_n^{4-Y}+{\rm h.o.t.}
\end{align*}
We then conclude that 
\begin{align*}
	{\rm Var}\left(\hat{c}(v_n)\right)&=\frac{2\sigma^4}{m_n}\|K\|_2^2+c_2|\chi|^Y\|K\|_2^2\frac{v_n^{4-Y}}{\Delta_n m_n}+{\rm h.o.t.}\\
	{\rm Var}\left(\hat{c}(\zeta v_n)\right)&=\frac{2\sigma^4}{m_n}\|K\|_2^2+c_2|\chi|^Y\|K\|_2^2\frac{\zeta^{4-Y}v_n^{4-Y}}{\Delta_n m_n}+{\rm h.o.t.}
\end{align*}
Now, we analyze the covariance:
\begin{align*}
	{\rm Cov}\left(\hat{c}(v_n),\hat{c}(\zeta v_n)\right)=\sum_{i=1}^{n}K^2_{m_n\Delta_n}(t_{i-1}-\tau) {\rm Cov}\left((\Delta_i^nX)^2\mathbf{1}_{\{|\Delta_i^nX|\leq  v_n\}},(\Delta_i^nX)^2\mathbf{1}_{\{|\Delta_i^nX|\leq  \zeta v_n\}}\right).
\end{align*}
The covariance appearing in each term takes the form:
\begin{align*}
	&{\rm Cov}\left((\Delta_i^nX)^2\mathbf{1}_{\{|\Delta_i^nX|\leq  v_n\}},(\Delta_i^nX)^2\mathbf{1}_{\{|\Delta_i^nX|\leq  \zeta v_n\}}\right)\\
	&\quad =
	\EE\left((\Delta_iX)^{4}\mathbf{1}_{\{|\Delta_iX|\leq  v_n\}}\right)-\EE\left((\Delta_iX)^{2}\mathbf{1}_{\{|\Delta_iX|\leq  v_n\}}\right)\EE\left((\Delta_iX)^{2}\mathbf{1}_{\{|\Delta_iX|\leq  \zeta v_n\}}\right)\\
	&\quad=3\sigma^4 \Delta_n^{ 2}+c_2|\chi|^Y \Delta_n v_n^{4-Y}+{\rm h.o.t.}\\
	&\quad\quad-\left(\sigma^2\Delta_n+c_{1}|\chi|^Y\Delta_n v_n^{2-Y}+{\rm h.o.t.}\right)\left(\sigma^2\Delta_n+c_{1}|\chi|^Y\Delta_n \zeta^{2-Y}v_n^{2-Y}+{\rm h.o.t.}\right)\\
	&\quad=2\sigma^4 \Delta_n^{ 2}+c_2|\chi|^Y \Delta_n v_n^{4-Y}+{\rm h.o.t.}
\end{align*}
We then obtain the expansion:
\begin{align*}
	{\rm Cov}\left(\hat{c}(v_n),\hat{c}(\zeta v_n)\right)=\frac{2\sigma^4}{m_n}\|K\|_2^2+c_2|\chi|^Y\|K\|_2^2\frac{v_n^{4-Y}}{\Delta_n m_n}+{\rm h.o.t.}
\end{align*}
Therefore, 
     \begin{equation*}
\begin{aligned}
{\rm Var}\left(\hat{c}^{(\mathrm{orcl})}_{n,\tau}(v_n,\zeta)\right) &= \frac{2\sigma^4}{m_n}\|K\|_2^2+c_2|\chi|^Y\left(\frac{\zeta^{4-2Y}+\zeta^{4-Y}-2\zeta^{2-Y}}{(\zeta^{2-Y}-1)^2}\right)\frac{v_{n}^{4-Y}}{m_n\Delta_n}+{\rm h.o.t.},
\end{aligned}
\end{equation*}
which yields \eqref{VarOrcl}.
\end{proof}

%% if your bibliography is in bibtex format, uncomment commands:
%\bibliographystyle{imsart-nameyear} % Style BST file (imsart-number.bst or imsart-nameyear.bst)
%\bibliography{spotvol_biblio.bib}       % Bibliography file (usually '*.bib')

%% or include bibliography directly:
% \begin{thebibliography}{}
% \bibitem[\protect\citeauthoryear{???}{???}]{b1}
% \end{thebibliography}

%\printbibliography

\bibliographystyle{abbrv}

%\bibliography{bibliography.bib}

\end{document}